\DeclareMathAlphabet{\mathpzc}{OT1}{pzc}{m}{it}
\DeclareMathAlphabet\mathbfcal{OMS}{cmsy}{b}{n}
\newtheorem{lemma}{Lemma}
\newtheorem{definition}{Definition}
\newtheorem{assumption}{Assumption}
\newtheorem{theorem}{Theorem}
\newtheorem{remark}{Remark}
\newcommand{\bunderline}[1]{\underline{#1\mkern-4mu}\mkern4mu }
\newcommand*{\Scale}[2][4]{\scalebox{#1}{$#2$}}%
\begin{document}
%
\title{Robust Cooperative Formation Control of Fixed-Wing Unmanned Aerial Vehicles}
%
%
%

\author{Qingrui Zhang, Hugh H.T. Liu
\thanks{zqrthink@gmail.com}
\thanks{liu@utias.utoronto.ca, Institute for Aerospace Studies, University of Toronto}}

\maketitle

\begin{abstract}
Robust cooperative formation control is investigated in this paper for fixed-wing unmanned aerial vehicles in close formation flight to save energy. A novel cooperative control method is developed. The concept of virtual structure is employed to resolve the difficulty in designing virtual leaders for a large number of UAVs in formation flight. To improve the transient performance, desired trajectories are passed through a group of cooperative filters to generate smooth reference signals, namely the states of the virtual leaders. Model uncertainties due to aerodynamic couplings among UAVs are estimated and compensated using uncertainty and disturbance observers. The entire design, therefore, contains three major components: cooperative filters for motion planning, baseline cooperative control, and uncertainty and disturbance observation. The proposed formation controller could at least secure ultimate bounded control performance for formation tracking. If certain conditions are satisfied, asymptotic formation tracking control could be obtained. Major contributions of this paper lie in two aspects: 1) the difficulty in designing virtual leaders is resolved in terms of the virtual structure concept; 2) a robust cooperative controller is proposed for close formation flight of a large number of UAVs suffering from aerodynamic couplings in between.  The efficiency of the proposed design will be demonstrated using numerical simulations of five UAVs in close formation flight.
\end{abstract}

\begin{IEEEkeywords}
Cooperative control, Unmanned aerial vehicle (UAV), Robust control, formation flight, close formation control, Stability analysis, Fixed-wing aircraft, 
Multi-agent system
\end{IEEEkeywords}

%
\IEEEpeerreviewmaketitle

\section{Introduction} \label{Sec: Intro}
A fixed-wing UAV flying in close formation, like migratory birds, can reduce their drag and save fuels almost as much as a well designed aerodynamically efficient UAV would bring \cite{Pahle2012AFMC, Bieniawski2014ASM}. In spite of the benefits, close formation flight is challenging for UAV systems. The formation controller of interest must be accurate enough. According to the analysis \cite{Zhang2017JA}, more than $30\%$ of the maximum drag reduction will be lost, if the optimal relative position failed to be maintained within at least $10\%$ wing span accuracy. The formation controller is also expected to be robust enough against all adverse aerodynamic disturbances, such as vortex-induced forces and moments.  The aerodynamic disturbances will either endanger the flight stability of the follower UAV or frustrate close formation flight by deviating the follower UAV away from its optimal position. If a large number of UAVs are of interest to fly in close formation, certain cooperation mechanisms are demanded to mitigate the performance degradation issue in the leader-follower architecture. In a leader-follower control method, follower UAV will track a certain reference trajectory defined based on the states of the leader UAV  \cite{Pachter2001JGCD, Dogan2005JGCD,  Almeida2015GNC}.  Sudden changes in the states of a leader UAV due to either disturbances or mission changes will be reflected in the reference signals of the follower UAV and deteriorate its formation keeping performance.  The performance degradation will be intensified, as it propagates downstream to other followers. Hence, leader-follower control methods are inefficient for a large number of UAVs in close formation. Cooperative control could mitigate the performance degradation issue in a leader-follower method by allowing UAVs in formation to bidirectionally share their information and contribute almost equally to formation keeping.

In cooperative control, UAVs coordinate their actions in terms of both their own states and the states of their neighbours via a bidirectional communication network  \cite{Olfati-Saber2007IEEE, Oh2015Auto,Yang2019Auto}. Intuitively, a cooperative controller performs much better than a leader-follower controller for the formation of a large number of UAVs. So far, different cooperative control strategies have been proposed, such as the potential field method \cite{Dimarogonas2006Auto, Dimarogonas2007TAC}, distance-based formation control \cite{Oh2015Auto, Sun2016IJC, Deghat2016TAC}, behaviour-based control \cite{Lawton2003TRA}, virtual structure based control \cite{Ren2004JGCD, Sadowska2011IJC,  Rezaee2014TIE}, virtual leader-based control \cite{Egerstedt2001TAC, Dong2017TIE}, and consensus-based control \cite{Lafferriere2005SCL, Wang2013TCST, Dong2016CEP},  etc. 

In the potential filed method, artificial potential functions are designed to characterize the interactions among vehicles \cite{Dimarogonas2006Auto, Dimarogonas2007TAC}. If two vehicles are close to each other, an repulsive force is produced, while an attractive force is generated if two vehicles are far away from each other.  The traditional potential field method cannot secure an unique and unambiguous formation shape. This drawback could be resolved by adding more constraints to the communication network, resulting in the so-called rigid formation and accordingly the distance-based formation control \cite{Oh2015Auto}.  Construction of a rigid graph, which is the major concern for the distance-based formation control, will become formidable with the increase of the number of vehicles. More importantly, the distance-based formation control doesn't account for formation rotation that happens for close formation flight under different maneuvers. In the behaviour-based control, several competing objectives are defined, such as tracking desired positions, avoiding collision, and holding relative positions to neighbour vehicles. The first step is to design the desired formation pattern, including the desired formation shape and location,  thereby generating the next desired waypoint for each vehicle  \cite{Balch1998TRA, Lawton2003TRA}. The second step is to steer all vehicles to their desired position in terms of a navigation law, and meanwhile, vehicles cooperate with their neighbours to keep the required formation shape  \cite{Balch1998TRA, Lawton2003TRA}. The stability of behaviour-based control is hard to analyze mathematically, and formation keeping accuracy cannot be guaranteed during maneuvers. An modification to the behaviour-based control is the virtual structure method in which the formation is modelled as a rigid body called a virtual structure \cite{Sadowska2011IJC}. The rigid body is inscribed in a circle, on which all vehicles are located \cite{Rezaee2014TIE}. The motion of the virtual structure is provided. The desired motion for a vehicle in formation is thereafter determined from its desired location in the virtual structure.  The virtual structure-based method is only applied to polygon formation, as it is too complex to describe a general formation shape using a virtual structure.  Another modification to the  behaviour-based control is the virtual leader-based control which introduces a group of virtual leaders to describe both the desired formation shape and motions of all vehicles. In comparison with the virtual structure-based method, the virtual leader-based control is more flexible in the characterization of formation shapes.  The cost is that it will get more complex to simultaneously describe the motions of all virtual leaders with the increase of the vehicle number.  

This paper investigates the cooperative control problem of  close formation flight of a large number of UAVs.  The existing research on close formation flight  is mainly interested in the case of two or three UAVs, so a leader-follower method is preferred to a cooperative control due to its simplicity \cite{Pachter2001JGCD, Gu2006TCST, Chichka2006JGCD, Brodecki2015JGCD, Zhang2018AESCTE, Zhang2018TIE}. Different from existing research, we consider more than three UAVs in close formation flight, so a cooperative controller is more promising. Additionally, close formation flight requires all UAVs to fly in a ``V-shape'' formation in order to maximize the aerodynamic benefits. With the consideration of the merits and limits of different cooperative formation controllers, the virtual leader-based method is preferred for close formation flight. However, as aforementioned, the simultaneous design of virtual leaders will be very complex for formation flight of a large number of UAVs. To solve this issue, the virtual structure concept is borrowed in the design of virtual leaders. The entire formation is characterized as a rigid body. The desired formation trajectory is defined on the geometric center of the rigid body.  The desired trajectory of a UAV is determined according to its required relative position to the formation center. The virtual leaders are obtained by passing all desired trajectories through certain cooperative filters which were employed to improve the transient performance. Based on the proposed cooperative controller,  uncertainty and disturbance observers are introduced to estimate and compensate model uncertainties induced by trailing vortices. The proposed cooperative formation controller could at least ensure ultimate bounded control performance for formation tracking. Furthermore, if certain conditions are satisfied, asymptotic formation tracking control will be obtained. Major contributions of this paper lie in two aspects: 1) a robust cooperative controller is proposed for close formation flight of a large number of UAVs; 2) the difficulty in designing virtual leaders is resolved in terms of the virtual structure concept.  Numerical simulations are presented to demonstrate the efficiency of the proposed design.

The rest of the paper is organized as follows. In Section \ref{Sec: Preli}, some preliminaries are provided. Section \ref{Sec: ProbForm} formulates the major problem. Virtual leader design is presented in Section \ref{Sec: FormTraj}. The robust cooperative control design is described in Section  \ref{Sec: CntrlDesign}, while its stability is analyzed in Section \ref{Sec: StabAnal}. In Section \ref{Sec: NumSim}, numerical simulations are reported. Conclusions are given in Section \ref{Sec: Conclu}.

\section{Preliminaries} \label{Sec: Preli}

The communication topology among $n$ UAVs in close formation flight is modeled using a undirected graph $\mathscr{G}$. Some necessary knowledge is reviewed, and for more details on graph theory, an reader could refer to \cite{Diestel2000Book, Godsil2001Book}. The undirected graph $\mathscr{G}$ is denoted by a triplet $\mathscr{G}:=\left\{\mathscr{V}\text{, }\mathscr{E}\text{, }\boldsymbol{\mathcal{A}}\right\}$ with a node set $\mathscr{V}=\left\{1\text{, }\ldots\text{, } n\right\}$, an edge set $\mathscr{E}\subseteq \mathscr{V}\times\mathscr{V}$, and an adjacency matrix $\boldsymbol{\mathcal{A}}=\left[a_{ij}\right]\in\mathbb{R}^{n\times n}$. For each node $i\in\mathscr{V}$, it represents a UAV $i$ in close formation.  If a UAV $i$ is able to receive information from a UAV $j$ ($j\neq i$), there exists an edge $\left(i\text{, } j\right)\in\mathscr{E}$ and accordingly, $a_{ij}=1$. Furthermore, if $\left(i\text{, } j\right)\in\mathscr{E}$,  UAV $j$ is called an neighbour of UAV $i$. The neighbourhood of UAV $i$ is denoted by $\mathscr{N}_i:=\left\{\forall j\in \mathscr{V} \vert \left(i\text{, } j\right)\in\mathscr{E}\right\}$. In a undirected communication topology, the communication is bidirectional. If $\left(i\text{, } j\right)\in\mathscr{E}$, there exists $\left(j\text{, } i\right)\in\mathscr{E}$, namely $a_{ij}=a_{ji}=1$, whereas $a_{ij}=a_{ji}=0$ if $\left(i\text{,}j\right)\not\in\mathscr{E}$. Note that there always exists $\left(i\text{, } i\right)\not\in\mathscr{E}$, meaning $a_{ii}=0$, $\forall i\in \mathscr{V}$. Hence, $\boldsymbol{\mathcal{A}}$ is a symmetric matrix with zero diagonal elements.   The degree matrix of a graph $\mathscr{G}$ is a diagonal matrix $\boldsymbol{\mathcal{D}}=diag\left\{\mathpzc{d_1}\text{, }\ldots\text{, }\mathpzc{d_n}\right\}$, where $\mathpzc{d_i}=\sum_{j=1}^{n}a_{ij}$, $\forall i\in\mathscr{V}$. The Laplacian matrix of $\mathscr{G}$ is defined to be $\boldsymbol{\mathcal{L}}=\boldsymbol{\mathcal{D}}-\boldsymbol{\mathcal{A}}$. A path on $\mathscr{G}$ between $i_1$ and $i_l$ is a sequence of edges of the form $(i_j$, $i_{j+1})$, where $j=1$, $\ldots$, $l-1$ and $i_j\in\mathscr{V}$. A undirected graph $\mathscr{G}$ is connected, if there exists a path from each node $i\in\mathscr{V}$ to any other nodes. For a undirected graph $\mathscr{G}$, the following lemma exists \cite{Merris1994LAA}.
\begin{lemma} \label{Lem: LaplacianMatrix}
The undirected graph $\mathscr{G}$ is connected, if and only if $0$ is a simple eigenvalue of the Laplacian matrix $\boldsymbol{\mathcal{L}}$ with the associate eigenvector $\mathbf{1}_{n}\in\mathbb{R}^{n \times1}$ of all ones, while all the other eigenvalues are positive.
\end{lemma}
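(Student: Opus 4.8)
The plan is to exploit two structural facts: $\boldsymbol{\mathcal{L}}$ is symmetric positive semidefinite, and its quadratic form encodes adjacency. First I would establish, by substituting $\boldsymbol{\mathcal{L}}=\boldsymbol{\mathcal{D}}-\boldsymbol{\mathcal{A}}$ and using $\mathpzc{d_i}=\sum_{j=1}^{n}a_{ij}$, the identity
\[
\mathbf{x}^{\top}\boldsymbol{\mathcal{L}}\mathbf{x}=\frac{1}{2}\sum_{i=1}^{n}\sum_{j=1}^{n}a_{ij}\left(x_i-x_j\right)^2\geq 0,\qquad \forall\,\mathbf{x}=\left[x_1,\ldots,x_n\right]^{\top}\in\mathbb{R}^{n}.
\]
Since $\boldsymbol{\mathcal{L}}$ is symmetric, this shows all its eigenvalues are real and nonnegative. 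Because every row of $\boldsymbol{\mathcal{L}}$ sums to zero, $\boldsymbol{\mathcal{L}}\mathbf{1}_n=\mathbf{0}$, so $0$ is an eigenvalue with eigenvector $\mathbf{1}_n$. The lemma therefore reduces to proving that $\mathscr{G}$ is connected if and only if $\dim\ker\boldsymbol{\mathcal{L}}=1$.

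For the necessity direction, I would assume $\mathscr{G}$ connected and take any $\mathbf{x}$ with $\boldsymbol{\mathcal{L}}\mathbf{x}=\mathbf{0}$. Then $\mathbf{x}^{\top}\boldsymbol{\mathcal{L}}\mathbf{x}=0$, and since every summand in the identity is nonnegative, each term $a_{ij}\left(x_i-x_j\right)^2$ vanishes; hence $x_i=x_j$ for every adjacent pair. Connectivity guarantees a path between any two nodes, so propagating this equality edge by edge along the path forces all entries of $\mathbf{x}$ to coincide, i.e. $\mathbf{x}=c\,\mathbf{1}_n$. Thus $\ker\boldsymbol{\mathcal{L}}=\mathrm{span}\left\{\mathbf{1}_n\right\}$ is one-dimensional, so $0$ is simple; being the smallest eigenvalue of a positive semidefinite matrix whose kernel is one-dimensional, the remaining $n-1$ eigenvalues are strictly positive.

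For the sufficiency direction I would argue by contraposition. Suppose $\mathscr{G}$ is not connected, so its nodes split into $k\geq 2$ connected components $\mathscr{C}_1,\ldots,\mathscr{C}_k$. For each $\mathscr{C}_m$ define the indicator vector $\boldsymbol{\chi}_m\in\mathbb{R}^{n}$ with $i$-th entry $1$ if $i\in\mathscr{C}_m$ and $0$ otherwise. Since no edge joins distinct components, a row-wise computation (for $i\in\mathscr{C}_m$ the off-diagonal entries cancel the degree, and for $i\notin\mathscr{C}_m$ the row contributes nothing) yields $\boldsymbol{\mathcal{L}}\boldsymbol{\chi}_m=\mathbf{0}$. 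The $\boldsymbol{\chi}_m$ have disjoint supports and are therefore linearly independent, giving $\dim\ker\boldsymbol{\mathcal{L}}\geq k\geq 2$, which contradicts simplicity of $0$. The two implications together prove the lemma.

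I expect the only delicate step to be the path-propagation argument in the necessity direction: one must invoke the definition of a path on $\mathscr{G}$ together with the bidirectionality of undirected edges to upgrade equality across single edges into equality across the whole graph. Everything else is routine linear algebra once the quadratic-form identity is in hand.
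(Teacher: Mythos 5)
Your proof is correct and complete. Note, however, that the paper does not actually prove this lemma: it is quoted as a known result with a citation to Merris (1994), so there is no in-paper argument to compare against. Your argument is the standard self-contained proof: the quadratic-form identity $\mathbf{x}^{\top}\boldsymbol{\mathcal{L}}\mathbf{x}=\frac{1}{2}\sum_{i,j}a_{ij}\left(x_i-x_j\right)^2$ gives symmetry-plus-positive-semidefiniteness, and together with the row-sum fact $\boldsymbol{\mathcal{L}}\mathbf{1}_n=\mathbf{0}$ it correctly reduces the claim to showing $\mathscr{G}$ is connected if and only if $\dim\ker\boldsymbol{\mathcal{L}}=1$; the path-propagation of $x_i=x_j$ across edges settles the forward direction, and the component indicator vectors $\boldsymbol{\chi}_m$ settle the converse by contraposition. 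All the steps check out: your row-wise verification that $\boldsymbol{\mathcal{L}}\boldsymbol{\chi}_m=\mathbf{0}$ is sound precisely because no edge joins distinct components, and the passage from a one-dimensional kernel to simplicity of the eigenvalue $0$ is legitimate since $\boldsymbol{\mathcal{L}}$ is symmetric, so geometric and algebraic multiplicities coincide (a point you use implicitly and could state explicitly). It is worth observing that the key fact you establish in the necessity direction, $\ker\boldsymbol{\mathcal{L}}=\mathrm{span}\left\{\mathbf{1}_n\right\}$ for connected $\mathscr{G}$, is exactly the property the paper later invokes without justification in its proof of Lemma 2, so your write-up supplies a derivation the paper delegates entirely to the citation.
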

The graph $\mathscr{G}$ is assumed to be connected in the design, so the Laplacian matrix $\boldsymbol{\mathcal{L}}$ is positive semi-definite, and $\boldsymbol{\mathcal{L}}\mathbf{1}_{n}=\mathbf{0}$ according to Lemma \ref{Lem: LaplacianMatrix}.

\begin{lemma} \label{Lem: LaplacianMat_Property}
Let $\boldsymbol{\mathcal{L}}$ be a Laplacian matrix of a undirected graph $\mathscr{G}$ with $n$ nodes, and $\boldsymbol{\mathcal{B}}=diag\left\{\mathpzc{b_1}\text{, }\ldots\text{, }\mathpzc{b_n}\right\}$ be a positive semi-definite diagonal matrix with $Rank\left(\boldsymbol{\mathcal{B}}\right)\geq1$. If $\mathscr{G}$  is connected,  $\boldsymbol{\mathcal{L}}+\boldsymbol{\mathcal{B}}$ will be positive definite, namely $\lambda_{min}\left(\boldsymbol{\mathcal{L}}+\boldsymbol{\mathcal{B}}\right)>0$ where $\lambda_{min}\left(\cdot\right)$ is the minimal eigenvalue of a matrix. Furthermore,  suppose $\boldsymbol{\mathcal{C}}_1\in\mathbb{R}^{m\times m}$ and $\boldsymbol{\mathcal{C}}_2\in\mathbb{R}^{m\times m}$ are two positive definite diagonal matrices.  Then $\boldsymbol{\mathcal{L}}\otimes\boldsymbol{\mathcal{C}}_1+\boldsymbol{\mathcal{B}}\otimes\boldsymbol{\mathcal{C}}_2$ will be positive definite. 
\end{lemma}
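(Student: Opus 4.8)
The plan is to prove both assertions with the same device: each matrix in question is a sum of two positive semi-definite pieces, hence automatically positive semi-definite, and the real work is to rule out a nonzero vector that annihilates the quadratic form. I would first dispatch the scalar statement for $\boldsymbol{\mathcal{L}}+\boldsymbol{\mathcal{B}}$ and then leverage it for the Kronecker-product statement.

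For the first claim, recall from Lemma \ref{Lem: LaplacianMatrix} that connectedness of $\mathscr{G}$ makes $\boldsymbol{\mathcal{L}}$ positive semi-definite with one-dimensional kernel $\text{span}\{\mathbf{1}_n\}$; since $\boldsymbol{\mathcal{B}}$ is positive semi-definite, so is $\boldsymbol{\mathcal{L}}+\boldsymbol{\mathcal{B}}$. Suppose $\boldsymbol{x}^{\top}(\boldsymbol{\mathcal{L}}+\boldsymbol{\mathcal{B}})\boldsymbol{x}=0$ for some $\boldsymbol{x}\neq\boldsymbol{0}$. As both quadratic forms are nonnegative, each must vanish separately: $\boldsymbol{x}^{\top}\boldsymbol{\mathcal{L}}\boldsymbol{x}=0$ forces $\boldsymbol{x}=c\mathbf{1}_n$ with $c\neq0$, while then $\boldsymbol{x}^{\top}\boldsymbol{\mathcal{B}}\boldsymbol{x}=c^2\sum_i\mathpzc{b_i}$. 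The rank condition $Rank(\boldsymbol{\mathcal{B}})\geq1$ guarantees $\sum_i\mathpzc{b_i}>0$, so the second form is strictly positive, a contradiction. Hence $\lambda_{min}(\boldsymbol{\mathcal{L}}+\boldsymbol{\mathcal{B}})>0$.

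For the second claim, I would run the identical contradiction at the level of $\mathbb{R}^{nm}$. The summands $\boldsymbol{\mathcal{L}}\otimes\boldsymbol{\mathcal{C}}_1$ and $\boldsymbol{\mathcal{B}}\otimes\boldsymbol{\mathcal{C}}_2$ are Kronecker products of positive semi-definite with positive definite matrices, hence positive semi-definite. Suppose $\boldsymbol{y}^{\top}(\boldsymbol{\mathcal{L}}\otimes\boldsymbol{\mathcal{C}}_1+\boldsymbol{\mathcal{B}}\otimes\boldsymbol{\mathcal{C}}_2)\boldsymbol{y}=0$ with $\boldsymbol{y}\neq\boldsymbol{0}$; again both terms vanish. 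The crucial step is to identify the kernel of $\boldsymbol{\mathcal{L}}\otimes\boldsymbol{\mathcal{C}}_1$: since $\boldsymbol{\mathcal{C}}_1$ is positive definite and thus invertible, the eigenvalues of the product are the products of the eigenvalues, and the zero eigenvalues arise exactly from the single zero eigenvalue of $\boldsymbol{\mathcal{L}}$, so the kernel is precisely $\mathbf{1}_n\otimes\mathbb{R}^m$. Thus $\boldsymbol{y}=\mathbf{1}_n\otimes\boldsymbol{v}$ for some $\boldsymbol{v}\neq\boldsymbol{0}$. Substituting into the second form and applying the mixed-product property yields $\boldsymbol{y}^{\top}(\boldsymbol{\mathcal{B}}\otimes\boldsymbol{\mathcal{C}}_2)\boldsymbol{y}=(\mathbf{1}_n^{\top}\boldsymbol{\mathcal{B}}\mathbf{1}_n)(\boldsymbol{v}^{\top}\boldsymbol{\mathcal{C}}_2\boldsymbol{v})=(\sum_i\mathpzc{b_i})(\boldsymbol{v}^{\top}\boldsymbol{\mathcal{C}}_2\boldsymbol{v})>0$, contradicting its vanishing.

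I expect the main obstacle to be pinning down the kernel of $\boldsymbol{\mathcal{L}}\otimes\boldsymbol{\mathcal{C}}_1$ exactly, since this is where the hypothesis that $\boldsymbol{\mathcal{C}}_1$ is positive definite (not merely semi-definite) does the real work: invertibility of $\boldsymbol{\mathcal{C}}_1$ is what keeps the kernel from being larger than $\mathbf{1}_n\otimes\mathbb{R}^m$ and thereby lets the $\boldsymbol{\mathcal{C}}_2$-term close the argument. An equivalent and perhaps more transparent route bypasses the kernel computation: the commutation matrix $\boldsymbol{P}$ satisfies $\boldsymbol{P}(\boldsymbol{M}\otimes\boldsymbol{N})\boldsymbol{P}^{\top}=\boldsymbol{N}\otimes\boldsymbol{M}$, so conjugating the whole expression block-diagonalizes it into the $m$ blocks $c_{1,k}\boldsymbol{\mathcal{L}}+c_{2,k}\boldsymbol{\mathcal{B}}$, each positive definite by the first claim; since conjugation by an orthogonal matrix preserves definiteness, the conclusion follows. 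I would present the kernel argument as primary and cite the block-diagonal reduction as a cross-check.
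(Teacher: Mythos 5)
Your proof is correct and takes essentially the same route as the paper's: decompose each matrix into two positive semi-definite summands, then rule out a common null vector of the two quadratic forms by identifying the kernel of the Laplacian part as $span\left(\mathbf{1}_n\right)$ (respectively $\mathbf{1}_n\otimes\mathbb{R}^m$ in the Kronecker case) and showing the $\boldsymbol{\mathcal{B}}$-part is strictly positive there. If anything, your handling of the second claim is tighter than the paper's---where the paper loosely invokes a ``common eigenvector space,'' asserts only $\left(\boldsymbol{\mathcal{B}}\otimes\boldsymbol{\mathcal{C}}_2\right)\boldsymbol{\nu}\neq\mathbf{0}$ rather than positivity of the quadratic form, and defers to ``similar analysis,'' you pin down the kernel via invertibility of $\boldsymbol{\mathcal{C}}_1$ and compute $\left(\mathbf{1}_n^{T}\boldsymbol{\mathcal{B}}\mathbf{1}_n\right)\left(\boldsymbol{v}^{T}\boldsymbol{\mathcal{C}}_2\boldsymbol{v}\right)>0$ explicitly, with the commutation-matrix block-diagonalization as a sound extra cross-check not present in the paper.
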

\begin{proof}
The Laplacian matrix $\boldsymbol{\mathcal{L}}$ of  a undirected graph $\mathscr{G}$ is symmetric.  If $\mathscr{G}$ is connected, $\boldsymbol{\mathcal{L}}$ will be positive semi-definite according to Lemma \ref{Lem: LaplacianMatrix}. The kernel of $\boldsymbol{\mathcal{L}}$ is $span\left(\mathbf{1}_n\right)$ where $span\left(\cdot\right)$ is a linear span function.  Since both $\boldsymbol{\mathcal{L}}$ and $\boldsymbol{\mathcal{B}}$ are positive semi-definite,  $\boldsymbol{\mathcal{L}}+\boldsymbol{\mathcal{B}}$ must be positive semi-definite, namely $\lambda_{min}\left(\boldsymbol{\mathcal{L}}+\boldsymbol{\mathcal{B}}\right)\geq0$. In what follows, we will show that $\lambda_{min}\left(\boldsymbol{\mathcal{L}}+\boldsymbol{\mathcal{B}}\right)>0$ or $\lambda_{min}\left(\boldsymbol{\mathcal{L}}+\boldsymbol{\mathcal{B}}\right)\neq0$. Let $\boldsymbol{\nu}\in\mathbb{R}^{n\times 1}$ be an eigenvector of $\boldsymbol{\mathcal{L}}+\boldsymbol{\mathcal{B}}$ with $\boldsymbol{\nu}^T\boldsymbol{\nu}=1$ and  $\lambda\boldsymbol{\nu}=\left(\boldsymbol{\mathcal{L}}+\boldsymbol{\mathcal{B}}\right)\boldsymbol{\nu}$. The minimal eigenvalue is
\begin{equation*}
\lambda_{min}\left(\boldsymbol{\mathcal{L}}+\boldsymbol{\mathcal{B}}\right)=\min_{\boldsymbol{\nu}}\boldsymbol{\nu}^T\left(\boldsymbol{\mathcal{L}}+\boldsymbol{\mathcal{B}}\right)\boldsymbol{\nu}=\min_{\boldsymbol{\nu}}\left(\boldsymbol{\nu}^T\boldsymbol{\mathcal{L}}\boldsymbol{\nu}+\boldsymbol{\nu}^T\boldsymbol{\mathcal{B}}\boldsymbol{\nu}\right)
\end{equation*}
where $\boldsymbol{\nu}^T\boldsymbol{\mathcal{L}}\boldsymbol{\nu}\geq 0$ and $\boldsymbol{\nu}^T\boldsymbol{\mathcal{B}}\boldsymbol{\nu}\geq 0$, as  both $\boldsymbol{\mathcal{L}}$ and $\boldsymbol{\mathcal{B}}$ are positive semi-definite. Therefore, $\lambda_{min}\left(\boldsymbol{\mathcal{L}}+\boldsymbol{\mathcal{B}}\right)=0$, if and only if $\boldsymbol{\nu}^T\boldsymbol{\mathcal{L}}\boldsymbol{\nu}= 0$ and $\boldsymbol{\nu}^T\boldsymbol{\mathcal{B}}\boldsymbol{\nu}=0$.  

We thereafter divide $\mathbb{R}^{n\times 1}$ into two subspaces, $span\left(\mathbf{1}_n\right)$ and the complement of $span\left(\mathbf{1}_n\right)$. When $\boldsymbol{\nu}\in span\left(\mathbf{1}_n\right)$, there exist $\boldsymbol{\nu}^T\boldsymbol{\mathcal{L}}\boldsymbol{\nu}= 0$ and $\boldsymbol{\nu}^T\boldsymbol{\mathcal{B}}\boldsymbol{\nu}=\frac{1}{n}\sum_{i=1}^{n}\mathpzc{b_i}>0$. When  $\boldsymbol{\nu}\not\in span\left(\mathbf{1}_n\right)$, $\boldsymbol{\nu}^T\boldsymbol{\mathcal{L}}\boldsymbol{\nu}\neq 0$, as $\boldsymbol{\nu}$ is not at the kernel space of $\boldsymbol{\mathcal{L}}$. With the consideration of the positive semi-definite property of $\boldsymbol{\mathcal{L}}$, $\boldsymbol{\nu}^T\boldsymbol{\mathcal{L}}\boldsymbol{\nu}> 0$, if $\boldsymbol{\nu}\not\in span\left(\mathbf{1}_n\right)$. In summary, $\boldsymbol{\mathcal{L}}+\boldsymbol{\mathcal{B}}$ satisfies the following two properties.
\begin{enumerate}
\item $\boldsymbol{\mathcal{L}}+\boldsymbol{\mathcal{B}}$ is positive semi-definite, and
\item $\lambda_{min}\left(\boldsymbol{\mathcal{L}}+\boldsymbol{\mathcal{B}}\right)\neq0$.
\end{enumerate}
Hence, one is able to conclude that $\boldsymbol{\mathcal{L}}+\boldsymbol{\mathcal{B}}$ will be positive definite.

Since $\boldsymbol{\mathcal{C}}_1$ and $\boldsymbol{\mathcal{C}}_2$ are two diagonal matrices, they have a common eigenvector space which is denoted by $\mathbfcal{Q}$. The kernel of $\boldsymbol{\mathcal{L}}\otimes\boldsymbol{\mathcal{C}}_1$ is $span\left(\mathbf{1}_n\otimes \mathbfcal{Q}\right)$. For a vector $\boldsymbol{\nu}\in span\left(\mathbf{1}_n\otimes \mathbfcal{Q}\right)$, it is easy to know that $\left(\boldsymbol{\mathcal{B}}\otimes\boldsymbol{\mathcal{C}}_2\right)\boldsymbol{\nu}\neq\mathbf{0}$, while $\left(\boldsymbol{\mathcal{L}}\otimes\boldsymbol{\mathcal{C}}_1\right)\boldsymbol{\nu}\neq 0$ for any vector $\boldsymbol{\nu}\not\in span\left(\mathbf{1}_n\otimes \mathbfcal{Q}\right)$. Following similar analysis process in proving the positive definiteness of $\boldsymbol{\mathcal{L}}+\boldsymbol{\mathcal{B}}$, we could reach the conclusion that $\boldsymbol{\mathcal{L}}\otimes\boldsymbol{\mathcal{C}}_1+\boldsymbol{\mathcal{B}}\otimes\boldsymbol{\mathcal{C}}_2$ is positive definite.
\end{proof}
\section{Problem formulation} \label{Sec: ProbForm}
 The cooperative formation controller performs as an outer-loop controller, while the inner-loop attitude dynamics are assumed to be stabilized by a certain inner-loop controller. Assume the sideslip angle of a UAV is stabilized to be zero, while the angle of attack is kept to be small. A six-degree-of-freedom (6DoF) nonlinear UAV model is used as given in (\ref{Eq: SysDyn_Coop}).
\begin{equation}
\left\{\begin{array}{ccl}
\dot{x}_i&=&V_{Ti} \cos{\gamma_{i}}\cos{\psi_i} \\
\dot{y}_i&=&V_{Ti} \cos{\gamma_{i}}\sin{\psi_i} \\ 
\dot{z}_i&=&-V_{Ti} \sin{\gamma_{i}} \\
\dot{V}_{gi}&=& {\left(T_i -D_i\right)}/{m_i}-g \sin{\gamma_{i}}+d_{Vi} \\
\dot{\gamma}_{ai}&=& {L_i\cos{\mu_i}}/{\left(m_iV_{Ti} \right)}-{g\cos{\gamma_{i}}}/{V_{Ti} } +d_{\gamma i} \\
\dot{\psi}_i&=&{L_i\sin{\mu_i}}/{\left(m_iV_{Ti} \cos{\gamma_{i}}\right)}+d_{\psi i}
\end{array}\right. \qquad  \left(i\in\mathscr{V}\right) \label{Eq: SysDyn_Coop}
\end{equation}
where $x_i$, ${y}_i$, and ${z}_i$ denote the position coordinates of UAV $i$ in the inertial frame (the north-east-down frame, NED), $V_{Ti}$ is the total speed of a UAV in close formation, which is the resultant speed of the airspeed and the trailing vortex-induced wake velocity, $\gamma_{i}$ and $\psi_i$ are the flight path angle, and course angle of UAV $i$ in the trailing vortices of an upstream UAV, respectively,  $m_i$ is the mass, $g$ is the gravity acceleration, $T_i$ is the engine thrust, $L_i$ is the lift, $D_i$ is the drag, $\mu_i$ is the bank angle, and $d_{Vi}$, $d_{\gamma i}$, and $d_{\chi i}$ are lumped terms of model uncertainties and trailing vortex-induced disturbances with $d_{Vi} = -\frac{\Delta D_i}{m_i}$,  $d_{\gamma i} =  \frac{\Delta L_i \cos{\mu_i}- \left(Y_i+\Delta Y_i\right)\sin{\mu_i}}{m_iV_{Ti} }$, and 
$d_{\psi i} =  \frac{\Delta L_i \sin{\mu_i}+ \left(Y_i+\Delta Y_i\right)\cos{\mu_i}}{m_iV_{Ti} \cos{\gamma_{i}}}$, 
where $\Delta D_i$, $\Delta L_i$, and $\Delta Y_i$ are trailing vortices-induced drag, lift and side force, respectively, and $Y_i$ is the side force treated as a model uncertainty. To simplify the design process, $D_i$ is assumed to be known, but it can be taken as an unknown term in real implementations. The flight path angle $\gamma_{i}$ and course angle ${\psi_i}$ are computed by  $\sin {\gamma}_{i}=-\frac{\dot{z}_i}{V_{Ti}}$ and $\tan{\psi_i} =\frac{\dot{y}_i}{\dot{x}_i}$, respectively. 
Control inputs for (\ref{Eq: SysDyn_Coop}) are chosen to be $T_i$, $\alpha_i$, and $\mu_i$. 
Differentiating $x_i$, ${y}_i$, and ${z}_i$ with respect to time twice yields
\begin{equation}
\ddot{x}_i=u_{x i}+d_{xi} \text{,}\qquad \ddot{y}_i=u_{y i}+d_{yi} \text{,}\qquad \ddot{z}_i=u_{z i}+d_{zi}  \qquad\left(i\in\mathscr{V}\right)\label{Eq: DoubleIntegator}
\end{equation}
where $i=1$, $2$, $\ldots$, $n$, $u_{x i}$, $u_{y i}$, and $u_{z i}$ are new control variables given in (\ref{Eq: CntrlConvert2}), and $d_{x i}$, $d_{y i}$, and $d_{z i}$ are are uncertainty and disturbance terms given in (\ref{Eq: UncerDist_2ndSys}). 
\begin{eqnarray}
\left\{\begin{array}{lcl}
u_{xi} &=& u_{Vi}\cos \gamma_{i} \cos \psi_i -u_{\gamma i}V_{Ti} \sin \gamma_{i} \cos \psi_i  -u_{\psi i}V_{Ti} \cos \gamma_{i}\sin \psi_i \\
u_{yi} &=&  u_{Vi}\cos \gamma_{i} \sin \psi_i  -u_{\gamma i}V_{Ti} \sin \gamma_{i} \sin \psi_i +u_{\psi i}  V_{Ti} \cos \gamma_{i}\cos \psi_i\\
u_{zi} &=& -u_{Vi}\sin \gamma_{i} -u_{\gamma i}V_{Ti} \cos \gamma_{i}
\end{array}
\right. \label{Eq: CntrlConvert2}
\\
\left\{\begin{array}{lcl}
d_{xi} &=& d_{Vi}\cos \gamma_{i} \cos \psi_i -d_{\gamma i}V_{Ti} \sin \gamma_{i} \cos \psi_i  -d_{\psi i}V_{Ti} \cos \gamma_{i}\sin \psi_i \\
d_{y i} &=&  d_{Vi}\cos \gamma_{i} \sin \psi_i  -d_{\gamma i}V_{Ti} \sin \gamma_{i} \sin \psi_i +d_{\psi i}  V_{Ti} \cos \gamma_{i}\cos \psi_i\\
d_{z i} &=& -d_{Vi}\sin \gamma_{i} -d_{\gamma i}V_{Ti} \cos \gamma_{i}
\end{array}
\right. \label{Eq: UncerDist_2ndSys}
\end{eqnarray}
where $u_{Vi} = \frac{T_i -D_i}{m_i}-g \sin{\gamma_{i}} $, $u_{\gamma i} =   \frac{L_i\cos{\mu_i}}{m_iV_{Ti} }-\frac{g\cos{\gamma_{i}}}{V_{Ti} }$, and $u_{\psi i} =  \frac{L_i\sin{\mu_i}}{m_iV_{Ti} \cos{\gamma_{i}}}$.
For the sake of stability analysis, the following assumption is introduced.
\begin{assumption} \label{Assump: UncerDist}
Both $d_{x i}$, $d_{y i}$, $d_{z i}$ and their derivatives are bounded, namely $\vert d_{x i} \vert \leq  \bar{d}_{xi}$, $\vert d_{y i} \vert \leq  \bar{d}_{yi}$, $\vert d_{z i}\vert \leq  \bar{d}_{zi}$, $\vert \dot{d}_{x i} \vert \leq  \bar{\dot{d}}_{xi}$, $\vert \dot{d}_{y i} \vert \leq  \bar{\dot{d}}_{yi}$, and $\vert \dot{d}_{z i}\vert \leq  \bar{\dot{d}}_{zi}$.
\end{assumption}
In the main results $u_{x i}$, $u_{y i}$, and $u_{z i}$ will be designed based on the double-integrator model (\ref{Eq: DoubleIntegator}). Once $u_{x i}$, $u_{y i}$, and $u_{z i}$ are obtained, $u_{Vi}$, $u_{\gamma i}$, and $u_{\psi i}$ are computed by 
\begin{equation}
\left\{\begin{array}{lcl}
u_{Vi} &=& u_{xi}\cos \gamma_{i} \cos \psi_i +u_{y i}\cos \gamma_{i} \sin \psi_{i}  -u_{z i}\sin \gamma_{i}\\
u_{\gamma i} &=&  -\frac{u_{xi}}{V_{Ti} }\sin \gamma_{i} \cos \psi_i  -\frac{u_{y i}}{V_{Ti} } \sin \gamma_{i} \sin \psi_i -\frac{u_{z i}}{V_{Ti} }\sin \gamma_{i}\\
u_{\psi i} &=& -\frac{u_{xi}\sin \psi_{i} }{V_{Ti} \cos \gamma_{i}} +\frac{u_{y i}\cos{\psi_i}}{V_{Ti} \cos \gamma_{i}}
\end{array}
\right. \label{Eq: CntrlConvert2_Inverse}
\end{equation}
The real control inputs $T_i$, $L_i$, and $\mu_i$ are thus calculated using
\begin{equation}
\left\{\begin{array}{lcl}
T_{i} &=& m_i u_{Vi}+m_ig\sin{\gamma_{i}}+D_i\\
L_{i} &=& \sqrt{\left(m_iV_{Ti} u_{\gamma i}+m_ig\cos{\gamma_{i}}\right)^2+\left(m_iV_{Ti} u_{\psi i}\cos{\gamma_{i}}\right)^2}\\
\mu_{i} &=& \tan^{-1}\left(\frac{m_iV_{Ti} u_{\psi i}\cos{\gamma_{i}}}{m_iV_{Ti} u_{\gamma i}+m_ig\cos{\gamma_{i}}}\right)
\end{array}
\right. \label{Eq: CntrlConvert_Inverse}
\end{equation}
\begin{figure}[tbp]
\centering
\includegraphics[width=0.8\textwidth]{./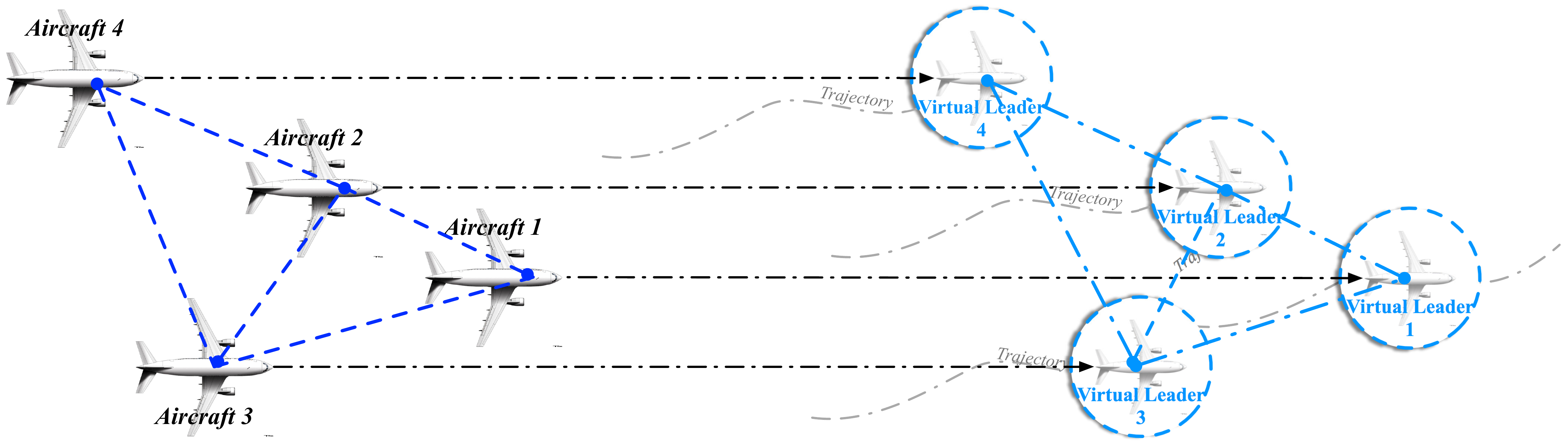}\\ \vspace{-3mm}
\caption{Virtual leader-based close formation flight control}\vspace{-3mm}
\label{Fig: VL_Topology}
\end{figure}\noindent
The control input for $\alpha_i$ is obtained by $\alpha_i=\frac{L_i-L_{0i}}{L_{\alpha i}}$.  Let $\mathbf{p}_i=\left[x_i\text{, }y_i\text{, } z_i\right]^T$. Expressed in a compact form, the double integrator system is
\begin{equation}
\mathbf{\dot{p}}_i=\mathbf{v}_i \text{, }\qquad \mathbf{\dot{v}}_i=\mathbf{u}_i+\mathbf{d}_i \qquad  \left(i\in\mathscr{V}\right)\label{Eq: DoubleInteg_Mat}
\end{equation}
where $\mathbf{v}_i=\left[\dot{x}_i\text{, }\dot{y}_i\text{, } \dot{z}_i\right]^T$ is the speed vector, $\mathbf{u}_i=\left[u_{x i}\text{, }u_{y i}\text{, }u_{z i}\right]^T$ is the control input vector, and  $\mathbf{d}_i=\left[d_{x i}\text{, }d_{y i}\text{, }d_{z i}\right]^T$ is the model uncertainty and disturbance vector. The objective is to coordinate UAVs to track formation trajectories defined by a group of virtual leaders as shown in Figure \ref{Fig: VL_Topology}, and meanwhile, UAVs will share tracking errors with their neighbours. Let $\mathbf{r}_i$ be the position of the virtual leader $i$, and assume that both the first and second derivatives of $\mathbf{r}_i$ are available. Two definitions are introduced.
\begin{definition} \label{Def: Formation_VL}
Asymptotic formation flight is achieved, if for each UAV with any initial states $\mathbf{p}_i\left(0\right)$ and $\mathbf{v}_i\left(0\right)$, there exist
\begin{equation}
\lim_{t\to\infty} \Vert\mathbf{p}_i-\mathbf{r}_i\Vert_2 =0 \qquad \text{and}\qquad\lim_{t\to\infty} \Vert\mathbf{v}_i-\mathbf{\dot{r}}_i\Vert_2 =0 \qquad \left(i\in\mathscr{V}\right) \label{Eq: Formation_VL}
\end{equation}
\end{definition}
\begin{definition} \label{Def: Practical_Formation_VL}
Bounded formation flight is achieved, if for each UAV with any initial conditions $\mathbf{p}_i\left(0\right)$ and $\mathbf{v}_i\left(0\right)$, there exist
\begin{equation}
\lim_{t\to\infty} \Vert\mathbf{p}_i-\mathbf{r}_i\Vert_2 \leq \epsilon_p \qquad \text{and}\qquad\lim_{t\to\infty} \Vert\mathbf{v}_i-\mathbf{\dot{r}}_i\Vert_2 \leq \epsilon_v  \qquad \left(i\in\mathscr{V}\right) \label{Eq: Practical_Formation_VL}
\end{equation}
where $\epsilon_p$ and $\epsilon_v$ are positive scalars that can be made arbitrarily small by design parameters.
\end{definition}
\section{Cooperative motion planning} \label{Sec: FormTraj}
\begin{figure}[tbp]
\centering
\subfigure[Reference motions]{\includegraphics[width=0.45\textwidth]{./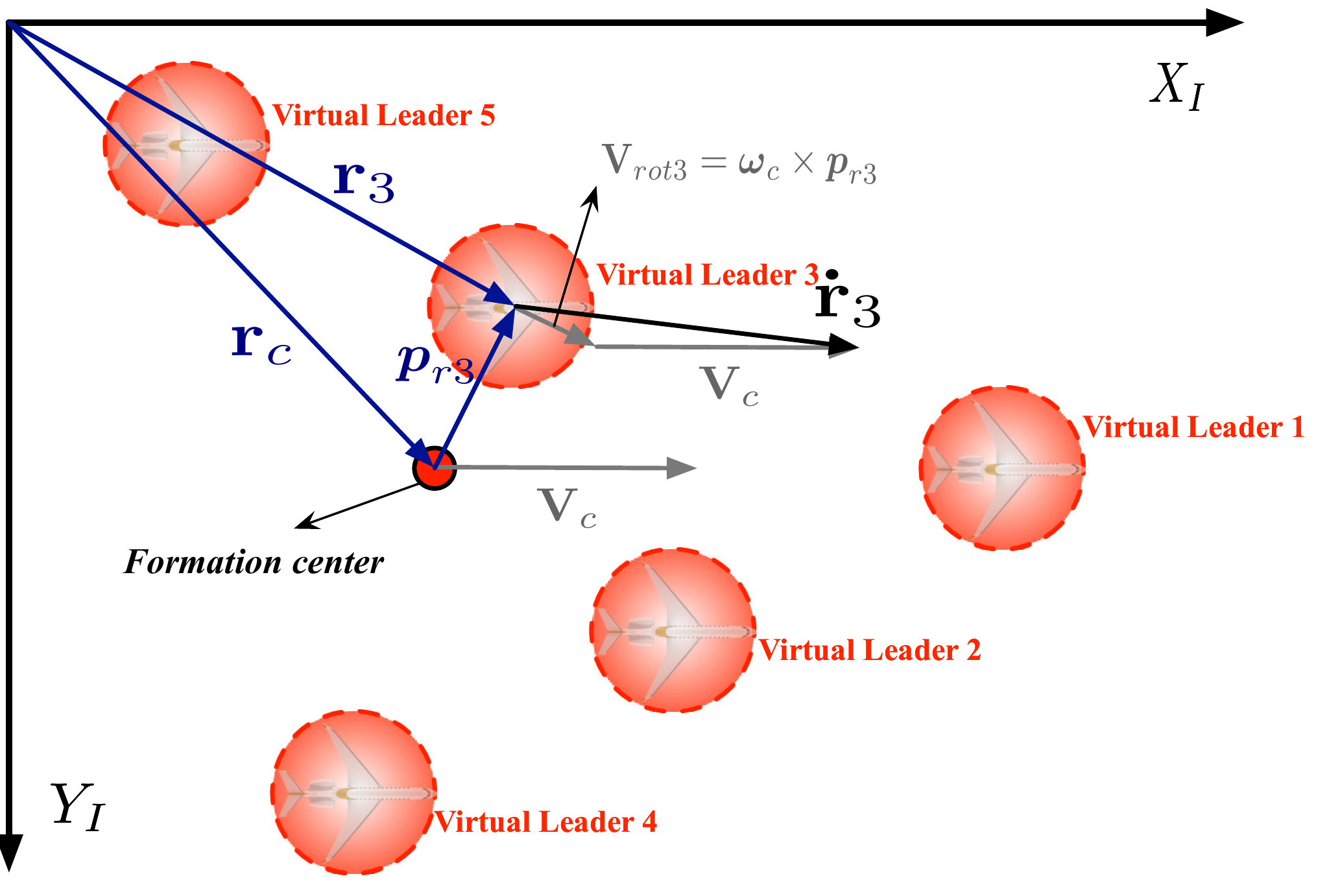}}\hfill
\subfigure[Reference positions in a local frame]{\includegraphics[width=0.45\textwidth]{./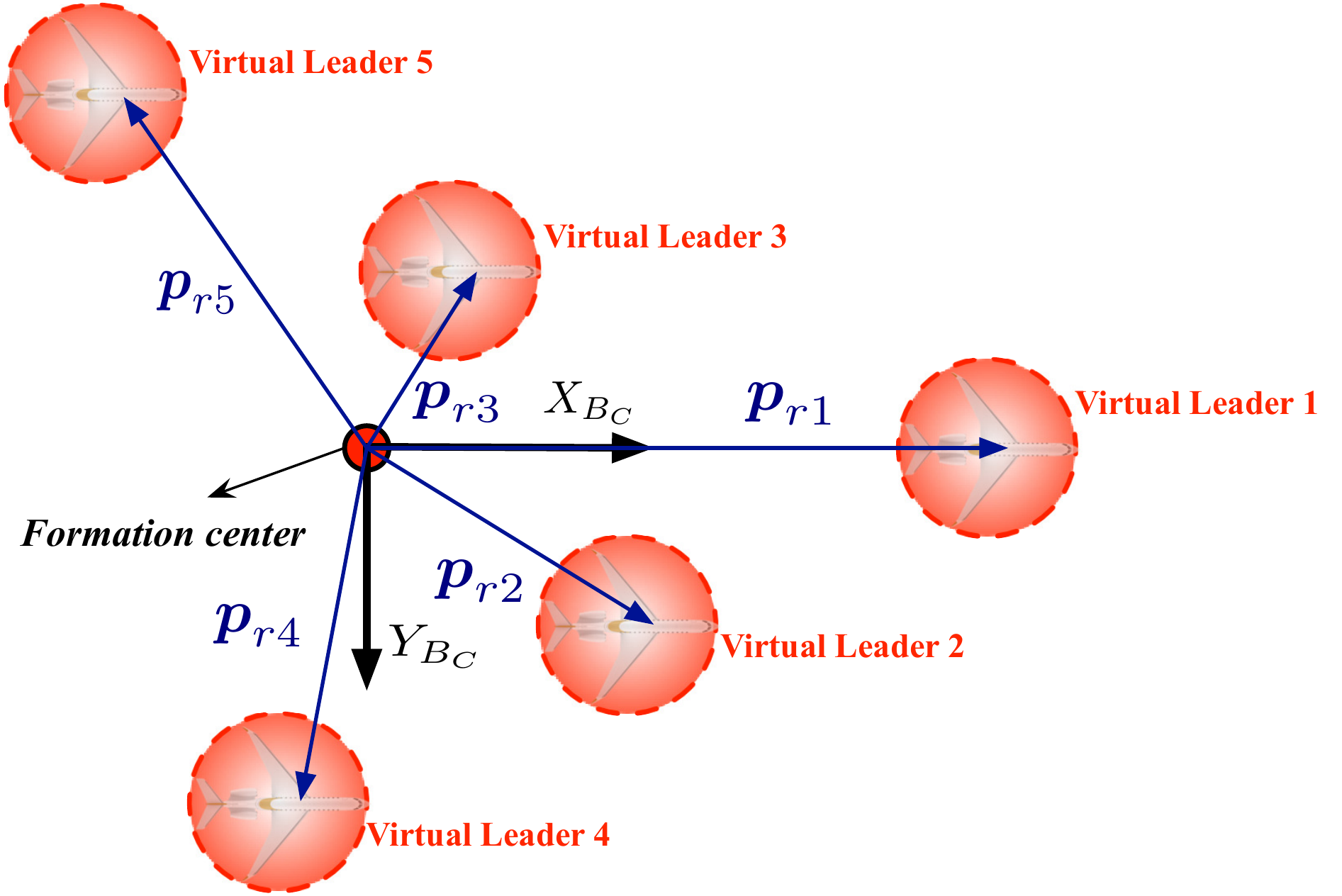}}\\ \vspace{-3mm}
\caption{ Motions and coordinates of virtual leaders}\vspace{-3mm}
\label{Fig: VL_Motions}
\end{figure}
The challenge in a virtual leader-based method is to simultaneously design virtual leaders and make them keep the optimal formation shape under different maneuvers. Such a challenge will get much more difficult, when a large number of UAVs are considered in formation flight. In this paper, the virtual structure concept is employed, where desired formation shape is taken as a certain rigid body. The formation reference  trajectory is described on the geometric center of the rigid body. As  shown in Figure \ref{Fig: VL_Motions}, desired motions for a UAV are defined using its relative position and motion to the geometric center of the rigid body.  In Figure \ref{Fig: VL_Motions}.a, $\mathbf{r}_{i}$ is the desired position vector of UAV $i$ in the inertial frame, whereas $\mathbf{r}_c$ is the position vector of the formation center in the inertial frame. In close formation, all UAVs fly in the same altitude. Horizontal relative positions are constant in a local frame fixed to the formation center. Shown in Figure \ref{Fig: VL_Motions}.b is the constant position vector  $\boldsymbol{p}_{ri}$ of a virtual leader $i$ in the local frame of the formation center, so one has
\begin{equation}
\left\{\begin{array}{ll}
\mathbf{r}_{i} &= \mathbf{r}_c+\mathbf{\bunderline{C}}_{BI}^T\left(\psi_c\right)\boldsymbol{p}_{ri}\\
\mathbf{\dot{r}}_{i} &= \mathbf{\dot{r}}_c+\mathbf{\dot{\bunderline{C}}}_{BI}^T\left(\psi_c\right)\boldsymbol{p}_{ri} = \mathbf{\dot{r}}_c+\mathbf{\bunderline{C}}_{BI}^T\left(\psi_c\right)\boldsymbol{\omega}_c^\times\boldsymbol{p}_{ri}\\
\mathbf{\ddot{r}}_{i} &= \mathbf{\ddot{r}}_c+\mathbf{\bunderline{C}}_{BI}^T\left(\psi_c\right)\boldsymbol{\omega}_c^\times\boldsymbol{\omega}_c^\times\boldsymbol{p}_{ri}+\mathbf{\bunderline{C}}_{BI}^T\left(\psi_c\right)\boldsymbol{\dot{\omega}}_c^\times\boldsymbol{p}_{ri}
\end{array}\right.
\label{Eq: VL_Sim_Refs}
\end{equation}
where $ \mathbf{r}_c=\left[x_c\text{, }y_c\text{, }z_c\right]^T$, $\mathbf{\bunderline{C}}_{BI}\left(\psi_c\right)$ is a rotation matrix and $\boldsymbol{\omega}_c^\times$ is a cross-product matrix, which are given as below, respectively.
\begin{equation*}
\mathbf{\bunderline{C}}_{BI}\left(\psi_c\right)=\left[\begin{array}{ccc}
\cos{\psi_c} & \sin{\psi_c} & 0 \\
-\sin{\psi_c} & \cos{\psi_c}& 0 \\
	0	    &		0	   & 1
\end{array}\right]\text{,} \qquad \boldsymbol{\omega}_c^\times=\left[\begin{array}{ccc}
	0 	& -\dot{\psi} c & 0 \\
\dot{\psi} c & 	0 	    &  0\\
	0	& 	0 	    &  0
\end{array}\right]
\end{equation*}
Although constant $\boldsymbol{p}_{ri}$ is assumed, the same method can be extended to the case with time-varying $\boldsymbol{p}_{ri} $.  Note that reference trajectories by (\ref{Eq: VL_Sim_Refs})  are not smooth, if there are any sudden changes in the accelerations of the formation center or angular rates of the local frame. Additionally, if a UAV is initially far away from its designated reference position, dramatical control efforts will be needed, which might violates control constraints. Hence, instead of implementing (\ref{Eq: VL_Sim_Refs}) directly, a cooperative filter is introduced as given below.
\begin{equation}
\left\{\begin{array}{ll}
\mathbf{\dot{\widehat{r}}}_{i}&=\mathbf{\widehat{v}}_{i} \\
\mathbf{\dot{\widehat{v}}}_{i} &=\mathbf{\ddot{r}}_{i}-\boldsymbol{\kappa}_{\mathbf{p}}\mathbf{\widehat{e}}_{\mathbf{p}i}-\boldsymbol{\kappa}_{\mathbf{v}}\mathbf{\widehat{e}}_{\mathbf{v}i}-\underset{j\in\mathscr{N}_{i}}{\sum}\left[\boldsymbol{c}_{\mathbf{p}}\left(\mathbf{\widehat{e}}_{\mathbf{p}i}-\mathbf{\widehat{e}}_{\mathbf{p}j}\right)+\boldsymbol{c}_{\mathbf{v}}\left(\mathbf{\widehat{e}}_{\mathbf {v}i}-\mathbf{\widehat{e}}_{\mathbf{v}j}\right)\right] \label{Eq: CoopFilter} 
\end{array}
\right.
\end{equation}
where $\mathbf{\widehat{r}}_{{i}}=\left[\widehat{x}_{i}\text{, }\widehat{y}_{i}\text{, }\widehat{z}_{i}\right]^T$ is the position vector of the $i$-th virtual leader in the inertial frame, $\mathbf{\widehat{v}}_{i}$ is the velocity vector, $\mathbf{\widehat{e}}_{\mathbf{p}i}=\mathbf{\widehat{r}}_{i}-\mathbf{{r}}_{{i}}$, $\mathbf{\widehat{e}}_{\mathbf{v}i}=\mathbf{\widehat{v}}_{i}-\mathbf{{v}}_{{i}}$, and $\boldsymbol{\kappa}_{\mathbf{p}}$, $\boldsymbol{\kappa}_{\mathbf{v}}$, $\boldsymbol{c}_{\mathbf{p}}$, and $\boldsymbol{c}_{\mathbf{v}}$ are diagonal positive definite gain matrices. 
\begin{remark}
From a motion planning perspective, the virtual structure is introduced to ensure rigid formation shape when reference motions are planned for all UAVs. The cooperative filter (\ref{Eq: CoopFilter}) is employed to smooth the planned motions in order to make them feasible and applicable.
\end{remark}
\begin{theorem} \label{Thm: CoopFilter}
Consider the cooperative filter (\ref{Eq: CoopFilter}) and the undirected graph $\mathscr{G}$. For any initial states $\mathbf{\widehat{r}}_{{i}}\left(t_0\right)$ and $\mathbf{\widehat{v}}_{{i}}\left(t_0\right)$, there exist  $\mathbf{\widehat{r}}_{{i}}\left(t\right)\to \mathbf{{r}}_{{i}}\left(t\right)$ and $\mathbf{\widehat{v}}_{{i}}\left(t\right)\to \mathbf{{v}}_{{i}}$ exponentially as $t\to\infty$, $\forall i\in\mathscr{V}$. 
\end{theorem}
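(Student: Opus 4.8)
The plan is to rewrite the filter in terms of the stacked tracking errors, show that the resulting error system is linear and time-invariant, and certify exponential decay by a Lyapunov argument that leans on Lemma \ref{Lem: LaplacianMat_Property}. First I would introduce the per-agent errors $\mathbf{\widehat{e}}_{\mathbf{p}i}=\mathbf{\widehat{r}}_i-\mathbf{r}_i$ and $\mathbf{\widehat{e}}_{\mathbf{v}i}=\mathbf{\widehat{v}}_i-\mathbf{v}_i$, using $\mathbf{v}_i=\mathbf{\dot{r}}_i$. Differentiating and substituting (\ref{Eq: CoopFilter}), the feedforward acceleration $\mathbf{\ddot{r}}_i$ cancels exactly, leaving $\mathbf{\dot{\widehat{e}}}_{\mathbf{p}i}=\mathbf{\widehat{e}}_{\mathbf{v}i}$ and $\mathbf{\dot{\widehat{e}}}_{\mathbf{v}i}=-\boldsymbol{\kappa}_{\mathbf{p}}\mathbf{\widehat{e}}_{\mathbf{p}i}-\boldsymbol{\kappa}_{\mathbf{v}}\mathbf{\widehat{e}}_{\mathbf{v}i}-\sum_{j\in\mathscr{N}_i}[\boldsymbol{c}_{\mathbf{p}}(\mathbf{\widehat{e}}_{\mathbf{p}i}-\mathbf{\widehat{e}}_{\mathbf{p}j})+\boldsymbol{c}_{\mathbf{v}}(\mathbf{\widehat{e}}_{\mathbf{v}i}-\mathbf{\widehat{e}}_{\mathbf{v}j})]$.

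Stacking $\mathbf{e}_{\mathbf{p}}=[\mathbf{\widehat{e}}_{\mathbf{p}1}^T,\ldots,\mathbf{\widehat{e}}_{\mathbf{p}n}^T]^T$ and likewise $\mathbf{e}_{\mathbf{v}}$, and recognizing each neighbour sum as the $i$-th block of a Kronecker-Laplacian action (since $\boldsymbol{c}_{\mathbf{p}}$ does not depend on $j$), the dynamics collapse to $\mathbf{\dot{e}}_{\mathbf{p}}=\mathbf{e}_{\mathbf{v}}$ and $\mathbf{\dot{e}}_{\mathbf{v}}=-\mathbf{M}_{\mathbf{p}}\mathbf{e}_{\mathbf{p}}-\mathbf{M}_{\mathbf{v}}\mathbf{e}_{\mathbf{v}}$ with $\mathbf{M}_{\mathbf{p}}=\mathbf{I}_n\otimes\boldsymbol{\kappa}_{\mathbf{p}}+\boldsymbol{\mathcal{L}}\otimes\boldsymbol{c}_{\mathbf{p}}$ and $\mathbf{M}_{\mathbf{v}}=\mathbf{I}_n\otimes\boldsymbol{\kappa}_{\mathbf{v}}+\boldsymbol{\mathcal{L}}\otimes\boldsymbol{c}_{\mathbf{v}}$. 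I would then apply Lemma \ref{Lem: LaplacianMat_Property} with $\boldsymbol{\mathcal{B}}=\mathbf{I}_n$ (positive definite, so the rank hypothesis holds) and the positive-definite diagonal gains $\boldsymbol{\kappa}_{\mathbf{p}},\boldsymbol{\kappa}_{\mathbf{v}},\boldsymbol{c}_{\mathbf{p}},\boldsymbol{c}_{\mathbf{v}}$ to conclude, using connectivity of $\mathscr{G}$, that both $\mathbf{M}_{\mathbf{p}}$ and $\mathbf{M}_{\mathbf{v}}$ are symmetric positive definite.

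With symmetric positive definite $\mathbf{M}_{\mathbf{p}}$, the natural energy $V_0=\tfrac{1}{2}\mathbf{e}_{\mathbf{p}}^T\mathbf{M}_{\mathbf{p}}\mathbf{e}_{\mathbf{p}}+\tfrac{1}{2}\mathbf{e}_{\mathbf{v}}^T\mathbf{e}_{\mathbf{v}}$ gives $\dot{V}_0=-\mathbf{e}_{\mathbf{v}}^T\mathbf{M}_{\mathbf{v}}\mathbf{e}_{\mathbf{v}}\leq0$ after the symmetric cross terms $\mathbf{e}_{\mathbf{p}}^T\mathbf{M}_{\mathbf{p}}\mathbf{e}_{\mathbf{v}}$ and $-\mathbf{e}_{\mathbf{v}}^T\mathbf{M}_{\mathbf{p}}\mathbf{e}_{\mathbf{p}}$ cancel; this already yields boundedness and, via invertibility of $\mathbf{M}_{\mathbf{p}}$ in a LaSalle argument, asymptotic convergence. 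To obtain the exponential rate claimed, I would instead use the cross-term Lyapunov function $V=V_0+\varepsilon\,\mathbf{e}_{\mathbf{p}}^T\mathbf{e}_{\mathbf{v}}$ with small $\varepsilon>0$: for $\varepsilon$ small enough $V$ stays positive definite, and its derivative acquires the strictly negative block $-\varepsilon\,\mathbf{e}_{\mathbf{p}}^T\mathbf{M}_{\mathbf{p}}\mathbf{e}_{\mathbf{p}}+\varepsilon\|\mathbf{e}_{\mathbf{v}}\|^2-\varepsilon\,\mathbf{e}_{\mathbf{p}}^T\mathbf{M}_{\mathbf{v}}\mathbf{e}_{\mathbf{v}}$, so after bounding the indefinite coupling by Young's inequality one reaches $\dot{V}\leq-c(\|\mathbf{e}_{\mathbf{p}}\|^2+\|\mathbf{e}_{\mathbf{v}}\|^2)\leq-\lambda V$, whence $V(t)\leq V(t_0)e^{-\lambda(t-t_0)}$ and $\mathbf{e}_{\mathbf{p}},\mathbf{e}_{\mathbf{v}}\to\mathbf{0}$ exponentially.

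The main obstacle is the bookkeeping in this final step: $\mathbf{M}_{\mathbf{p}}$ and $\mathbf{M}_{\mathbf{v}}$ need not commute, so the admissible range of $\varepsilon$ and the resulting rate $\lambda$ must be quantified through $\lambda_{min}(\cdot)$ and $\lambda_{max}(\cdot)$ of $\mathbf{M}_{\mathbf{p}}$ and $\mathbf{M}_{\mathbf{v}}$ rather than by a simultaneous diagonalization. As a shortcut, since the error system is linear and time-invariant, one may alternatively verify directly that the block companion matrix is Hurwitz (equivalently, that $\dot V_0$ has no nontrivial invariant set with $\mathbf{e}_{\mathbf{v}}\equiv\mathbf{0}$) and then invoke the equivalence of asymptotic and exponential stability for linear systems to close the argument.
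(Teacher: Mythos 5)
Your proposal is correct and follows essentially the same route as the paper: the identical stacked error dynamics, the identical Lyapunov function $V_0=\tfrac{1}{2}\mathbf{e}_{\mathbf{p}}^T\left(\mathbf{I}_n\otimes\boldsymbol{\kappa}_{\mathbf{p}}+\boldsymbol{\mathcal{L}}\otimes\boldsymbol{c}_{\mathbf{p}}\right)\mathbf{e}_{\mathbf{p}}+\tfrac{1}{2}\mathbf{e}_{\mathbf{v}}^T\mathbf{e}_{\mathbf{v}}$ with positive definiteness certified by Lemma \ref{Lem: LaplacianMat_Property}, a LaSalle argument that uses invertibility of $\mathbf{I}_n\otimes\boldsymbol{\kappa}_{\mathbf{p}}+\boldsymbol{\mathcal{L}}\otimes\boldsymbol{c}_{\mathbf{p}}$ to pin down the invariant set, and the same closing step that asymptotic stability of a linear time-invariant system is exponential (your ``shortcut'' is exactly the paper's argument). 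Your cross-term function $V_0+\varepsilon\,\mathbf{e}_{\mathbf{p}}^T\mathbf{e}_{\mathbf{v}}$ is a sound optional strengthening that additionally produces an explicit decay rate $\lambda$, which the paper's qualitative proof does not supply.
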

\begin{proof}
The fact that $\mathbf{\widehat{r}}_{{i}}\left(t\right)\to \mathbf{{r}}_{{i}}\left(t\right)$ and $\mathbf{\widehat{v}}_{{i}}\left(t\right)\to \mathbf{{v}}_{{i}}$ exponentially as $t\to\infty$  implies $\lim_{t\to\infty}\mathbf{\widehat{e}}_{\mathbf{p}i}\left(t\right)=\mathbf{0}$ and $\lim_{t\to\infty}\mathbf{\widehat{e}}_{\mathbf{v}i}\left(t\right)=\mathbf{0}$, respectively. Define $\mathbf{\widehat{e}}_{\mathbf{p}}=\left[\mathbf{\widehat{e}}_{\mathbf{p}1}^T\text{, }\mathbf{\widehat{e}}_{\mathbf{p}2}^T\text{, }\ldots\text{, }\mathbf{\widehat{e}}_{\mathbf{p}n}^T\right]^T$ and $\mathbf{\widehat{e}}_{\mathbf{v}}=\left[\mathbf{\widehat{e}}_{\mathbf{v}1}^T\text{, }\mathbf{\widehat{e}}_{\mathbf{v}2}^T\text{, }\ldots\text{, }\mathbf{\widehat{e}}_{\mathbf{v}n}^T\right]^T$, so 
\begin{equation}
\mathbf{\dot{\widehat{e}}}_{\mathbf{v}}=-\left(\mathbf{I}_n\otimes\boldsymbol{\kappa}_{\mathbf{p}}+\boldsymbol{\mathcal{L}}\otimes\boldsymbol{c}_{\mathbf{p}}\right)\mathbf{\widehat{e}_p}-\left(\mathbf{I}_n\otimes\boldsymbol{\kappa}_{\mathbf{v}}+\boldsymbol{\mathcal{L}}\otimes\boldsymbol{c}_{\mathbf{v}}\right)\mathbf{\widehat{e}_v} \label{Eq: CFilter_Multi}
\end{equation}
The following Lyapunov function is chosen.
\begin{equation}
\mathbb{V}=\frac{1}{2}{\mathbf{\widehat{e}_p}^T\left(\mathbf{I}_n\otimes\boldsymbol{\kappa}_{\mathbf{p}}+\boldsymbol{\mathcal{L}}\otimes\boldsymbol{c}_{\mathbf{p}}\right)\mathbf{\widehat{e}_p}}+\frac{1}{2}{\mathbf{\widehat{e}_v}^T\mathbf{\widehat{e}_v}}
\end{equation}
In terms of Lemma \ref{Lem: LaplacianMat_Property}, $\mathbb{V}$ is positive definite. Differentiating $\mathbb{V}$ yields
\begin{eqnarray*}
\dot{\mathbb{V}} &=&{\mathbf{\widehat{e}_p}^T\left(\mathbf{I}_n\otimes\boldsymbol{\kappa}_{\mathbf{p}}+\boldsymbol{\mathcal{L}}\otimes\boldsymbol{c}_{\mathbf{p}}\right)
					\mathbf{\dot{\widehat{e}}_p}}+\mathbf{\widehat{e}_v}^T\mathbf{\dot{\widehat{e}}_v}\\
			    &=&\mathbf{\widehat{e}_p}^T\left(\mathbf{I}_n\otimes\boldsymbol{\kappa}_{\mathbf{p}}+\boldsymbol{\mathcal{L}}\otimes\boldsymbol{c}_{\mathbf{p}}\right)
			    		\mathbf{\widehat{e}_v} -\mathbf{\widehat{e}_v}^T\left(\mathbf{I}_n\otimes\boldsymbol{\kappa}_{\mathbf{p}}+\boldsymbol{\mathcal{L}}
					\otimes\boldsymbol{c}_{\mathbf{p}}\right)\mathbf{\widehat{e}}_{\mathbf{p}} \\
			    & &-\mathbf{\widehat{e}_v}^T\left(\mathbf{I}_n\otimes\boldsymbol{\kappa}_{\mathbf{v}}+\boldsymbol{\mathcal{L}}\otimes\boldsymbol{c}_{\mathbf{v}}\right)
					\mathbf{\widehat{e}}_{\mathbf{v}} \\
			    &=&-\mathbf{\widehat{e}_v}^T\left(\mathbf{I}_n\otimes\boldsymbol{\kappa}_{\mathbf{v}}+\boldsymbol{\mathcal{L}}\otimes\boldsymbol{c}_{\mathbf{v}}\right)\mathbf{\widehat{e}}_{\mathbf{v}} 
\end{eqnarray*}
Since both $\boldsymbol{\kappa}_{\mathbf{v}}$ and $\boldsymbol{c}_{\mathbf{v}}$ are positive definite, $\mathbf{I}_n\otimes\boldsymbol{\kappa}_{\mathbf{v}}+\boldsymbol{\mathcal{L}}\otimes\boldsymbol{c}_{\mathbf{v}}$ is positive definite, which implies that $\dot{\mathbb{V}}\leq 0$. Let $\mathbb{E}=\left\{\left(\mathbf{\widehat{e}_p}\text{, }\mathbf{\widehat{e}_v}\right)|\dot{\mathbb{V}}=0\right\}$. To ensure $\dot{\mathbb{V}}=0$, there must exist $\mathbf{\widehat{e}_v}\equiv \mathbf{0}$, which implies $\mathbf{\dot{\widehat{e}}_v}\equiv \mathbf{0}$. According to (\ref{Eq: CFilter_Multi}), $\mathbf{\dot{\widehat{e}}_v}\equiv \mathbf{0}$ and $\mathbf{\widehat{e}_v}\equiv \mathbf{0}$ will imply that $\mathbf{\widehat{e}}_{\mathbf{p}}\equiv \mathbf{0}$. Hence, the latest invariant set in $\mathbb{E}$ is $\mathbb{M}=\left\{\left(\mathbf{\widehat{e}_p}\text{, }\mathbf{\widehat{e}_v}\right)|\mathbf{\widehat{e}_p}=\mathbf{0}\text{, }\mathbf{\widehat{e}_v}=\mathbf{0}\right\}$. According LaSalle's Invariance principle  (c. f. Page 128 \cite{Khalil2002Book}), it follows that $\lim_{t\to\infty}\mathbf{\widehat{e}_p}\to\mathbf{0}$ and $\lim_{t\to\infty}\mathbf{\widehat{e}_v}\to\mathbf{0}$ asymptotically. As the cooperative filter (\ref{Eq: CoopFilter}) is a linear system, one is able to conclude that the error system (\ref{Eq: CFilter_Multi}) is exponentially stable. Hence, there exist  $\mathbf{\widehat{r}}_{{i}}\left(t\right)\to \mathbf{{r}}_{{i}}\left(t\right)$ and $\mathbf{\widehat{v}}_{{i}}\left(t\right)\to \mathbf{{v}}_{{i}}$ exponentially as $t\to\infty$, $\forall i\in\mathscr{V}$.
\end{proof}

\begin{figure}[bph]
\centering
\includegraphics[width=1\textwidth]{./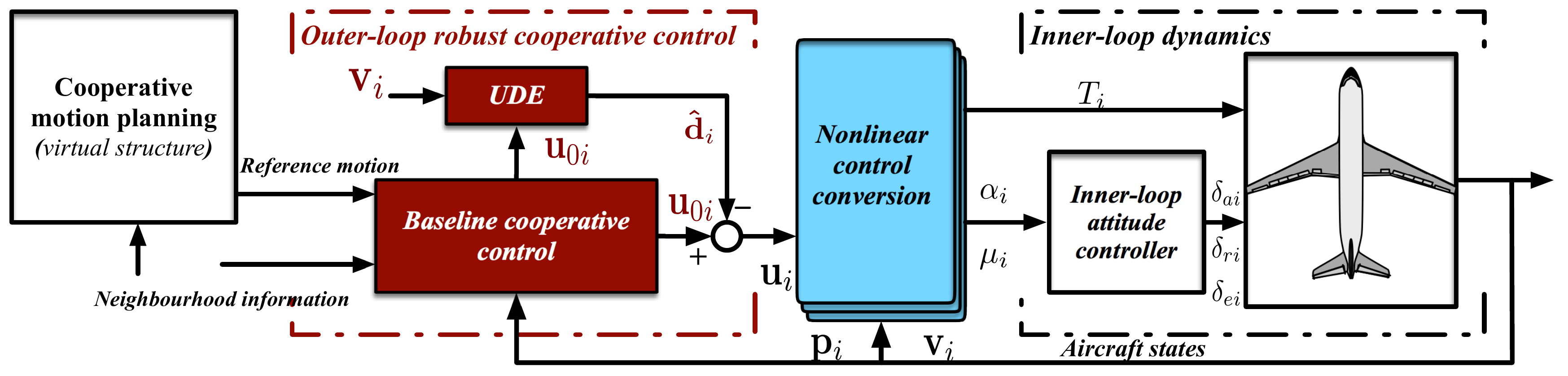}\\ \vspace{-3mm}
\caption{UDE-based robust cooperative formation control}
\label{Fig: RobCoopCntrlArchitect} \vspace{-5mm}
\end{figure}
 \section{Robust cooperative formation control} \label{Sec: CntrlDesign}
The general control law is expressed as 
\begin{equation}
u_{\rho i}=u_{\rho 0 i}-\widehat{d}_{\rho i} \text{, }\qquad \rho \in \left\{x\text{, }y\text{, }z\right\} \qquad  \left(i\in\mathscr{V}\right) \label{Eq: General_RobCoopCntrl}
\end{equation}
where $u_{\rho 0 i}$ is the baseline cooperative control law,  and $\widehat{d}_{\rho i}$ is the estimation of ${d}_{\rho i}$.  Shown in Figure \ref{Fig: RobCoopCntrlArchitect} is the  robust cooperative control block diagram.  
\subsection{Uncertainty and disturbance observer design}
Assume that  $u_{\rho i 0}$ has been well designed to stabilize the nominal double integrator system (\ref{Eq: DoubleIntegator}) when $d_{\rho i}=0$. Substituting (\ref{Eq: General_RobCoopCntrl}) into (\ref{Eq: DoubleIntegator}) yields
\begin{equation}
\ddot{\rho}_i=u_{\rho 0i}-\widehat{d}_{\rho i}+d_{\rho i}, \quad \rho\in\left\{x\text{, }y\text{, }z\right\} \label{Eq: DoubleInteg2}
\end{equation}
In terms of results in \cite{Zhong2004JDSMC, Zhu2015ACC, Zhu2018IJRNC},  a stable low-pass filter $G_{\rho i}(s)$ with unity gain is used to estimate $d_{\rho i}$, that is
\begin{equation}
G_{\rho i}(s)={1}\left/\left({\mathcal{T}_{\rho i}s+1}\right)\right., \quad \rho\in\left\{x\text{, }y\text{, }z\right\} \label{Eq: LowPassFilter}
\end{equation}
Hence,  
\begin{equation}
\widehat{d}_{\rho i}=\mathcal{L}^{-1}\left\{G_{\rho i}(s)\right\}*d_{\rho i}, \quad \rho\in\left\{x\text{, }y\text{, }z\right\}  \label{Eq: UDE1}
\end{equation}
where  $\mathcal{L}^{-1}\left\{\cdot\right\}$ is the inverse Laplace  transform.   Applying (\ref{Eq: UDE1}) to (\ref{Eq: DoubleInteg2})  yields
\begin{equation}
\ddot{\rho}_i=u_{\rho 0i}+\mathcal{L}^{-1}\left\{1-G_{\rho i}(s)\right\}*d_{\rho i}, \quad \rho\in\left\{x\text{, }y\text{, }z\right\} \label{Eq: UDE-Sys}
\end{equation}
where $1-G_{\rho i}(s)=\frac{\mathcal{T}_{\rho i}s}{\mathcal{T}_{\rho i}s+1}$ are high-pass filters. If the bandwidth of $G_{\rho i}(s)$ is chosen properly, the impact of $d_{\rho i}$ on the system could be ruled out.  In terms of  (\ref{Eq: DoubleInteg2}), we could denote $d_{\rho i}$ as a function of $\ddot{\rho}_i$, $u_{\rho 0i}$, and $\widehat{d}_{\rho i}$.  The following applicable uncertainty and disturbance observer is eventually obtained.
\begin{equation}
\widehat{d}_{\rho i}=\frac{1}{\mathcal{T}_{\rho i}}{\left(\dot{\rho}_i\left(t\right)-\dot{\rho}_i\left(0\right)-\int^t_0u_{\rho 0i}dt\right)},\quad \rho\in\left\{x\text{, }y\text{, }z\right\} \label{Eq: UDE2}
\end{equation}
Define $\widetilde{d}_{\rho i}=\widehat{d}_{\rho i}-d_{\rho i}$ as approximation errors, and the following lemma exists.
\begin{lemma}[\cite{Zhu2015CEP}]\label{Lem: UDE}
If the initial condition of (\ref{Eq: UDE2}) is chosen to be nill, namely $\widehat{d}_{\rho i}=0$ with $\rho\in\left\{x\text{, }y\text{, }z\right\}$,  the following conclusions exist.
\begin{enumerate}
\item $\widetilde{d}_{\rho i}$ are globally uniformly bounded by $\vert\widetilde{d}_{\rho i}\vert\leq \max\left\{\vert{{d}}_{\rho i}\left(0\right)\vert\text{, }\mathcal{T}_{\rho i}\Vert\dot{{d}}_{\rho i}\Vert_\infty\right\} $;
\item if $\vert{{d}}_{\rho i}\left(0\right)\vert\neq \mathcal{T}_{\rho i}\Vert\dot{{d}}_{\rho i}\Vert_\infty$, there exist any small positive constant $\epsilon$ and time $t_{\epsilon}$ such that $\vert\widetilde{d}_{\rho i}\vert<\mathcal{T}_{\rho i}\Vert\dot{{d}}_{\rho i}\Vert_\infty+\epsilon$ for all $t>t_{\epsilon}$, where $t_{\epsilon}=\max\left\{0\text{, } \mathcal{T}_{\rho i}\ln\left(\vert\frac{\vert{{d}}_{\rho i}\left(0\right)\vert-\mathcal{T}_{\rho i}\Vert\dot{{d}}_{\rho i}\Vert_\infty}{\epsilon}\vert\right)\right\}$;
\item if $\lim_{t\to\infty} \dot{{d}}_{\rho i} =0 $, $\lim_{t\to\infty} \widetilde{d}_{\rho i} =0 $.
\end{enumerate}
\end{lemma}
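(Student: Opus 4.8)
The plan is to reduce the frequency-domain estimator relation (\ref{Eq: UDE1}) to a single scalar linear ODE for the estimation error, and then extract all three conclusions from its explicit solution. Since $G_{\rho i}(s)=1/\left(\mathcal{T}_{\rho i}s+1\right)$ is a first-order low-pass filter with unity gain, the convolution $\widehat{d}_{\rho i}=\mathcal{L}^{-1}\left\{G_{\rho i}(s)\right\}*d_{\rho i}$ is equivalent to $\mathcal{T}_{\rho i}\dot{\widehat{d}}_{\rho i}+\widehat{d}_{\rho i}=d_{\rho i}$, and its impulse response $\frac{1}{\mathcal{T}_{\rho i}}e^{-t/\mathcal{T}_{\rho i}}$ automatically yields $\widehat{d}_{\rho i}(0)=0$, matching the prescribed zero initial condition. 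Substituting $\widetilde{d}_{\rho i}=\widehat{d}_{\rho i}-d_{\rho i}$ gives the error dynamics
\begin{equation*}
\dot{\widetilde{d}}_{\rho i}=-\frac{1}{\mathcal{T}_{\rho i}}\widetilde{d}_{\rho i}-\dot{d}_{\rho i},\qquad \widetilde{d}_{\rho i}(0)=-d_{\rho i}(0),
\end{equation*}
a stable scalar system with eigenvalue $-1/\mathcal{T}_{\rho i}$ forced by $-\dot{d}_{\rho i}$. This structure is exactly what makes the three claims plausible.

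First I would write the variation-of-constants solution,
\begin{equation*}
\widetilde{d}_{\rho i}(t)=-e^{-t/\mathcal{T}_{\rho i}}d_{\rho i}(0)-\int_0^t e^{-(t-\tau)/\mathcal{T}_{\rho i}}\dot{d}_{\rho i}(\tau)\,d\tau.
\end{equation*}
For the first conclusion, bounding the integrand by $\Vert\dot{d}_{\rho i}\Vert_\infty$ and using $\int_0^t e^{-(t-\tau)/\mathcal{T}_{\rho i}}\,d\tau=\mathcal{T}_{\rho i}\left(1-e^{-t/\mathcal{T}_{\rho i}}\right)$ gives $|\widetilde{d}_{\rho i}(t)|\leq e^{-t/\mathcal{T}_{\rho i}}|d_{\rho i}(0)|+\mathcal{T}_{\rho i}\Vert\dot{d}_{\rho i}\Vert_\infty\left(1-e^{-t/\mathcal{T}_{\rho i}}\right)$. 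The right-hand side is a convex combination of $|d_{\rho i}(0)|$ and $\mathcal{T}_{\rho i}\Vert\dot{d}_{\rho i}\Vert_\infty$, hence bounded by their maximum, which is precisely the global uniform bound. For the second conclusion, I would rewrite the same estimate as $|\widetilde{d}_{\rho i}(t)|\leq \mathcal{T}_{\rho i}\Vert\dot{d}_{\rho i}\Vert_\infty+e^{-t/\mathcal{T}_{\rho i}}\left(|d_{\rho i}(0)|-\mathcal{T}_{\rho i}\Vert\dot{d}_{\rho i}\Vert_\infty\right)$ and solve $e^{-t/\mathcal{T}_{\rho i}}\left(|d_{\rho i}(0)|-\mathcal{T}_{\rho i}\Vert\dot{d}_{\rho i}\Vert_\infty\right)<\epsilon$ for $t$; taking logarithms produces the threshold $t_\epsilon$, where the outer $\max\{0,\cdot\}$ absorbs the case in which the transient term already lies below $\epsilon$, and the hypothesis $|d_{\rho i}(0)|\neq\mathcal{T}_{\rho i}\Vert\dot{d}_{\rho i}\Vert_\infty$ keeps the logarithm's argument away from zero.

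The hard part will be the third conclusion, because the crude bound from the first step is useless here: $\Vert\dot{d}_{\rho i}\Vert_\infty$ need not vanish even when $\dot{d}_{\rho i}(t)\to 0$. The remedy I would use is to split the convolution integral at a time $T$ chosen, for arbitrary $\delta>0$, so that $|\dot{d}_{\rho i}(\tau)|<\delta$ for all $\tau>T$. The tail over $[T,t]$ is then bounded by $\delta\mathcal{T}_{\rho i}$, while the head over $[0,T]$ equals $e^{-t/\mathcal{T}_{\rho i}}\int_0^T e^{\tau/\mathcal{T}_{\rho i}}\dot{d}_{\rho i}(\tau)\,d\tau$, a fixed finite integral multiplied by the vanishing factor $e^{-t/\mathcal{T}_{\rho i}}$, and therefore tends to zero as $t\to\infty$; together with $e^{-t/\mathcal{T}_{\rho i}}d_{\rho i}(0)\to 0$ this yields $\limsup_{t\to\infty}|\widetilde{d}_{\rho i}(t)|\leq\delta\mathcal{T}_{\rho i}$. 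Since $\delta>0$ is arbitrary, the limit superior must be zero, which establishes $\lim_{t\to\infty}\widetilde{d}_{\rho i}=0$ and completes the argument.
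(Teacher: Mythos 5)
Your proof is correct in all three parts, but note that the paper does not actually prove this lemma: it is quoted with attribution to \cite{Zhu2015CEP}, so there is no in-paper argument to compare against --- your write-up supplies the argument the paper delegates to its reference. Your reduction of the filter relation (\ref{Eq: UDE1}) to the scalar error equation $\dot{\widetilde{d}}_{\rho i}=-\widetilde{d}_{\rho i}/\mathcal{T}_{\rho i}-\dot{d}_{\rho i}$ with $\widetilde{d}_{\rho i}(0)=-d_{\rho i}(0)$ is exactly the error model the paper itself invokes later in vector form, $\mathbf{\dot{\widetilde{d}}}=\mathbf{A_d}\mathbf{\widetilde{d}}-\mathbf{\dot{d}}$ in the proof of Theorem \ref{Thm: VL_ISS}, so your route is fully consistent with the paper's machinery. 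The three steps are sound: the variation-of-constants formula bounded as a convex combination of $\vert d_{\rho i}(0)\vert$ and $\mathcal{T}_{\rho i}\Vert\dot{d}_{\rho i}\Vert_\infty$ gives conclusion 1 (Assumption \ref{Assump: UncerDist} guarantees $\Vert\dot{d}_{\rho i}\Vert_\infty$ is finite); solving the exponential-envelope inequality gives conclusion 2; and your splitting of the convolution at a time $T$ beyond which $\vert\dot{d}_{\rho i}\vert<\delta$ is the standard proof that an exponentially stable first-order filter driven by a vanishing input has vanishing output, correctly avoiding the trap that $\Vert\dot{d}_{\rho i}\Vert_\infty$ need not shrink. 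One small imprecision: in conclusion 2, when $\vert d_{\rho i}(0)\vert<\mathcal{T}_{\rho i}\Vert\dot{d}_{\rho i}\Vert_\infty$ and $\mathcal{T}_{\rho i}\Vert\dot{d}_{\rho i}\Vert_\infty-\vert d_{\rho i}(0)\vert>\epsilon$, the stated $t_\epsilon$ is strictly positive rather than being ``absorbed'' to zero by the $\max\left\{0\text{, }\cdot\right\}$ as you suggest; since the transient term is then negative, the claimed bound in fact holds for all $t>0$, so the lemma's $t_\epsilon$ is merely conservative and your conclusion still follows a fortiori.
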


\subsection{Baseline cooperative control design}
When being rewritten in a matrix form, the control law (\ref{Eq: General_RobCoopCntrl}) is 
\begin{equation}
\mathbf{u}_{i}=\mathbf{u}_{0 i}-\mathbf{\widehat{d}}_{i} \qquad  \left(i\in\mathscr{V}\right)\label{Eq: General_RobCoopCntrl_Mat}
\end{equation}
Let $\mathbf{e}_{\mathbf{p}i}=\mathbf{p}_i-\mathbf{\tilde{r}}_i=\left[e_{xi}\text{, }e_{yi}\text{, }e_{zi}\right]^T$ and $\mathbf{e}_{\mathbf{v}i}=\mathbf{v}_i-\mathbf{\tilde{v}}_i=\left[e_{\dot{x}i}\text{, }e_{\dot{y}i}\text{, }e_{\dot{z}i}\right]^T$. The proposed baseline cooperative control for the $i$-th UAV  is
\begin{equation}
\mathbf{u}_{0i}=\mathbf{\ddot{r}}_i-\boldsymbol{K_{p}}\mathbf{e}_{\mathbf{p}i}-\boldsymbol{K_{v}}\mathbf{e}_{\mathbf{v}i}-\sum_{j\in\mathscr{N}_{i}}\left[\boldsymbol{C_{p}}\left(\mathbf{e}_{\mathbf{p}i}-\mathbf{e}_{\mathbf{p}j}\right)+\boldsymbol{C_{v}}\left(\mathbf{e}_{\mathbf {v}i}-\mathbf{e}_{\mathbf{v}j}\right)\right] \label{Eq: VL_CoopCntrl}
\end{equation}
where $\boldsymbol{K_{p}}\in\mathbb{R}^{3\times3}$, $\boldsymbol{K_{v}}\in\mathbb{R}^{3\times3}$, $\boldsymbol{C_{p}}\in\mathbb{R}^{3\times3}$, and $\boldsymbol{C_{v}}\in\mathbb{R}^{3\times3}$ are all positive definite diagonal parameter matrices given as below.
\begin{equation*}
\begin{array}{ll}
\boldsymbol{K_{p}}=diag\left\{K_{\boldsymbol{p}x}\text{, }K_{\boldsymbol{p}y}\text{, }K_{\boldsymbol{p}z}\right\}\text{, }\quad&\boldsymbol{K_{v}}=diag\left\{K_{\boldsymbol{v}x}\text{, }K_{\boldsymbol{v}y}\text{, }K_{\boldsymbol{v}z}\right\} \\
\boldsymbol{C_{p}}=diag\left\{C_{\boldsymbol{p}x}\text{, }C_{\boldsymbol{p}y}\text{, }C_{\boldsymbol{p}z}\right\}\text{, }\quad&\boldsymbol{C_{v}}=diag\left\{C_{\boldsymbol{v}x}\text{, }C_{\boldsymbol{v}y}\text{, }C_{\boldsymbol{v}z}\right\}
\end{array}
\end{equation*} 
Note that the same set of control gains are chosen, as we consider multiple homogeneous UAV models in close formation flight.  When substituting (\ref{Eq: VL_CoopCntrl}) and (\ref{Eq: General_RobCoopCntrl_Mat}) into (\ref{Eq: DoubleInteg_Mat}), a closed-loop error system is obtained.
\begin{eqnarray}
\mathbf{\dot{e}}_{\mathbf{v}i}&=&-\boldsymbol{K_{p}}\mathbf{e}_{\mathbf{p}i}-\boldsymbol{K_{v}}\mathbf{e}_{\mathbf{v}i}-\sum_{j\in\mathscr{N}_{i}}a_{ij}\left[\boldsymbol{C_{p}}\left(\mathbf{e}_{\mathbf{p}i}-\mathbf{e}_{\mathbf{p}j}\right)+\boldsymbol{C_{v}}\left(\mathbf{e}_{\mathbf {v}i}-\mathbf{e}_{\mathbf{v}j}\right)\right] \nonumber \\
&&+\boldsymbol{\kappa}_{\mathbf{p}}\mathbf{\widehat{e}}_{\mathbf{p}i}+\boldsymbol{\kappa}_{\mathbf{v}}\mathbf{\widehat{e}}_{\mathbf{v}i}+\underset{j\in\mathscr{N}_{i}}{\sum}\left[\boldsymbol{c}_{\mathbf{p}}\left(\mathbf{\widehat{e}}_{\mathbf{p}i}-\mathbf{\widehat{e}}_{\mathbf{p}j}\right)+\boldsymbol{c}_{\mathbf{v}}\left(\mathbf{\widehat{e}}_{\mathbf {v}i}-\mathbf{\widehat{e}}_{\mathbf{v}j}\right)\right] -\mathbf{\widetilde{d}}_i \label{Eq: VL_CLdyn_Single}
\end{eqnarray}
where $\mathbf{\widetilde{d}}_i=\left[\widetilde{d}_{xi}\text{, }\widetilde{d}_{yi}\text{, }\widetilde{d}_{zi}\right]^T$. Let $\mathbf{e_p}=\left[\mathbf{e}_{\mathbf{p}1}^T\text{, }\ldots\text{, }\mathbf{e}_{\mathbf{p}n}^T\right]^T$, $\mathbf{e_v}=\left[\mathbf{e}_{\mathbf{v}1}^T\text{, }\ldots\text{, }\mathbf{e}_{\mathbf{v}n}^T\right]^T$, and $\mathbf{\widetilde{d}}=\left[\mathbf{\widetilde{d}}_1^T\text{, }\ldots\text{, }\mathbf{\widetilde{d}}_n^T\right]^T$. The closed-loop tracking error dynamics are
\begin{eqnarray}
\mathbf{\dot{e}}_{\mathbf{v}}&=&-\left(\mathbf{I}_n\otimes\boldsymbol{K_{p}}+\boldsymbol{\mathcal{L}}\otimes\boldsymbol{C_{p}}\right)\mathbf{e_p}-\left(\mathbf{I}_n\otimes\boldsymbol{K_{v}}+\boldsymbol{\mathcal{L}}\otimes\boldsymbol{C_{v}}\right)\mathbf{e_v}\nonumber \\
&&+\left(\mathbf{I}_n\otimes\boldsymbol{\kappa}_{\mathbf{p}}+\boldsymbol{\mathcal{L}}\otimes\boldsymbol{c}_{\mathbf{p}}\right)\mathbf{\widehat{e}_p}+\left(\mathbf{I}_n\otimes\boldsymbol{\kappa}_{\mathbf{v}}+\boldsymbol{\mathcal{L}}\otimes\boldsymbol{c}_{\mathbf{v}}\right)\mathbf{\widehat{e}_v} 
-\mathbf{\widetilde{d}} \label{Eq: VL_CLdyn_Multi}
\end{eqnarray}
where $\mathbf{I}_n\in\mathbb{R}^{n\times n}$ is an identity matrix.

\section{Stability analysis} \label{Sec: StabAnal}
\begin{lemma}[Lemma 4.7 in \cite{Khalil2002Book}] \label{Lem: Cascaded-system}
Consider a cascade system
\begin{subequations}
\begin{align}
\mathbf{\dot{x}}_1&=\boldsymbol{f}_1\left(t\text{, }\mathbf{x}_1\text{, }\mathbf{x}_2\right) \label{Eq: Cascade1}\\
\mathbf{\dot{x}}_2&=\boldsymbol{f}_2\left(t\text{, }\mathbf{x}_2\right) \label{Eq: Cascade2}
\end{align}
\end{subequations}
where $\boldsymbol{f}_1:\left[0\text{, }\infty\right)\times\mathbb{R}^{n_{1}\times 1}\times\mathbb{R}^{n_{2}\times 1}\rightarrow\mathbb{R}^{n_{1}\times 1}$ and  $\boldsymbol{f}_2:\left[0\text{, }\infty\right)\times\mathbb{R}^{n_{2}\times 1}\rightarrow\mathbb{R}^{n_{2}\times 1}$ are piecewise continuous in $t$ and locally Lipschitz in $\left[\mathbf{x}_1^T\text{, }\mathbf{x}_2^T\right]^T$.  If the system (\ref{Eq: Cascade1}), with $\mathbf{x}_2$ as input, is input-to-state stable and the system (\ref{Eq: Cascade2}) is globally uniformly asymptotically stable, the entire cascade system (\ref{Eq: Cascade1}) and (\ref{Eq: Cascade2}) is globally uniformly asymptotically stable.
\end{lemma}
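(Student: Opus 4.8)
The plan is to work entirely with comparison (class-$\mathcal{K}$ and class-$\mathcal{KL}$) function estimates and to assemble a single class-$\mathcal{KL}$ bound for the composite state $\mathbf{x}=\left[\mathbf{x}_1^T\text{, }\mathbf{x}_2^T\right]^T$, which is equivalent to global uniform asymptotic stability. Two ingredients are available by hypothesis. First, input-to-state stability of (\ref{Eq: Cascade1}) with $\mathbf{x}_2$ as input furnishes $\beta_1\in\mathcal{KL}$ and $\gamma_1\in\mathcal{K}$ with
\begin{equation*}
\left\|\mathbf{x}_1(t)\right\|\leq\beta_1\left(\left\|\mathbf{x}_1(t_0)\right\|\text{, }t-t_0\right)+\gamma_1\left(\sup_{t_0\leq\tau\leq t}\left\|\mathbf{x}_2(\tau)\right\|\right).
\end{equation*}
Second, global uniform asymptotic stability of (\ref{Eq: Cascade2}) furnishes $\beta_2\in\mathcal{KL}$ with $\left\|\mathbf{x}_2(t)\right\|\leq\beta_2\left(\left\|\mathbf{x}_2(t_0)\right\|\text{, }t-t_0\right)$. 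Because the $\mathbf{x}_2$-subsystem is decoupled from $\mathbf{x}_1$, its solution is a legitimate, a priori bounded input to the $\mathbf{x}_1$-subsystem, which is exactly what makes the cascade structure tractable.

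The uniform stability half follows quickly. Since $\beta_2$ is nonincreasing in its second argument, $\sup_{\tau\geq t_0}\left\|\mathbf{x}_2(\tau)\right\|\leq\beta_2\left(\left\|\mathbf{x}_2(t_0)\right\|\text{, }0\right)$, and $r\mapsto\beta_2(r,0)$ is class-$\mathcal{K}$. Substituting this into the ISS estimate bounds $\left\|\mathbf{x}_1(t)\right\|$ by a class-$\mathcal{K}$ function of $\left\|\mathbf{x}(t_0)\right\|$ plus a term that vanishes with time, which yields uniform boundedness and, upon shrinking the initial data, uniform stability of the full state.

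The attractivity half is where the real work lies, and I would handle it with a time-splitting (semigroup/causality) argument. Fix $t\geq t_0$ and set the midpoint $\tau=\tfrac{1}{2}\left(t+t_0\right)$. Applying the ISS estimate on $\left[\tau\text{, }t\right]$ gives
\begin{equation*}
\left\|\mathbf{x}_1(t)\right\|\leq\beta_1\left(\left\|\mathbf{x}_1(\tau)\right\|\text{, }\tfrac{t-t_0}{2}\right)+\gamma_1\left(\sup_{\tau\leq s\leq t}\left\|\mathbf{x}_2(s)\right\|\right),
\end{equation*}
and I would then bound $\left\|\mathbf{x}_1(\tau)\right\|$ by the initial-data-only estimate from the stability half and bound $\sup_{\tau\leq s\leq t}\left\|\mathbf{x}_2(s)\right\|$ by $\beta_2\left(\left\|\mathbf{x}_2(t_0)\right\|\text{, }\tfrac{t-t_0}{2}\right)$. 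As $t\to\infty$ the first term decays because its $\mathcal{KL}$ time-argument $\tfrac{t-t_0}{2}\to\infty$ while its first slot stays bounded, and the second term decays because $\mathbf{x}_2\to0$; composing these produces a genuine class-$\mathcal{KL}$ bound on $\left\|\mathbf{x}_1(t)\right\|$, hence on $\left\|\mathbf{x}(t)\right\|$.

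The main obstacle I anticipate is precisely this last conversion: the raw ISS inequality measures the input by its supremum over the \emph{entire} past, which never forgets the initial transient of $\mathbf{x}_2$, so convergence cannot be read off directly. The midpoint split is the device that separates the ``old'' input (absorbed into a decaying $\beta_1$ through the long elapsed time $t-\tau$) from the ``recent'' input (small because $\mathbf{x}_2$ has already decayed by time $\tau$); getting the two decaying pieces to assemble into a single monotone class-$\mathcal{KL}$ function of $\left\|\mathbf{x}(t_0)\right\|$ and $t-t_0$, rather than merely establishing pointwise limits, is the delicate bookkeeping step.
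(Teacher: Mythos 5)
Your proposal is correct, and it is essentially the canonical argument: the paper offers no proof of this lemma---it is imported verbatim as Lemma 4.7 of Khalil's \emph{Nonlinear Systems}---and your midpoint time-splitting at $\tau=\tfrac{1}{2}\left(t+t_0\right)$, which absorbs the ``old'' input into a decaying $\beta_1$ term while bounding the ``recent'' input by $\beta_2\left(\left\|\mathbf{x}_2(t_0)\right\|\text{, }\tfrac{t-t_0}{2}\right)$, is exactly the proof given there. The bookkeeping step you flag is handled by standard class-$\mathcal{KL}$ manipulations (e.g.\ $\beta\left(a+b\text{, }s\right)\leq\beta\left(2a\text{, }s\right)+\beta\left(2b\text{, }s\right)$ together with monotonicity of the composed bound in $t-t_0$), so nothing essential is missing.
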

The stability analysis is divided into two steps. At the first step, it is shown that the nominal system of (\ref{Eq: VL_CLdyn_Multi}) without the consideration of any uncertainties and disturbances could be stabilized by the baseline cooperative controller (\ref{Eq: VL_CoopCntrl}). Once the nominal system is stabilized, the uncertainty and disturbance estimator design will be validated and the estimation errors will be ensured to be uniformly and ultimately bounded according to Lemma \ref{Lem: UDE} under Assumption \ref{Assump: UncerDist}. At the second step, the stability of closed-loop system (\ref{Eq: VL_CLdyn_Multi}) will be analyzed. For the first step, we assume $\mathbf{\widetilde{d}}_i=\mathbf{0}$. The nominal tracking error dynamics (\ref{Eq: VL_CLdyn_Multi}) is thus
\begin{equation}
\mathbf{\dot{e}}_{\mathbf{v}}=-\left(\mathbf{I}_n\otimes\boldsymbol{K_{p}}+\boldsymbol{\mathcal{L}}\otimes\boldsymbol{C_{p}}\right)\mathbf{e_p}-\left(\mathbf{I}_n\otimes\boldsymbol{K_{v}}+\boldsymbol{\mathcal{L}}\otimes\boldsymbol{C_{v}}\right)\mathbf{e_v}+\boldsymbol{\sigma}_{\mathbf{\widehat{e}}}\label{Eq: VL_CLdyn_Multi_Nom}
\end{equation}
where $\boldsymbol{\sigma}_{\mathbf{\widehat{e}}}=\left(\mathbf{I}_n\otimes\boldsymbol{\kappa}_{\mathbf{p}}+\boldsymbol{\mathcal{L}}\otimes\boldsymbol{c}_{\mathbf{p}}\right)\mathbf{\widehat{e}_p}+\left(\mathbf{I}_n\otimes\boldsymbol{\kappa}_{\mathbf{v}}+\boldsymbol{\mathcal{L}}\otimes\boldsymbol{c}_{\mathbf{v}}\right)\mathbf{\widehat{e}_v} $. According to Theorem \ref{Thm: CoopFilter}, $\mathbf{\widehat{e}_p}$ and $\mathbf{\widehat{e}_v}$ will converge to zero exponentially, so $\lim_{t\to\infty}\boldsymbol{\sigma}_{\mathbf{\widehat{e}}}\to\mathbf{0}$ exponentially. If the nominal system (\ref{Eq: VL_CLdyn_Multi}) is input-to-state stable with respect to $\boldsymbol{\sigma}_{\mathbf{\widehat{e}}}$, it would be asymptotically stable by Lemma \ref{Lem: Cascaded-system}.

The Laplacian matrix $\boldsymbol{\mathcal{L}}$ is a real symmetric matrix, so there exists an orthogonal constant matrix which diagonalizes $\boldsymbol{\mathcal{L}}$. Let $\mathbfcal{Q}=\left[\mathbfcal{Q}_1\text{, }\ldots\text{, }\mathbfcal{Q}_n\right]\in\mathbb{R}^{n\times n}$ be the orthogonal matrix that diagonalizes $\boldsymbol{\mathcal{L}}$, namely $\mathbfcal{Q}^T\boldsymbol{\mathcal{L}}\mathbfcal{Q}=\boldsymbol{\Lambda}$ where $\boldsymbol{\Lambda}=diag\left\{\lambda_1\text{, }\ldots\text{, }\lambda_n\right\}$. Choosing $\mathbfcal{Q}_1=\frac{1}{\sqrt{n}}\mathbf{1}_n$ yields $\lambda_1=0$ and $\boldsymbol{\Lambda}=diag\left\{0,\mathbfcal{\bar{Q}}^T\boldsymbol{\mathcal{L}}\mathbfcal{\bar{Q}}\right\}=diag\left\{0\text{, }\boldsymbol{\bar{\Lambda}}\right\}$, 
where $\mathbfcal{\bar{Q}}=\left[\mathbfcal{Q}_2\text{, }\ldots\text{, }\mathbfcal{Q}_n\right]\in\mathbb{R}^{n\times \left(n-1\right)}$ and $\boldsymbol{\bar{\Lambda}}=diag\left\{\lambda_2\text{, }\lambda_3\text{, }\ldots\text{, }\lambda_n\right\}>0$ with $\lambda_2$, $\lambda_3$, $\ldots$, $\lambda_n$ denoting positive eigenvalues of $\boldsymbol{\mathcal{L}}$. In terms of $\mathbfcal{Q}$, $\mathbf{e_p}$ and $\mathbf{e_v}$ will be transformed into a new coordinate system.  Let $\boldsymbol{\varepsilon_p}=\left(\mathbfcal{Q}^T\otimes \mathbf{I}_3\right)\mathbf{e_p}=\left[\boldsymbol{\varepsilon}_{\boldsymbol{p}1}^T\text{, }\boldsymbol{\varepsilon}_{\boldsymbol{p}2}^T\text{, }\ldots\text{, }\boldsymbol{\varepsilon}_{\boldsymbol{p}n}^T\right]^T$ and $\boldsymbol{\varepsilon_v}=\left(\mathbfcal{Q}^T\otimes \mathbf{I}_3\right)\mathbf{e_v}=\left[\boldsymbol{\varepsilon}_{\boldsymbol{v}1}^T\text{, }\boldsymbol{\varepsilon}_{\boldsymbol{v}2}^T\text{, }\ldots\text{, }\boldsymbol{\varepsilon}_{\boldsymbol{v}n}^T\right]^T$, where $\boldsymbol{\varepsilon}_{\boldsymbol{p}i}=\left[\varepsilon_{\boldsymbol{p}xi}\text{, }\varepsilon_{\boldsymbol{p}yi}\text{, }\varepsilon_{\boldsymbol{p}zi}\right]^T\in\mathbb{R}^{3\times1}$ and $\boldsymbol{\varepsilon}_{\boldsymbol{v}i}=\left[\varepsilon_{\boldsymbol{v}xi}\text{, }\varepsilon_{\boldsymbol{v}yi}\text{, }\varepsilon_{\boldsymbol{v}zi}\right]^T\in\mathbb{R}^{3\times1}$ with $i\in\mathscr{V}$.
Note that $\mathbfcal{Q}^T\otimes \mathbf{I}_3$ is invertible, as both $\mathbfcal{Q}$ and $\mathbf{I}_3$ are invertible. Hence,
\begin{equation}
\left\{\begin{array}{l}
\Vert\mathbf{e_p}\Vert_2\leq \Vert\left(\mathbfcal{Q}\otimes \mathbf{I}_3\right)\Vert_2 \Vert\boldsymbol{\varepsilon_p}\Vert_2 = \Vert\boldsymbol{\varepsilon_p}\Vert_2\\
\Vert\mathbf{e_v}\Vert_2\leq \Vert\left(\mathbfcal{Q}\otimes \mathbf{I}_3\right)\Vert_2 \Vert\boldsymbol{\varepsilon_v}\Vert_2 = \Vert\boldsymbol{\varepsilon_v}\Vert_2
\end{array}\right. \quad \forall t>0 \label{Eq: E_Epsilon_Ineq}
\end{equation}
If $\lim_{t\to\infty} \Vert\boldsymbol{\varepsilon_p}\Vert_2=0$ and$\lim_{t\to\infty} \Vert\boldsymbol{\varepsilon_v}\Vert_2=0$, the close formation tracking is, therefore, achieved according to Definition \ref{Def: Formation_VL}. On the other hand, If $\lim_{t\to\infty} \Vert\boldsymbol{\varepsilon_p}\Vert_2\leq\epsilon_p$ and$\lim_{t\to\infty} \Vert\boldsymbol{\varepsilon_v}\Vert_2\leq\epsilon_v$, the bounded close formation tracking will be obtained according to Definition \ref{Def: Practical_Formation_VL}. Based on the coordinate transformation by $\mathbfcal{Q}^T\otimes \mathbf{I}_3$ and (\ref{Eq: VL_CLdyn_Multi}), one could get  
\begin{equation}
\boldsymbol{\dot{\varepsilon}_{v}}	=-\left(\mathbf{I}_n\otimes\boldsymbol{K_{p}}+\boldsymbol{\Lambda}
										\otimes\boldsymbol{C_{p}}\right)\boldsymbol{\varepsilon_p}-\left(\mathbf{I}_n\otimes\boldsymbol{K_{v}}+\boldsymbol{\Lambda}
										\otimes\boldsymbol{C_{v}}\right)\boldsymbol{\varepsilon_v}+\boldsymbol{\bar{\sigma}}_{\mathbf{\widehat{e}}}
\label{Eq: VL_CLdyn_Multi_Nom_New}
\end{equation}
where $\boldsymbol{\bar{\sigma}}_{\mathbf{\widehat{e}}}=\left(\mathbfcal{Q}^T\otimes \mathbf{I}_3\right)\boldsymbol{{\sigma}}_{\mathbf{\widehat{e}}}=\left[\boldsymbol{\bar{\sigma}}_1^T\text{, }\ldots\text{,}\boldsymbol{\bar{\sigma}}_n^T\right]^T$ with $\boldsymbol{\bar{\sigma}}_i=\left[\bar{\sigma}_{xi}\text{, }\bar{\sigma}_{yi}\text{, }\bar{\sigma}_{zi}\right]^T$ for $i=1$, $2$, $\ldots$, $n$. Since $\mathbfcal{Q}$ is nonsingular and $\lim_{t\to\infty}\boldsymbol{{\sigma}}_{\mathbf{\widehat{e}}}\to\mathbf{0}$, one has $\lim_{t\to\infty}\boldsymbol{\bar{\sigma}}_{\mathbf{\widehat{e}}}\to\mathbf{0}$.
The dynamics of $\boldsymbol{\varepsilon}_{\boldsymbol{v}}$ is decoupled, as $\boldsymbol{K_{p}}$, $\boldsymbol{K_{v}}$, $\boldsymbol{K_{p}}$, $\boldsymbol{C_{v}}$, and $\boldsymbol{\Lambda}$ are all diagonal matrices.
Therefore, we have
\begin{equation}
\left[\begin{array}{c}
{\dot{\varepsilon}}_{\boldsymbol{p}\rho i}\\
{\dot{\varepsilon}}_{\boldsymbol{v}\rho i}
\end{array}
\right]=\underbrace{\left[\begin{array}{cc}
			0				&		1	\\
-K_{\boldsymbol{p}\rho}-\lambda_iC_{\boldsymbol{p}\rho}	&	-K_{\boldsymbol{v}\rho}-\lambda_iC_{\boldsymbol{v}\rho}
\end{array}
\right]}_{\mathbf{A}_{\rho i}}\left[\begin{array}{c}
{{\varepsilon}}_{\boldsymbol{p}\rho i}\\
{{\varepsilon}}_{\boldsymbol{v}\rho i}
\end{array}
\right]+\left[\begin{array}{c}
0 \\
\bar{\sigma}_{\rho i}
\end{array}\right] \label{Eq: VL_CLdyn_Single_Nom_New}
\end{equation}
where $\rho\in\left\{x\text{, }y\text{, }z\right\}$ and $ i\in\mathscr{V}$. 
Since $K_{\boldsymbol{p}\rho}$, $C_{\boldsymbol{p}\rho}$, $K_{\boldsymbol{v}\rho}$, $C_{\boldsymbol{v}\rho}>0$, and $\lambda_i\geq0$, the system matrices $\mathbf{A}_{\rho i}$ are Hurwitz ($\forall\rho\in\left\{x\text{, }y\text{, }z\right\}$ and $\forall i\in\mathscr{V}$), so the system (\ref{Eq: VL_CLdyn_Single_Nom_New}) is input-to-state stable with respect to $\bar{\sigma}_{\rho i}$ according to Lemma \ref{Lem: Cascaded-system}. With the consideration of  $\lim_{t\to\infty}\boldsymbol{\bar{\sigma}}_{\mathbf{\widehat{e}}}\to\mathbf{0}$ and Lemma \ref{Lem: Cascaded-system}, one is able to conclude that the nominal system (\ref{Eq: VL_CLdyn_Single_Nom_New}) is asymptotically stable. Accordingly, the nominal system (\ref{Eq: VL_CLdyn_Multi_Nom}) is asymptotically stable. 
\begin{theorem}\label{Thm: VL_ISS}
Suppose Assumption \ref{Assump: UncerDist} holds. The closed-loop close formation tracking error system (\ref{Eq: VL_CLdyn_Multi}) with the uncertainty and disturbance observer design (\ref{Eq: UDE2}) is input-to-state stable with respect to $\mathbf{\dot{d}}=\left[\mathbf{\dot{d}}_1^T\text{, }\ldots\text{, }\mathbf{\dot{d}}_n^T\right]^T$ where $\mathbf{\dot{d}}_i=\left[\dot{d}_{xi}\text{, }\dot{d}_{yi}\text{, }\dot{d}_{zi}\right]^T$. Furthermore, the bounded close formation flight tracking is achieved by the uncertainty and disturbance estimator (\ref{Eq: UDE2}) and the virtual leader-based cooperative baseline law (\ref{Eq: VL_CoopCntrl}).
\end{theorem}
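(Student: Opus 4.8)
The plan is to extend the nominal-stability argument already carried out for (\ref{Eq: VL_CLdyn_Multi_Nom}) to the full error system (\ref{Eq: VL_CLdyn_Multi}) by treating the estimation error $\mathbf{\widetilde{d}}$ as an additional exogenous input and then bounding it through Lemma \ref{Lem: UDE}. First I would apply the same coordinate change $\mathbfcal{Q}^T\otimes\mathbf{I}_3$ used above, which now sends (\ref{Eq: VL_CLdyn_Multi}) into $n$ decoupled scalar second-order blocks
\begin{equation*}
\left[\begin{array}{c}\dot{\varepsilon}_{\boldsymbol{p}\rho i}\\ \dot{\varepsilon}_{\boldsymbol{v}\rho i}\end{array}\right]=\mathbf{A}_{\rho i}\left[\begin{array}{c}\varepsilon_{\boldsymbol{p}\rho i}\\ \varepsilon_{\boldsymbol{v}\rho i}\end{array}\right]+\left[\begin{array}{c}0\\ \bar{\sigma}_{\rho i}-\bar{d}_{\rho i}\end{array}\right],
\end{equation*}
where $\boldsymbol{\bar{d}}=\left(\mathbfcal{Q}^T\otimes\mathbf{I}_3\right)\mathbf{\widetilde{d}}$ collects the transformed estimation errors. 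The matrices $\mathbf{A}_{\rho i}$ were already shown to be Hurwitz for every $\rho\in\left\{x,y,z\right\}$ and $i\in\mathscr{V}$, so each block is an exponentially stable LTI system and is therefore input-to-state stable with respect to the scalar input $\bar{\sigma}_{\rho i}-\bar{d}_{\rho i}$; stacking the blocks yields an ISS estimate for $\left(\boldsymbol{\varepsilon_p},\boldsymbol{\varepsilon_v}\right)$ in terms of $\sup_{s\le t}\Vert\boldsymbol{\bar{\sigma}}_{\mathbf{\widehat{e}}}-\boldsymbol{\bar{d}}\Vert_2$.

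Next I would convert this into ISS with respect to $\mathbf{\dot{d}}$. The filter perturbation $\boldsymbol{\bar{\sigma}}_{\mathbf{\widehat{e}}}$ vanishes exponentially by Theorem \ref{Thm: CoopFilter}, so it contributes only a decaying transient. For the estimation error I would either invoke Lemma \ref{Lem: UDE} directly, which bounds each $\vert\widetilde{d}_{\rho i}\vert$ by $\max\left\{\vert d_{\rho i}(0)\vert,\mathcal{T}_{\rho i}\Vert\dot{d}_{\rho i}\Vert_\infty\right\}$, or, equivalently, note that (\ref{Eq: UDE2}) together with the unity-gain low-pass structure (\ref{Eq: LowPassFilter}) yields the scalar error dynamics $\dot{\widetilde{d}}_{\rho i}=-\widetilde{d}_{\rho i}/\mathcal{T}_{\rho i}-\dot{d}_{\rho i}$, which is itself an exponentially stable system driven by $\dot{d}_{\rho i}$ and hence ISS with respect to $\mathbf{\dot{d}}$ (the term $\vert d_{\rho i}(0)\vert$ appearing in Lemma \ref{Lem: UDE} is just the initial condition $\widetilde{d}_{\rho i}(0)=-d_{\rho i}(0)$ of this dynamics). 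Since $\mathbfcal{Q}$ is orthogonal, $\Vert\boldsymbol{\bar{d}}\Vert_2=\Vert\mathbf{\widetilde{d}}\Vert_2$ inherits the same bound. Composing the two ISS estimates, with the estimation-error block feeding the Hurwitz tracking block, gives an ISS estimate for $\left(\boldsymbol{\varepsilon_p},\boldsymbol{\varepsilon_v}\right)$, and therefore for $\left(\mathbf{e_p},\mathbf{e_v}\right)$ through (\ref{Eq: E_Epsilon_Ineq}), in terms of $\Vert\mathbf{\dot{d}}\Vert_\infty$; this is the claimed ISS property.

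Finally, for the bounded-tracking claim I would read off the ultimate bound. Using part 2 of Lemma \ref{Lem: UDE}, each $\vert\widetilde{d}_{\rho i}\vert$ is eventually no larger than $\mathcal{T}_{\rho i}\Vert\dot{d}_{\rho i}\Vert_\infty$ up to arbitrary $\epsilon$, and the decaying $\boldsymbol{\bar{\sigma}}_{\mathbf{\widehat{e}}}$ drops out asymptotically, so the ISS estimate leaves a residual steady-state bound on $\Vert\boldsymbol{\varepsilon_p}\Vert_2$ and $\Vert\boldsymbol{\varepsilon_v}\Vert_2$ proportional to the filter time constants $\mathcal{T}_{\rho i}$ and the ISS gains of the Hurwitz blocks. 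Calling these bounds $\epsilon_p$ and $\epsilon_v$ and passing back through (\ref{Eq: E_Epsilon_Ineq}) gives $\lim_{t\to\infty}\Vert\mathbf{e_p}\Vert_2\le\epsilon_p$ and $\lim_{t\to\infty}\Vert\mathbf{e_v}\Vert_2\le\epsilon_v$; because the filter errors $\mathbf{\widehat{e}}_{\mathbf{p}i},\mathbf{\widehat{e}}_{\mathbf{v}i}\to\mathbf{0}$, the physical formation errors $\mathbf{p}_i-\mathbf{r}_i$ and $\mathbf{v}_i-\mathbf{\dot{r}}_i$ satisfy the same ultimate bounds, matching Definition \ref{Def: Practical_Formation_VL}, and both $\epsilon_p,\epsilon_v$ shrink to zero as the $\mathcal{T}_{\rho i}$ are decreased. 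The main obstacle I anticipate is the bookkeeping in the second step: the theorem asks for ISS relative to $\mathbf{\dot{d}}$, whereas $\mathbf{\widetilde{d}}$, not $\mathbf{\dot{d}}$, is the signal that actually enters (\ref{Eq: VL_CLdyn_Multi}), so the argument must carefully route the disturbance through the UDE error characterization of Lemma \ref{Lem: UDE} (equivalently, the cascade $\mathbf{\dot{d}}\to\mathbf{\widetilde{d}}\to\left(\mathbf{e_p},\mathbf{e_v}\right)$) and treat the non-vanishing initial value $d_{\rho i}(0)$ as a transient rather than a persistent input.
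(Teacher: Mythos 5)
Your proposal is correct, and its skeleton matches the paper's: both arguments route the disturbance through the cascade $\mathbf{\dot{d}}\to\mathbf{\widetilde{d}}\to\left(\mathbf{e_p}\text{, }\mathbf{e_v}\right)$, using exactly the scalar observer-error dynamics $\dot{\widetilde{d}}_{\rho i}=-\widetilde{d}_{\rho i}/\mathcal{T}_{\rho i}-\dot{d}_{\rho i}$ you derived (the paper's compact model $\mathbf{\dot{\widetilde{d}}}=\mathbf{A_d}\mathbf{\widetilde{d}}-\mathbf{\dot{d}}$ with diagonal Hurwitz $\mathbf{A_d}$), Theorem \ref{Thm: CoopFilter} to dismiss $\boldsymbol{\sigma}_{\mathbf{\widehat{e}}}$ as a decaying transient, and Lemma \ref{Lem: UDE} for the ultimate bound on the estimation error. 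Two organizational differences are worth noting. First, for the ISS claim the paper does not compose ISS estimates serially as you do; it stacks $\left(\mathbf{e}\text{, }\mathbf{\widehat{e}}\text{, }\mathbf{\widetilde{d}}\right)$ into a single block upper-triangular LTI system with Hurwitz matrix $\mathbf{A_a}$ driven by $\mathbf{\dot{d}}$ and concludes ISS in one stroke; for linear systems the two routes are equivalent, though your cascade phrasing makes the transient role of the initial value $d_{\rho i}\left(0\right)$ explicit, which the paper leaves buried in Lemma \ref{Lem: UDE}. Second, for the bounded-tracking part the paper works in the modal coordinates as you do but splits $\boldsymbol{\varepsilon}_{\rho i}=\boldsymbol{\varepsilon}_{\rho i}^{1}+\boldsymbol{\varepsilon}_{\rho i}^{2}$ (vanishing $\bar{\sigma}_{\rho i}$-part versus bounded $\widetilde{d}_{\boldsymbol{\varepsilon}\rho i}$-part) and quantifies the ultimate bound by a Lyapunov estimate with $\mathbf{P}_{\rho i}$ solving $\mathbf{P}_{\rho i}\mathbf{A}_{\rho i}+\mathbf{A}_{\rho i}^T\mathbf{P}_{\rho i}=-\mathbf{I}_2$, then argues the bound can be made arbitrarily small by increasing the closed-loop gains via the DC gain of the transfer matrix (\ref{Eq: TF_G_pd}); you instead shrink the bound by letting $\mathcal{T}_{\rho i}\to 0$, which acts on the size of the persistent input $\mathcal{T}_{\rho i}\Vert\dot{d}_{\rho i}\Vert_\infty$ rather than on the ISS gain. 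Both knobs legitimately satisfy the arbitrarily-small requirement of Definition \ref{Def: Practical_Formation_VL}, and they are complementary. One cosmetic caution: your symbol $\bar{d}_{\rho i}$ for the transformed estimation error collides with the disturbance bounds $\bar{d}_{xi}$ of Assumption \ref{Assump: UncerDist}; the paper writes $\widetilde{d}_{\boldsymbol{\varepsilon}\rho i}$ for this quantity.
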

\begin{proof}
Let $\mathbf{e}=\left[\mathbf{e_p}^T\text{, }\mathbf{e_v}^T\right]^T$. The tracking error system (\ref{Eq: VL_CLdyn_Multi}) could be rewritten as 
\begin{eqnarray}
\mathbf{\dot{e}} &=&\underbrace{\left[\begin{array}{cc}
			\mathbf{0} 														& \mathbf{I}_n\otimes \mathbf{I}_3\\
-\mathbf{I}_n\otimes\boldsymbol{K_{p}}-\boldsymbol{\mathcal{L}}\otimes\boldsymbol{C_{p}} & -\mathbf{I}_n\otimes\boldsymbol{K_{v}}-\boldsymbol{\mathcal{L}}\otimes\boldsymbol{C_{v}}
\end{array}\right]}_{\mathbf{A}}\mathbf{e} +\underbrace{\left[\begin{array}{cc}
\mathbf{0}  \\
-\mathbf{I}_n\otimes \mathbf{I}_3
\end{array}\right]}_{\mathbf{B}}\nonumber \\
&&\times\left( \mathbf{\widetilde{d}} -\boldsymbol{\sigma}_{\mathbf{\widehat{e}}}\right)\label{Eq: MultErr_VL}
\end{eqnarray}
where $\mathbf{A}$ is Hurwitz, as the nominal system is asymptotically stable. Let $\mathbf{\widehat{e}}=\left[\mathbf{\widehat{e}_p}^T\text{, }\mathbf{\widehat{e}_v}^T\right]^T$, so (\ref{Eq: CFilter_Multi}) could be rewritten as
\begin{equation}
\mathbf{\dot{\widehat{e}}}=\underbrace{\left[\begin{array}{cc}
			\mathbf{0} 														& \mathbf{I}_n\otimes \mathbf{I}_3\\
-\mathbf{I}_n\otimes\boldsymbol{\kappa}_{\mathbf{p}}-\boldsymbol{\mathcal{L}}\otimes\boldsymbol{c}_{\mathbf{p}} & -\mathbf{I}_n\otimes\boldsymbol{\kappa}_{\mathbf{v}}-\boldsymbol{\mathcal{L}}\otimes\boldsymbol{c}_{\mathbf{v}}
\end{array}\right]}_{\mathbf{A_{\widehat{e}}}}\mathbf{\widehat{e}} 
\end{equation}
where $\mathbf{A_{\widehat{e}}}$ is Hurwitz. It is easy to obtain that $\boldsymbol{\sigma}_{\mathbf{\widehat{e}}}=\mathbf{B}^T\mathbf{A_{\widehat{e}}}\mathbf{\widehat{e}}$. 
Define 
\begin{equation*}
\mathbf{A_d}=diag\left\{-{1}/{\mathcal{T}_{x1}}\text{, }-{1}/{\mathcal{T}_{y1}}\text{, }-{1}/{\mathcal{T}_{z1}}\text{, }\ldots\text{, }-{1}/{\mathcal{T}_{xn}}\text{, }-{1}/{\mathcal{T}_{yn}}\text{, }-{1}/{\mathcal{T}_{zn}}\right\}
\end{equation*} 
where $\mathbf{A_d}$ is Hurwitz. The following  compact error dynamic model could be obtained for the uncertainty and disturbance observers. 
\begin{equation}
 \mathbf{\dot{\widetilde{d}}}=\mathbf{A_d}\mathbf{\widetilde{d}}-\mathbf{\dot{d}} \label{Eq: UDE_error_Vector}
\end{equation}
Define $\mathbf{e_a}=\left[\mathbf{e}^T\text{, }\mathbf{\widehat{e}}^T\text{, }\mathbf{\widetilde{d}}^T\right]^T$. The composite error system by (\ref{Eq: MultErr_VL}) and (\ref{Eq: UDE_error_Vector}) is
\begin{equation}
\mathbf{\dot{e}_a}=\underbrace{\left[\begin{array}{ccc}
\mathbf{A} & -\mathbf{B}\mathbf{B}^T\mathbf{A_{\widehat{e}}}  & \mathbf{B} \\
\mathbf{0}  & 			\mathbf{A_{\widehat{e}}}			 	&\mathbf{0} \\
\mathbf{0}  &		 	\mathbf{0}  					 	& \mathbf{A_d}
\end{array}\right]}_{\mathbf{A_a}}\mathbf{e_a}+\underbrace{\left[\begin{array}{c}
\mathbf{0} \\
\mathbf{0} \\
-\mathbf{I}_n\otimes \mathbf{I}_3
\end{array}\right]}_{\mathbf{B_a}} \mathbf{\dot{d}} \label{Eq: UDE_error_Vector}
\end{equation}
where $\mathbf{A_a}$ is always Hurwitz, as  $\mathbf{A}$, $\mathbf{A_{\widehat{e}}}$, and $\mathbf{A_d}$ are all Hurwitz. If $\mathbf{\dot{d}}$ is taken as inputs, the system (\ref{Eq: UDE_error_Vector}) is input-to-state stable with respect to $\mathbf{\dot{d}}$, which implies that (\ref{Eq: VL_CLdyn_Multi}) is input-to-state stable with respect to $\mathbf{\dot{d}}$.

For the bounded close formation tracking flight, one needs to show Definition \ref{Def: Practical_Formation_VL} is satisfied. In light of (\ref{Eq: E_Epsilon_Ineq}), we instead show $\boldsymbol{\varepsilon_p}$ and $\boldsymbol{\varepsilon_v}$ are ultimately bounded with arbitrarily small ultimate boundaries. Using $\boldsymbol{\varepsilon_p}=\left(\mathbfcal{Q}^T\otimes \mathbf{I}_3\right)\mathbf{e_p}$, $\boldsymbol{\varepsilon_v}=\left(\mathbfcal{Q}^T\otimes \mathbf{I}_3\right)\mathbf{e_v}$, and (\ref{Eq: VL_CLdyn_Multi_Nom_New}), one has 
\begin{equation}
\boldsymbol{\dot{\varepsilon}_{v}}	=-\left(\mathbf{I}_n\otimes\boldsymbol{K_{p}}+\boldsymbol{\Lambda}
										\otimes\boldsymbol{C_{p}}\right)\boldsymbol{\varepsilon_p}-\left(\mathbf{I}_n\otimes\boldsymbol{K_{v}}+\boldsymbol{\Lambda}
										\otimes\boldsymbol{C_{v}}\right)\boldsymbol{\varepsilon_v}-\mathbf{\widetilde{d}}_{\boldsymbol{\varepsilon}}+\boldsymbol{\bar{\sigma}}_{\mathbf{\widehat{e}}} \label{Eq: VL_CLdyn_Multi_New}
\end{equation}
where $\mathbf{\widetilde{d}}_{\boldsymbol{\varepsilon}}=\left(\mathbfcal{Q}^T\otimes \mathbf{I}_3\right)\mathbf{\widetilde{d}}=\left[\mathbf{\widetilde{d}}_{\boldsymbol{\varepsilon}1}^T\text{, }\ldots\text{, }\mathbf{\widetilde{d}}_{\boldsymbol{\varepsilon}n}^T\right]^T$. Since $\mathbfcal{Q}$ is an orthogonal matrix, we have $\Vert\mathbf{\widetilde{d}}_{\boldsymbol{\varepsilon}}\Vert_2\leq \Vert \left(\mathbfcal{Q}^T\otimes \mathbf{I}_3\right)\Vert_2\Vert\mathbf{\widetilde{d}}\Vert_2=\Vert\mathbf{\widetilde{d}}\Vert_2$.  If Assumption \ref{Assump: UncerDist} holds, $\mathbf{\widetilde{d}}$ is uniformly globally bounded according to Lemma \ref{Lem: UDE}, which implies $\mathbf{\widetilde{d}}_{\boldsymbol{\varepsilon}}$ is uniformly globally bounded.  In terms of (\ref{Eq: VL_CLdyn_Single_Nom_New}) for the nominal case, we have 
\begin{equation*}
\left[\begin{array}{c}
{\dot{\varepsilon}}_{\boldsymbol{p}\rho i}\\
{\dot{\varepsilon}}_{\boldsymbol{v}\rho i}
\end{array}
\right]=\underbrace{\left[\begin{array}{cc}
			0				&		1	\\
-K_{\boldsymbol{p}\rho}-\lambda_iC_{\boldsymbol{p}\rho}	&	-K_{\boldsymbol{v}\rho}-\lambda_iC_{\boldsymbol{v}\rho}
\end{array}
\right]}_{\mathbf{A}_{\rho i}}\left[\begin{array}{c}
{{\varepsilon}}_{\boldsymbol{p}\rho i}\\
{{\varepsilon}}_{\boldsymbol{v}\rho i}
\end{array}
\right]+\underbrace{\left[\begin{array}{c}
0\\
-1
\end{array}
\right]}_{\mathbf{B}_{\rho i}}\left(\widetilde{d}_{\boldsymbol{\varepsilon}\rho i}-\bar{\sigma}_{\rho i}\right)\label{Eq: VL_CLdyn_Single_Nom_New}
\end{equation*}
where $\rho\in\left\{x\text{, }y\text{, }z\right\}\text{, } i\in\mathscr{V}$.
Since $\mathbf{A}_{\rho i}$ is Hurwitz, the system (\ref{Eq: VL_CLdyn_Single_Nom_New}) is input-to-state stable with respect to $\widetilde{d}_{\boldsymbol{\varepsilon}\rho i}-\bar{\sigma}_{\rho i}$ for all $\rho\in\left\{x\text{, }y\text{, }z\right\}\text{, } i\in\mathscr{V}$. Define $\boldsymbol{\varepsilon}_{\rho i}=\left[{{\varepsilon}}_{\boldsymbol{p}\rho i}\text{, }{{\varepsilon}}_{\boldsymbol{v}\rho i}\right]^T=\boldsymbol{\varepsilon}_{\rho i}^{1}+\boldsymbol{\varepsilon}_{\rho i}^{2}$ where
\begin{equation*}
\left\{\begin{array}{ll}
\boldsymbol{\dot{\varepsilon}}_{\rho i}^{1} & =\mathbf{A}_{\rho i}\boldsymbol{\varepsilon}_{\rho i}^{1}- \mathbf{B}_{\rho i}\bar{\sigma}_{\rho i}\\
\boldsymbol{\dot{\varepsilon}}_{\rho i}^{2} & =\mathbf{A}_{\rho i}\boldsymbol{\varepsilon}_{\rho i}^{2}+\mathbf{B}_{\rho i}\widetilde{d}_{\boldsymbol{\varepsilon}\rho i}
\end{array}
\right.
\end{equation*}
As $\lim_{t\to\infty}\bar{\sigma}_{\rho i}\to\mathbf{0}$, one has $\lim_{t\to\infty}\boldsymbol{\varepsilon}_{\rho i}^{1}\to\mathbf{0}$ for any initial conditions. For $\boldsymbol{\varepsilon}_{\rho i}^{2}$,  the following inequality always exists
\begin{eqnarray*}
\left\Vert\boldsymbol{\varepsilon}_{\rho i}^{2}\right\Vert_2&\leq& \sqrt{\frac{\lambda_{max}\left(\mathbf{P}_{\rho i}\right)}{\lambda_{min}\left(\mathbf{P}_{\rho i}\right)}}e^{-\frac{t-t_{0}}{2\lambda_{max}\left(\mathbf{P}_{\rho i}\right)}}\left\Vert\boldsymbol{\varepsilon}_{\rho i}^{2}\left(t_0\right)\right\Vert_2+\left(1-e^{-\frac{t-t_0}{2\lambda_{max}\left(\mathbf{P}_{\rho i}\right)}}\right)\\
&&\times\frac{2\lambda_{max}^2\left(\mathbf{P}_{\rho i}\right)}{\lambda_{min}\left(\mathbf{P}_{\rho i}\right)}\Vert\widetilde{d}_{\boldsymbol{\varepsilon}\rho i}\Vert_{\mathscr{L}_{\infty}} 
\end{eqnarray*}
where $\mathbf{P}_{\rho i}>0$ such that $\mathbf{P}_{\rho i}\mathbf{A}_{\rho i}+\mathbf{A}_{\rho i}^T\mathbf{P}_{\rho i}=-\mathbf{I}_2$. Hence, 
\begin{equation*}
\lim_{t\to\infty} \left\Vert\boldsymbol{\varepsilon}_{\rho i}\right\Vert_2 \leq \lim_{t\to\infty}\left\Vert\boldsymbol{\varepsilon}_{\rho i}^{1}\right\Vert_2 +\lim_{t\to\infty}\left\Vert\boldsymbol{\varepsilon}_{\rho i}^{2}\right\Vert_2\leq \frac{2\lambda_{max}^2\left(\mathbf{P}_{\rho i}\right)}{\lambda_{min}\left(\mathbf{P}_{\rho i}\right)}\Vert\widetilde{d}_{\boldsymbol{\varepsilon}\rho i}\Vert_{\mathscr{L}_{\infty}} 
\end{equation*}
Therefore, $\boldsymbol{\varepsilon}_{\rho i}$ will be ultimately bounded for all $\rho\in\left\{x\text{, }y\text{, }z\right\}\text{, } i\in\mathscr{V}$, which implies $\mathbf{e_p}$ and $\mathbf{e_v}$ are both ultimately bounded. According to the definitions of $\mathbf{\widehat{e}}_{\mathbf{p}i}$, $\mathbf{\widehat{e}}_{\mathbf{v}i}$, $\mathbf{e}_{\mathbf{p}i}$, and $\mathbf{e}_{\mathbf{v}i}$, we have 
\begin{equation}
\mathbf{p}_i-\mathbf{r}_i=\mathbf{\widehat{e}}_{\mathbf{p}i}+\mathbf{e}_{\mathbf{p}i}\quad \mathbf{v}_i-\mathbf{\dot{r}}_i=\mathbf{\widehat{e}}_{\mathbf{v}i}+\mathbf{e}_{\mathbf{v}i}
\end{equation}
Since $\lim_{t\to\infty}\mathbf{\widehat{e}}_{\mathbf{p}i}\to\mathbf{0}$, $\lim_{t\to\infty}\mathbf{\widehat{e}}_{\mathbf{v}i}\to\mathbf{0}$, and $\mathbf{e}_{\mathbf{p}i}$ and $\mathbf{e}_{\mathbf{v}i}$ are both ultimately bounded,  there exist $\epsilon_p>0$ and $\epsilon_v>0$ such that
\begin{equation}
\lim_{t\to\infty} \Vert\mathbf{p}_i-\mathbf{r}_i\Vert_2 \leq\epsilon_p \qquad \text{and}\qquad\lim_{t\to\infty} \Vert\mathbf{v}_i-\mathbf{\dot{r}}_i\Vert_2 \leq \epsilon_v
\end{equation}
In addition, the transfer matrix from $\widetilde{d}_{\boldsymbol{\varepsilon}\rho i}$ to $\boldsymbol{{\varepsilon}}_{\boldsymbol{p}\rho i}$ and $\boldsymbol{{\varepsilon}}_{\boldsymbol{v}\rho i}$ are 
\begin{equation}
\mathbf{G}_{\rho i}\left(s\right)=\frac{1}{s^2+\left(K_{\boldsymbol{p}\rho}+\lambda_iC_{\boldsymbol{p}\rho}\right)s+\left(K_{\boldsymbol{v}\rho}+\lambda_iC_{\boldsymbol{v}\rho}\right)}\left[\begin{array}{c}
1 \\
s
\end{array}
\right] \label{Eq: TF_G_pd}
\end{equation}
By increasing $K_{\boldsymbol{v}\rho}+\lambda_iC_{\boldsymbol{v}\rho}$, the ultimate boundaries for $\boldsymbol{{\varepsilon}}_{\boldsymbol{p}\rho i}$ and $\boldsymbol{{\varepsilon}}_{\boldsymbol{v}\rho i}$ could be reduced according (c.f. \cite{Khalil2002Book}, Page 613). In sum, the bounded close formation flight tracking will be achieved by the proposed control law.
\end{proof}
 In general, a cooperative  formation control has better performance than a leader-follower formation controller. In the closed-loop error dynamics (\ref{Eq: VL_CLdyn_Single_Nom_New}), $K_{\boldsymbol{p}\rho}$ and $K_{\boldsymbol{v}\rho}$ are parameters of the trajectory tracking controller, while $\lambda_iC_{\boldsymbol{p}\rho}$ and $\lambda_iC_{\boldsymbol{v}\rho}$ are gains of the cooperative mechanism. The control parameters $K_{\boldsymbol{p}\rho}$ and $K_{\boldsymbol{v}\rho}$, which allow a UAV to track its reference trajectory, have the same roles to gains in a leader-follower controller. If $K_{\boldsymbol{p}\rho}$ and $K_{\boldsymbol{v}\rho}$ are kept to be constant, the introduction of the cooperative mechanism will potentially increase the gains of the closed-loop error dynamics through $\lambda_iC_{\boldsymbol{p}\rho}$ and $\lambda_iC_{\boldsymbol{v}\rho}$ as shown in (\ref{Eq: VL_CLdyn_Single_Nom_New}). Therefore, a cooperative formation controller could result in much faster responses than a leader-follower controller. As shown in  (\ref{Eq: TF_G_pd}), the existence of the cooperative mechanism could reduce the steady state gain from $\widetilde{d}_{\boldsymbol{\varepsilon}\rho i}$ to $\boldsymbol{{\varepsilon}}_{\boldsymbol{p}\rho i}$. Therefore, a cooperative controller has much smaller position tracking errors than a leader-follower controller, so more accurate close formation control could be achieved. In addition, due to the introduction of virtual leaders, a leader UAV will have much less influence on the tracking performance of a follower UAV in close formation.
\begin{theorem}\label{Thm: VL_Asymp}
asymptotic close formation tracking is achieved by the robust cooperative control (\ref{Eq: General_RobCoopCntrl_Mat}) with the uncertainty and disturbance observer given in (\ref{Eq: UDE2}) and the baseline cooperative control given in (\ref{Eq: VL_CoopCntrl}), if and only if $\lim_{t\to\infty}\dot{d}_{\rho i}=0$ with $\rho\in\left\{x\text{, }y\text{, }z\right\}$ and $i\in\mathscr{V}$.
\end{theorem}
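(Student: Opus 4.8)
The plan is to reduce the formation statement to the single convergence $\mathbf{e_p},\mathbf{e_v}\to\mathbf{0}$ and then to characterize that convergence by $\dot{d}_{\rho i}\to 0$. Because $\mathbf{p}_i-\mathbf{r}_i=\mathbf{\widehat{e}}_{\mathbf{p}i}+\mathbf{e}_{\mathbf{p}i}$ and $\mathbf{v}_i-\mathbf{\dot{r}}_i=\mathbf{\widehat{e}}_{\mathbf{v}i}+\mathbf{e}_{\mathbf{v}i}$, while Theorem \ref{Thm: CoopFilter} already guarantees $\mathbf{\widehat{e}}_{\mathbf{p}i},\mathbf{\widehat{e}}_{\mathbf{v}i}\to\mathbf{0}$ exponentially for every admissible reference, Definition \ref{Def: Formation_VL} is met \emph{exactly} when $\mathbf{e_p}\to\mathbf{0}$ and $\mathbf{e_v}\to\mathbf{0}$. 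The same fact makes $\boldsymbol{\sigma}_{\mathbf{\widehat{e}}}\to\mathbf{0}$ exponentially, so the only term that can keep $\mathbf{e}=[\mathbf{e_p}^T,\mathbf{e_v}^T]^T$ from vanishing is the estimation residual $\mathbf{\widetilde{d}}$. I would record this reduction first, so that the theorem becomes the equivalence $\mathbf{e}\to\mathbf{0}\Leftrightarrow\dot{d}_{\rho i}\to 0$.

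For sufficiency ($\Leftarrow$) I would argue forward along the cascade. Assuming $\dot{d}_{\rho i}\to 0$ for all $\rho$ and $i$, Lemma \ref{Lem: UDE}(3) gives $\widetilde{d}_{\rho i}\to 0$, i.e. $\mathbf{\widetilde{d}}\to\mathbf{0}$. The stacked baseline error obeys a Hurwitz linear system driven by $\mathbf{\widetilde{d}}-\boldsymbol{\sigma}_{\mathbf{\widehat{e}}}$, whose forcing now tends to zero; invoking the converging-input/converging-state property of a Hurwitz linear system (variation of constants, splitting the forcing into a transient and a tail bounded by $\sup_{\tau\ge T}\Vert\mathbf{\widetilde{d}}-\boldsymbol{\sigma}_{\mathbf{\widehat{e}}}\Vert_2$) yields $\mathbf{e}\to\mathbf{0}$. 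Equivalently, in the composite system (\ref{Eq: UDE_error_Vector}) the matrix $\mathbf{A_a}$ is Hurwitz and the input $\mathbf{\dot{d}}\to\mathbf{0}$, so the whole state $\mathbf{e_a}\to\mathbf{0}$; by the reduction, asymptotic tracking follows.

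For necessity ($\Rightarrow$) I would run the same chain backwards, channel by channel in the decoupled coordinates (\ref{Eq: VL_CLdyn_Single_Nom_New}). Asymptotic tracking gives $\mathbf{e_p},\mathbf{e_v}\to\mathbf{0}$, hence $\varepsilon_{\boldsymbol{p}\rho i},\varepsilon_{\boldsymbol{v}\rho i}\to 0$. Since $\bar{\sigma}_{\rho i}\to 0$, the stable transfer function $\mathbf{G}_{\rho i}(s)$ of (\ref{Eq: TF_G_pd}), with finite nonzero position-channel DC gain $1/(K_{\boldsymbol{v}\rho}+\lambda_i C_{\boldsymbol{v}\rho})$, forces the residual $\widetilde{d}_{\boldsymbol{\varepsilon}\rho i}\to 0$, hence $\mathbf{\widetilde{d}}\to\mathbf{0}$. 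Finally the observer error dynamics (the $\mathbf{A_d}$ block, $\dot{\widetilde{d}}_{\rho i}=-\widetilde{d}_{\rho i}/\mathcal{T}_{\rho i}-\dot{d}_{\rho i}$) express $\widetilde{d}_{\rho i}$ as the output of the stable low-pass filter $\mathcal{T}_{\rho i}/(\mathcal{T}_{\rho i}s+1)$ acting on $-\dot{d}_{\rho i}$; since its DC gain $\mathcal{T}_{\rho i}$ is nonzero, a vanishing output forces a vanishing input, i.e. $\dot{d}_{\rho i}\to 0$. A clean way to package both backward steps is the final-value identity $\lim_{t\to\infty}\widetilde{d}_{\rho i}=-\mathcal{T}_{\rho i}\lim_{t\to\infty}\dot{d}_{\rho i}$ whenever the right-hand limit exists.

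The main obstacle is exactly this backward direction: \emph{inverting} the stable filters. Convergence of $\mathbf{e}$ does not by itself force convergence of $\mathbf{\dot{e}_v}$, so one cannot read $\mathbf{\widetilde{d}}$ off the error equation pointwise, and a stable low-pass can in pathological cases map a non-convergent input to a convergent output. I would therefore either (i) carry out the argument under the standing understanding that $\dot{d}_{\rho i}$ possesses a limit as $t\to\infty$, so that the final-value theorem applies and the DC-gain computation is rigorous, or (ii) strengthen Assumption \ref{Assump: UncerDist} with uniform continuity of $\dot{d}_{\rho i}$ and combine the uniform bound on $\mathbf{\widetilde{d}}$ from Lemma \ref{Lem: UDE}(1) with a Barbalat-type argument to upgrade the steady-state identity to a genuine pointwise limit. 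Handling this regularity is the crux; the forward implication and the reduction are routine given Theorems \ref{Thm: CoopFilter} and \ref{Thm: VL_ISS}.
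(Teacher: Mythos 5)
Your proposal follows essentially the same route as the paper's proof: the same reduction of Definition \ref{Def: Formation_VL} to $\mathbf{e}\to\mathbf{0}$ via Theorem \ref{Thm: CoopFilter} and the decomposition $\mathbf{p}_i-\mathbf{r}_i=\mathbf{\widehat{e}}_{\mathbf{p}i}+\mathbf{e}_{\mathbf{p}i}$; the same sufficiency argument (item 3 of Lemma \ref{Lem: UDE} gives $\mathbf{\widetilde{d}}\to\mathbf{0}$, and the Hurwitz matrix $\mathbf{A}$ of (\ref{Eq: MultErr_VL}) --- equivalently the composite Hurwitz matrix $\mathbf{A_a}$ --- converts a converging input into a converging state); and the same two-stage inversion for necessity ($\mathbf{e}\to\mathbf{0}$ forces $\mathbf{\widetilde{d}}\to\mathbf{0}$, which in turn forces $\mathbf{\dot{d}}\to\mathbf{0}$ through the observer error equation $\mathbf{\dot{\widetilde{d}}}=\mathbf{A_d}\mathbf{\widetilde{d}}-\mathbf{\dot{d}}$). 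The only cosmetic difference is that you run the necessity chain channel-by-channel in the decoupled coordinates, using the nonzero DC gains of $\mathbf{G}_{\rho i}(s)$ in (\ref{Eq: TF_G_pd}) and of the observer filter $\mathcal{T}_{\rho i}/\left(\mathcal{T}_{\rho i}s+1\right)$, whereas the paper argues directly on the stacked systems.

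The one substantive point is the regularity issue you flag in the necessity direction, and you are right about it: the paper's ``only if'' argument consists of exactly the two filter-inversion steps you identify as the crux, and it asserts them without justification --- it states that input-to-state stability of (\ref{Eq: MultErr_VL}) means one ``must have'' $\lim_{t\to\infty}\Vert\mathbf{\widetilde{d}}\Vert_2=0$, and that $\lim_{t\to\infty}\Vert\mathbf{\dot{d}}\Vert_2=0$ ``is a necessary condition'' for $\mathbf{\widetilde{d}}\to\mathbf{0}$. Neither implication holds for arbitrary bounded forcing (a Hurwitz filter maps an input such as $\sin\left(t^2\right)$ to an output converging to zero, precisely the pathology you describe), and ISS only bounds the state by the input, so it yields converging-input/converging-state, not the converse. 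Your hesitation is therefore not a defect of your proposal but a genuine gap in the paper's own necessity proof, and either of your repairs --- assuming $\dot{d}_{\rho i}$ possesses a limit so that the final-value identity $\lim_{t\to\infty}\widetilde{d}_{\rho i}=-\mathcal{T}_{\rho i}\lim_{t\to\infty}\dot{d}_{\rho i}$ applies, or strengthening Assumption \ref{Assump: UncerDist} with uniform continuity and invoking a Barbalat-type argument --- is what is needed to make that direction rigorous. With either patch recorded explicitly, your proof is at least as complete as the one in the paper.
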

\begin{proof}
\emph{Necessity:} The system matrix  $\mathbf{A}$ of the closed-loop error system (\ref{Eq: MultErr_VL}) is Hurwitz, so the system  (\ref{Eq: MultErr_VL}) is input-to-state stable with respect to $\mathbf{\widetilde{d}}$. Therefore, one must have $\lim_{t\to\infty} \Vert\mathbf{\widetilde{d}}\Vert_2 =0$ to guarantee $\lim_{t\to\infty} \Vert\mathbf{p}_i-\mathbf{r}_i\Vert_2 =0$ and $\lim_{t\to\infty} \Vert\mathbf{v}_i-\mathbf{\dot{r}}_i\Vert_2 =0$, namely $\lim_{t\to\infty} \Vert\mathbf{e}\Vert_2 =0$. According to (\ref{Eq: UDE_error_Vector}), $\lim_{t\to\infty} \Vert\mathbf{\dot{d}}\Vert_2 =0$ is a necessary condition for $\lim_{t\to\infty} \Vert\mathbf{\widetilde{d}}\Vert_2 =0$. Hence, to ensure  $\lim_{t\to\infty} \Vert\mathbf{p}_i-\mathbf{r}_i\Vert_2 =0$ and $\lim_{t\to\infty} \Vert\mathbf{v}_i-\mathbf{\dot{r}}_i\Vert_2 =0$, the necessary condition is  $\lim_{t\to\infty} \Vert\mathbf{\dot{d}}\Vert_2 =0$, namely $\lim_{t\to\infty}\dot{d}_{\rho i}$, $\forall \rho\in\left\{x\text{, }y\text{, }z\right\}$ and $\forall  i\in\mathscr{V}$.

\emph{Sufficiency:}
If $\lim_{t\to\infty}\dot{d}_{\rho i}$ with $\rho\in\left\{x\text{, }y\text{, }z\right\}$ and $i\in\mathscr{V}$, it is readily obtained that $\lim_{t\to\infty} \Vert\mathbf{\widetilde{d}}\Vert_2 =0$. Since $\mathbf{A}$ is Hurwitz, $\lim_{t\to\infty} \Vert\mathbf{e}\Vert_2 =0$ if $\lim_{t\to\infty} \Vert\mathbf{\widetilde{d}}\Vert_2 =0$. According to Lemma \ref{Lem: Cascaded-system},  $\lim_{t\to\infty}\dot{d}_{\rho i}=0$ with $\rho\in\left\{x\text{, }y\text{, }z\right\}$ and $i\in\mathscr{V}$ is sufficient enough to ensure $\lim_{t\to\infty} \Vert\mathbf{e}\Vert_2 =0$, namely $\lim_{t\to\infty} \Vert\mathbf{p}_i-\mathbf{r}_i\Vert_2 =0$ and $\lim_{t\to\infty} \Vert\mathbf{v}_i-\mathbf{\dot{r}}_i\Vert_2 =0$.

Therefore, $\lim_{t\to\infty}\dot{d}_{\rho i}=0$ with $\rho\in\left\{x\text{, }y\text{, }z\right\}$ and $i\in\mathscr{V}$ is a necessary and sufficient condition to gaurantee close formation flight tracking.
\end{proof}

\begin{table}
\centering \caption{UAV parameters} \label{TAB: F16Para} \footnotesize
\begin{tabular}{ccccc}
\toprule
\toprule
Parameter       &    Wing area  ($m^2$) 	&     Wing span   ($m$)& Mass 	($kg$)  & Drag coefficient       	\\ \toprule
 Value		&         27.87    &	    9.144		  &    9295.44 		&      0.0794 \\
\toprule
\toprule
\end{tabular}\vspace{-5mm}
\end{table} 
\section{Numerical simulations} \label{Sec: NumSim}
This section presents numerical simulation results which demonstrate the efficiency of the proposed virtual leader-based robust cooperative formation controller for close formation flight.  A group of virtual leaders are introduced, where each virtual leader will provide reference signals for a corresponding UAV in close formation as shown in Figure \ref{Fig: VL_Topology}. In the numerical simulations, the close formation flight problem of five UAVs is considered, namely $\mathscr{V}=\left\{1\text{, }2\text{, }3\text{, }4\text{, }5\right\}$. Necessary UAV parameters are given in Table \ref{TAB: F16Para}.  The formation aerodynamic disturbances are assumed to be unknown and generated using the aerodynamic model presented by \cite{Zhang2017JA}. 

According to the aerodynamic analysis in \cite{Zhang2017JA}), the optimal formation shape for close formation flight of five UAVs is given in Figure \ref{Fig: VL_Shape+Topology}.a. All UAVs are required to fly at the same altitude, and the optimal horizontal relative positions between two UAVs are defined in the body frame of the first virtual leader as shown in Figure \ref{Fig: VL_Shape+Topology}.a. The communication topology is illustrated in Figure \ref{Fig: VL_Shape+Topology}.b.   For any UAV $i$ and $j$ with $i$, $j\in\mathscr{V}$ and $i\neq j$, if there is a connection between them in Figure \ref{Fig: VL_Shape+Topology}.b, it implies that aicraft $i$ and $j$ are able to communicate with each other, and meanwhile $a_{ij}=1$, otherwise, $a_{ij}=0$. The adjacency matrix and degree matrix of the communication topology shown in Figure \ref{Fig: VL_Shape+Topology}.b are
\begin{equation*}
\Scale[0.85]{
\boldsymbol{\mathcal{A}}=\left[\begin{array}{ccccc}
0 & 1 & 1 & 0 & 0 \\
1 & 0 & 1 & 1 & 0 \\
1 & 1 & 0 & 1 & 1 \\
0 & 1 & 1 & 0 & 1 \\
0 & 0 & 1 & 1 & 0
\end{array}\right] \quad\quad\quad
\boldsymbol{\mathcal{D}}=\left[\begin{array}{ccccc}
2 & 0 & 0 & 0 & 0 \\
0 & 3 & 0 & 0 & 0 \\
0 & 0 & 4 & 0 & 0 \\
0 & 0 & 0 & 3 & 0 \\
0 & 0 & 0 & 0 & 2 
\end{array}\right]}
\end{equation*}\vspace{-7mm}
\begin{figure}[tbph]
\subfigure[The optimal formation shape]{\includegraphics[width=0.5\textwidth]{./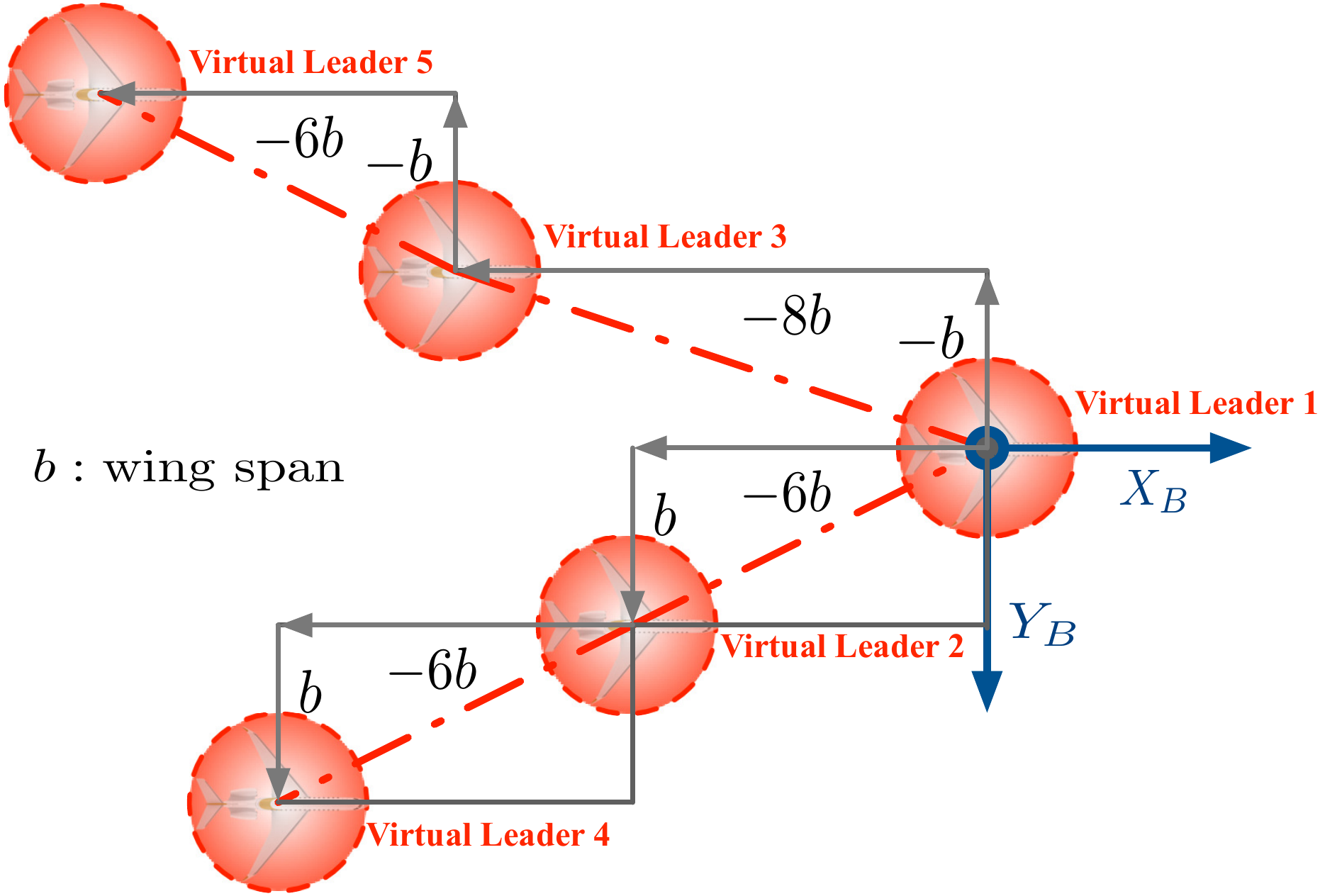}}\hfill
\subfigure[Communication topology]{\includegraphics[width=0.35\textwidth]{./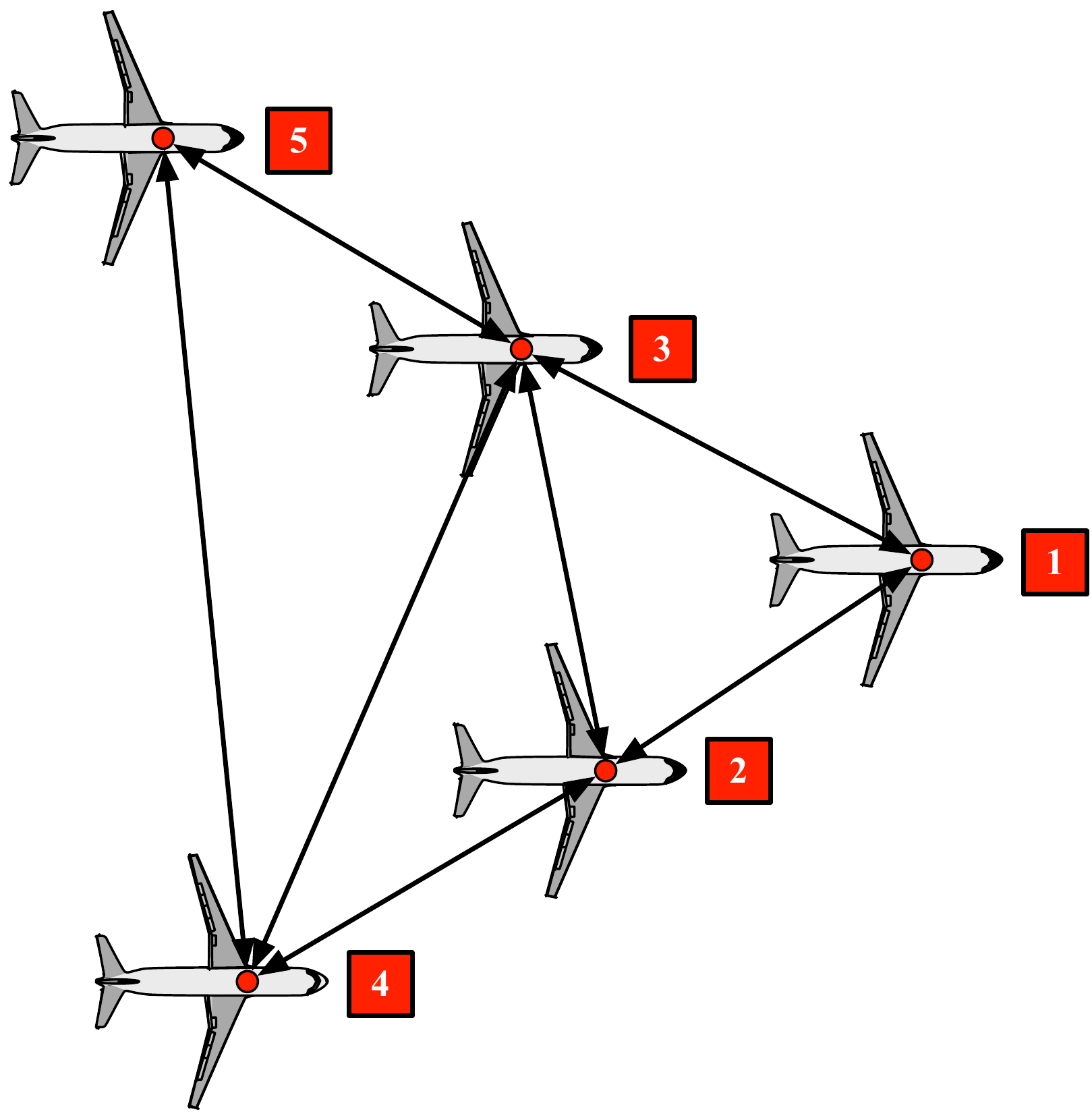}}\\ \vspace{-5mm}
\caption{ Virtual leader-based cooperative control}\vspace{-4mm}
\label{Fig: VL_Shape+Topology}
\end{figure}

If we treat the optimal formation shape in Figure \ref{Fig: VL_Shape+Topology}.a as a rigid body, the motion of each virtual leader could in general be resolved into a rotational motion around the formation geometric center and a translational motion which is equal to the translational motion of the formation center as shown in Figure \ref{Fig: VL_Motions}.a. The formation trajectory is introduced on the formation geometric center and described using the following navigation model.
\begin{equation}
\left\{\begin{array}{lll}
\dot{x}_{c}=V_{c} \cos{\gamma_{c}}\cos{\psi_{c}} &\dot{y}_{c}=V_{c} \cos{\gamma_{c}}\sin{\psi_{c}} & \dot{z}_{c}=-V_{c} \sin{\gamma_{c}}\\
\dot{V}_{c}= a_{Vc} &
\dot{\gamma}_{c}= a_{\gamma c} &
\dot{\psi}_{c}=a_{\psi c} 
\end{array}\right.\label{Eq: SysDyn_VL_Center}
\end{equation}
where $x_{c}$, $y_{c}$, and $z_{c}$ represent the position coordinates of the optimal formation center in the inertial frame, $V_{c}$, $\gamma_{c}$, and $\psi_{c}$ denote the ground velocity, flight path angle, and heading angle, respectively, and $a_{V c}$, $a_{\gamma c}$, and $a_{\psi c}$ specify the acceleration, flight path angular rate, and heading angular rate, respectively. 
The navigation model (\ref{Eq: SysDyn_VL_Center}) describes the motion of the geometric centre of the optimal formation shape shown in Figure \ref{Fig: VL_Shape+Topology}.a. The motion of all virtual leaders are calculated based on the motion of the geometric centre given in (\ref{Eq: SysDyn_VL_Center}). Let $\boldsymbol{p}_{ri}$ be the position vector of the $i$-th virtual leader in the body frame of the navigation model of the geometric centre as shown in Figure \ref{Fig: VL_Motions}.b.  According to Figure \ref{Fig: VL_Shape+Topology}.a, we have  $\boldsymbol{p}_{r1}=\left[8b\text{, }0\text{, }0\right]^T$, $\boldsymbol{p}_{r2}=\left[2b\text{, }b\text{, }0\right]^T$, $\boldsymbol{p}_{r3}=\left[0\text{, }-b\text{, }0\right]^T$, $\boldsymbol{p}_{r4}=\left[-4b\text{, }2b\text{, }0\right]^T$, and $\boldsymbol{p}_{r5}=\left[-6b\text{, }-b\text{, }0\right]^T$.
where $\boldsymbol{p}_{ri}$ are all constant for $i=1$, $\ldots$, $5$.

 Initially, $x_c\left(0\right)=26.87$ $m$, $y_c\left(0\right)=200$ $m$, $z_c\left(0\right)=-5000$ $m$, $V_c\left(0\right)=120$ $m/s$, and $\gamma_c\left(0\right)=\psi_c\left(0\right)=0$ $rad$. The acceleration is zero, namely $a_{Vc} =0 $, while angular rate signals are 
\begin{equation}
a_{\gamma c} =\left\{\begin{array}{rr}
\frac{\pi}{60}   & 10<t\leq 45 \\
-\frac{\pi}{60}  & 45<t\leq 80 \\
0			& otherwise
\end{array}\right.\text{, }\quad a_{\psi c} =\left\{\begin{array}{rr}
\frac{\pi}{1080}   & 10<t\leq 40 \\
-\frac{\pi}{1080}  & 50<t\leq 80 \\
0			& otherwise
\end{array}\right. \label{Eq: VL_Sim_Acc}
\end{equation}
Cooperative filter gains  are  $\boldsymbol{\kappa}_{\mathbf{p}}=diag\left\{1\text{, }1\text{, }1\right\}$, $\boldsymbol{\kappa}_{\mathbf{v}}=diag\left\{2.5\text{, }2.5\text{, }2.5\right\}$, $\boldsymbol{c}_{\mathbf{v}}=diag\left\{0.5\text{, }0.5\text{, }0.5\right\}$, and $\boldsymbol{c}_{\mathbf{p}}=diag\left\{1.25\text{, }1.25\text{, }1.25\right\}$. 
\begin{table}[tbp]
\centering \caption{Initial conditions} \label{TAB: Initial_VL} \footnotesize
\begin{tabular}{ccccccc}
\toprule
\toprule
    UAV    	&   \multicolumn{3}{c}{Position ($m$)} &     $V_T$ ($m/s$) 	&  $\gamma$ ($rad$)     	& $\psi$  ($rad$)    \\ 
      $\#$		&	$x$      & 	    $y$    & 	$z$ 	      &	   			&	     			      	&             		  \\ \toprule
	1		&   	190	    &	   190     &    -5005	      &           121		& 			0		&    		$0$		  \\
	2		&   	155	    &	   215     &    -5015	      &           116		& 			0		&    		$0$		  \\
	3		&   	140	    &	   182     &    -5005	      &           115		& 			0		&    $\frac{\pi}{120}$		  \\
	4		&   	85	    &	   225     &    -5015	      &           119		& 			0		&    		0		  \\
	5		&   	65	    &	   172     &    -5015	      &           120		& 			0		&    $\frac{\pi}{100}$		  \\
\toprule
\toprule
\end{tabular} \vspace{-5mm}
\end{table}

The initial conditions of the five UAVs are listed in Table \ref{TAB: Initial_VL}, while the same group of initial conditions are chosen for the cooperative filters. At the beginning all UAV is at level and straight flight with the original thrust $T_0=16954$ N. The same time constants are chosen for the uncertainty and disturbance observers of all UAVs, namely $\mathcal{T}_{\rho i}=0.2$ s with $\rho\in\left\{x\text{, }y\text{, }z\right\}$ and $i\in\mathscr{V}$. The baseline control parameters are given in Table \ref{TAB: CntrlPara_VL}. The trajectory tracking responses of the five UAVs are shown in Figure \ref{Fig: VL_TrajResp}, while the UAV trajectories at four crucial time periods are highlighted in Figure \ref{Fig: VL_Traj_TimeP}. Position tracking error responses are illustrated in Figures \ref{Fig: VL_LongP}, \ref{Fig: VL_LatP} and \ref{Fig: VL_VertP}, respectively, while the velocity tracking errors are presented in Figures \ref{Fig: VL_LongV}, \ref{Fig: VL_LatV} and \ref{Fig: VL_VertV}, respectively. The corresponding control inputs are given in Figures \ref{Fig: VL_Thrust}, \ref{Fig: VL_Lift} and \ref{Fig: VL_Mu}. Obviously, close formation flight tracking has been achieved by the proposed virtual-leader based robust cooperative control method.
\begin{table}[tbph]
\vspace{-6mm}
\centering \caption{Control parameters} \label{TAB: CntrlPara_VL} \footnotesize
\begin{tabular}{lccccccc}
\toprule
\toprule
   Parameters	&& $K_{\boldsymbol{p}x}$ & $K_{\boldsymbol{p}y}$ & 	$K_{\boldsymbol{p}z}$ & $K_{\boldsymbol{v}x}$	& $K_{\boldsymbol{v}y}$ & $K_{\boldsymbol{v}z}$  		  \\ \toprule
	Value       &&  	0.25	    	 &  		0.4	    	 &  		0.3	    	 &  		1.5	    	 &  		1.75	    	 &  		1.75	    	 	 \\ \toprule
\toprule
	Parameters & &$C_{\boldsymbol{p}x}$ & $C_{\boldsymbol{p}y}$ & $C_{\boldsymbol{p}z}$ & $C_{\boldsymbol{v}x}$ & $C_{\boldsymbol{v}y}$ & $C_{\boldsymbol{v}z}$ \\ \toprule
	Value       & &		0.15		 & 		0.15		& 			0.15		& 	0.55		 & 		0.55		 & 		0.55	  \\
\toprule
\toprule
\end{tabular}
\end{table}\vspace{-8mm}
\begin{figure}[h]
\centering
 \includegraphics[width=0.85\textwidth]{./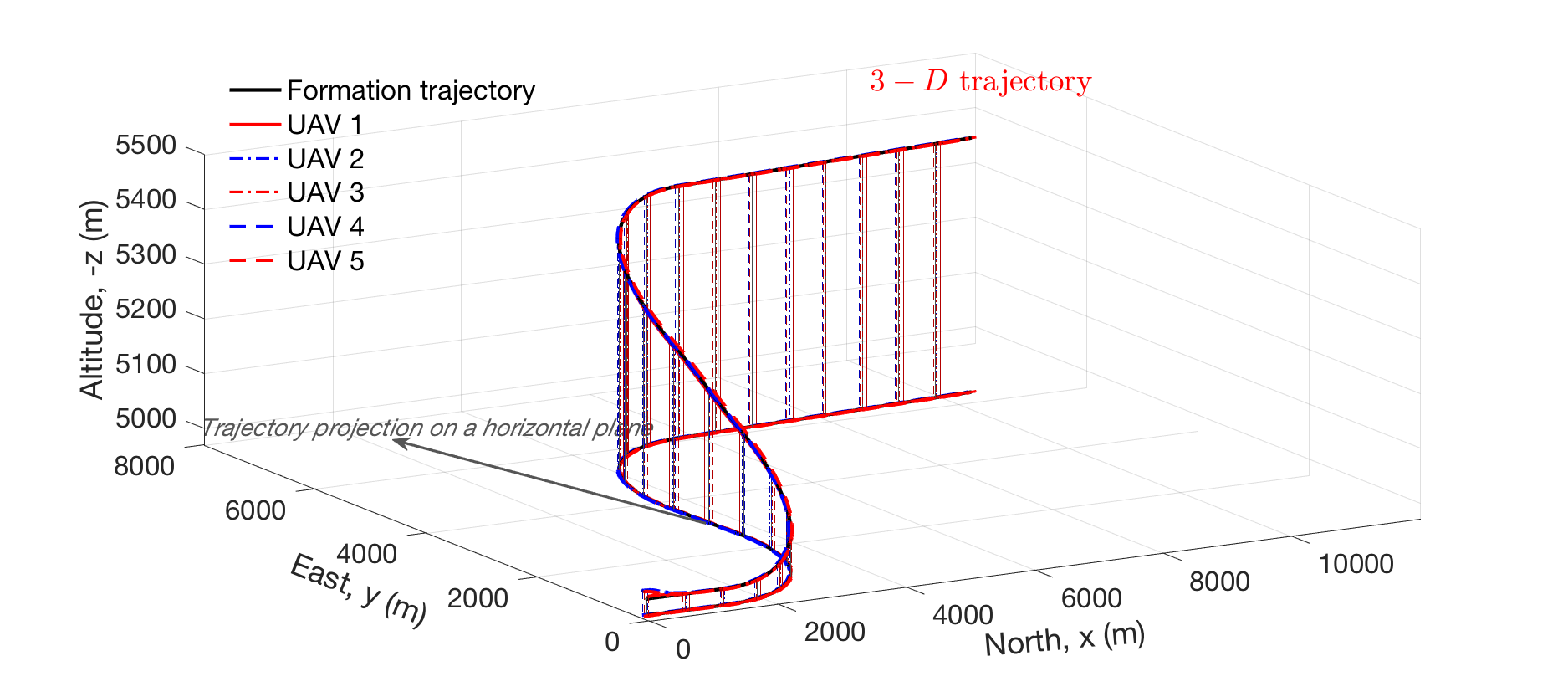} \\ \vspace{-5mm}
\caption{Trajectories of all five UAVs}\vspace{-5mm}
\label{Fig: VL_TrajResp}
\end{figure} 
\begin{figure}[h!]
\centering
\begin{tabular}{cc}
 \includegraphics[width=0.495\textwidth]{./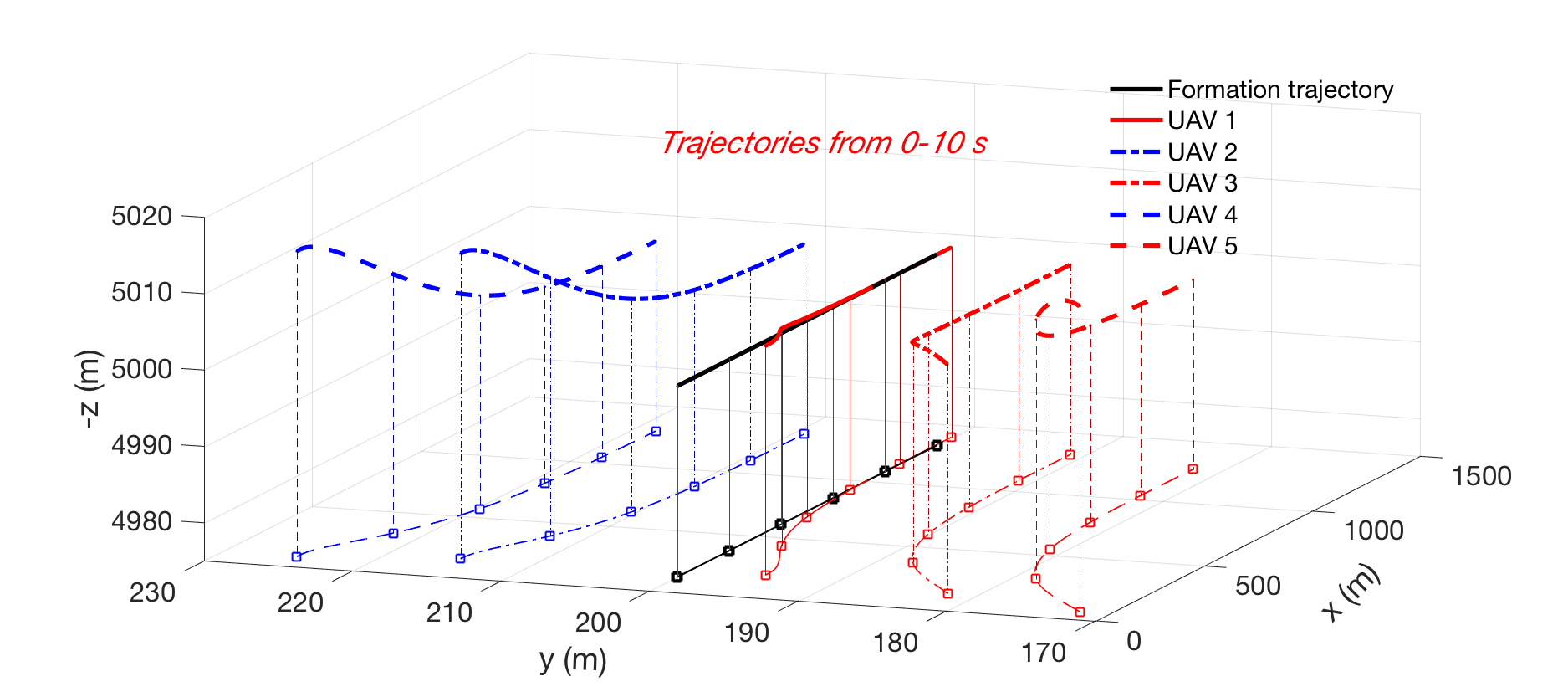}  \vspace{-1mm}&
  \includegraphics[width=0.495\textwidth]{./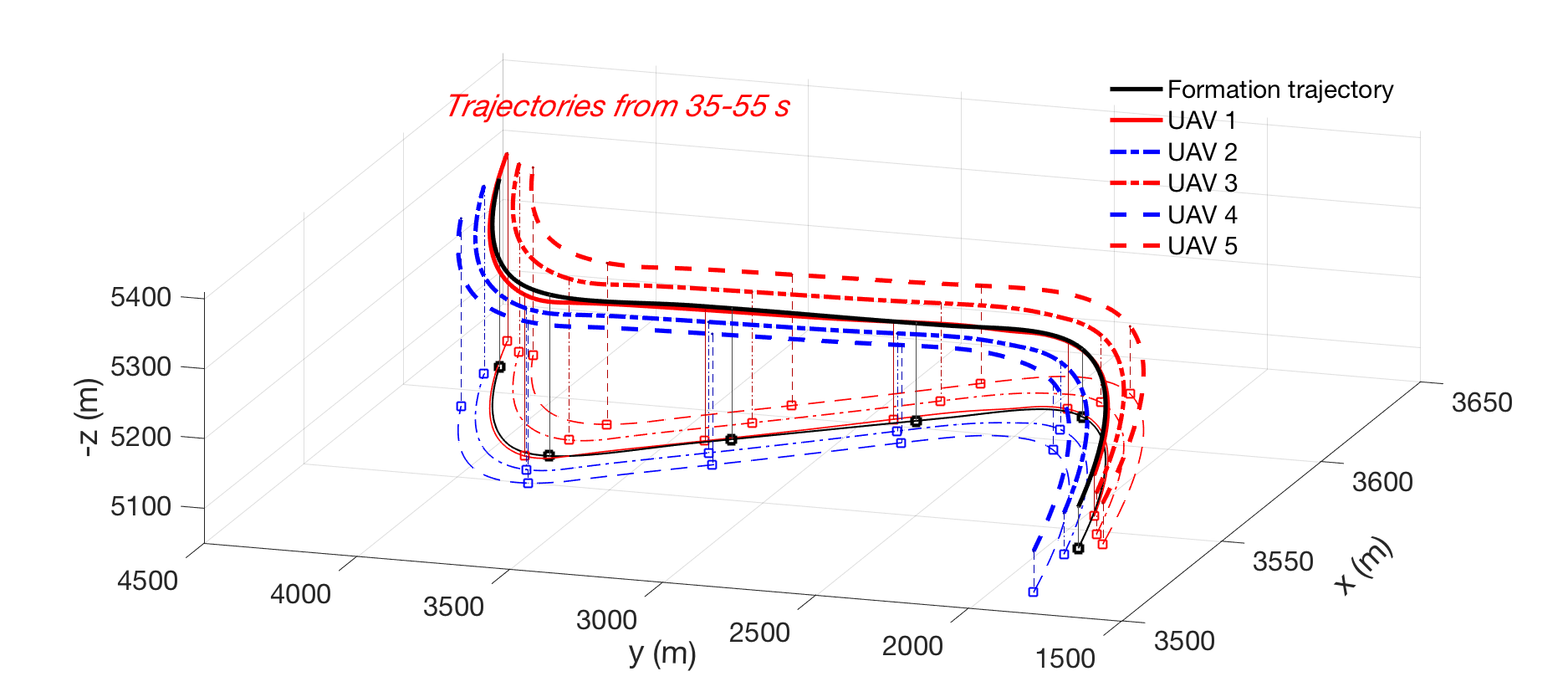}  \vspace{-1mm}\\ 
    \includegraphics[width=0.495\textwidth]{./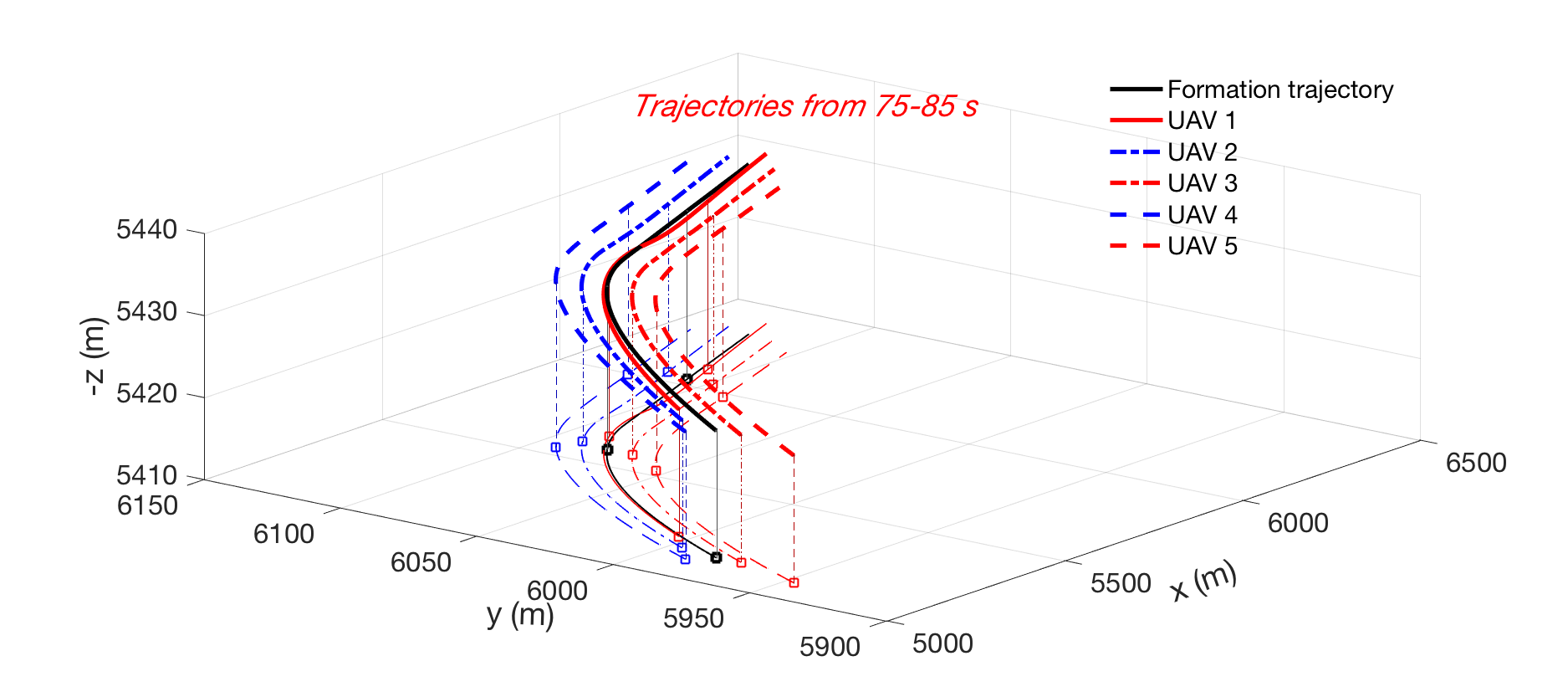}& 
      \includegraphics[width=0.495\textwidth]{./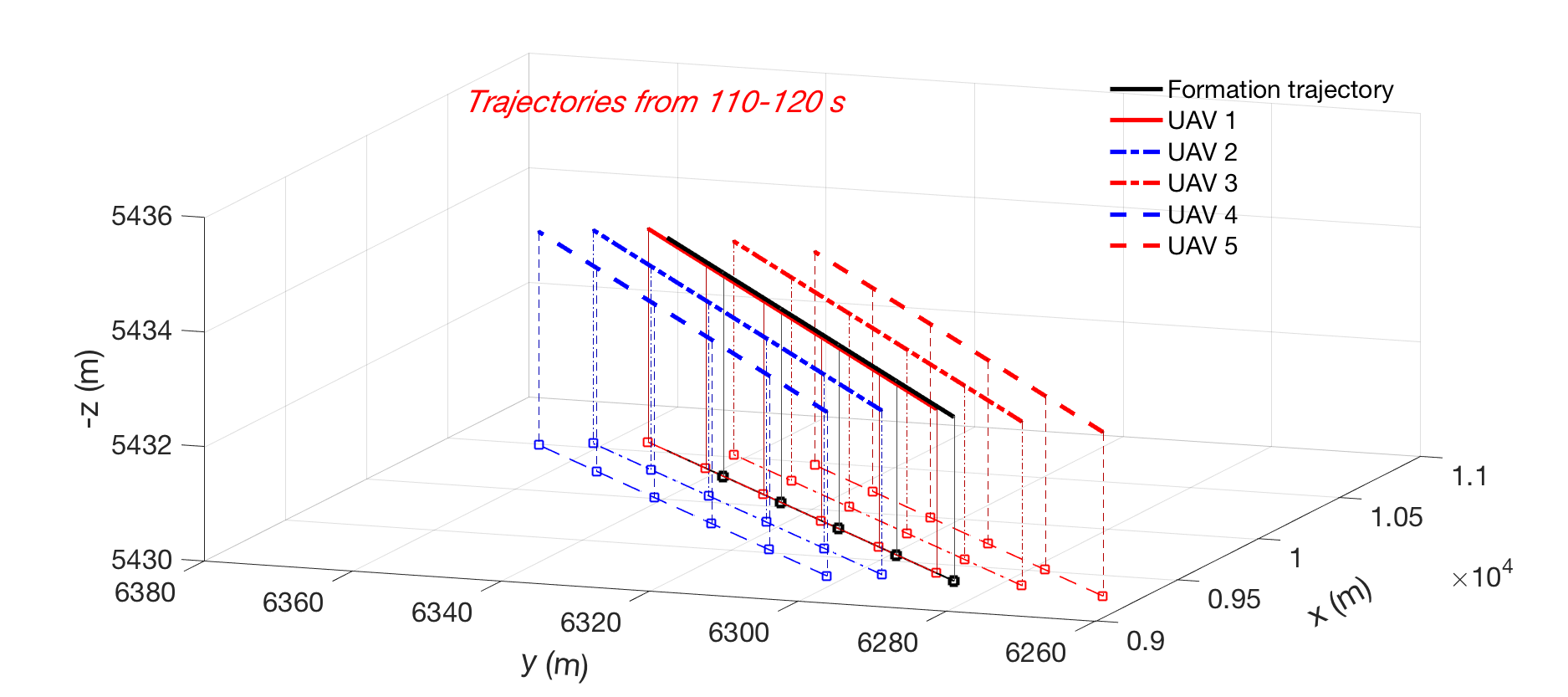}  \\
 \end{tabular}  \vspace{-5mm}
\caption{UAV trajectories at different time periods}  \vspace{-3mm}
\label{Fig: VL_Traj_TimeP}
\end{figure}

As indicated in Figures \ref{Fig: VL_LongV}-\ref{Fig: VL_VertV}, oscillations are observed when $t=10$ s, $40$ s, and $80$ s. It is because the reference trajectory signals are not smooth enough. According to (\ref{Eq: VL_Sim_Acc}), the angular rate signals $a_{\gamma c}$ and $a_{\chi c}$ have sudden changes at  $t=10$ s, $40$ s, and $80$ s.  The real acceleration of each virtual leader should be $\mathbf{\ddot{r}}_i = \mathbf{\ddot{r}}_c+\mathbf{\bunderline{C}}_{WI}^T\left(\psi_c\right)\boldsymbol{\omega}_c^\times\boldsymbol{\omega}_c^\times\boldsymbol{p}_{ri}+\mathbf{\bunderline{C}}_{WI}^T\left(\psi_c\right)\boldsymbol{\dot{\Omega}}_c^\times\boldsymbol{\boldsymbol{p}}_{ri}$ where the third term $\mathbf{\bunderline{C}}_{WI}^T\left(\psi_c\right)\boldsymbol{\dot{\Omega}}_c^\times\boldsymbol{\boldsymbol{p}}_{ri}$ represents the impact of changes in the angular rates of the reference trajectories. However,  $a_{\gamma c}$ and $a_{\chi c}$ are not changed smoothly, so $\mathbf{\bunderline{C}}_{WI}^T\left(\psi_c\right)\boldsymbol{\dot{\Omega}}_c^\times\boldsymbol{p}_{ri}$ will be infinity at $t=10$ s, $40$ s, and $80$ s. In order to avoid this infinity issue, the acceleration of each virtual leader is estimated by $\mathbf{\ddot{r}}_i = \mathbf{\ddot{r}}_c+\mathbf{\bunderline{C}}_{WI}^T\left(\psi_c\right)\boldsymbol{\omega}_c^\times\boldsymbol{\omega}_c^\times\boldsymbol{p}_{ri}$. Therefore, accelerations of all virtual leaders are underestimated at $t=10$ s, $40$ s, and $80$ s, which results in the oscillations in the responses. 

When in close formation flight, the very first UAV (UAV 1 in the simulations) is not flying  at any trailing vortices of other UAV, so it is the only UAV which doesn't receive any aerodynamic benefits. The first UAV has the same performance as a UAV at solo flight with the same conditions. Therefore, it could be used as an benchmark to show the benefits of close formation flight. In comparison with UAV 1, other UAV could experience at least $5.5\%$ decrease in their thrust inputs in close formation flight as shown in Figure \ref{Fig: VL_Thrust}. Hence, close formation flight could help follower UAV to reduce their drag and eventually help to save energies.
\begin{figure}[H]
\centering
 \includegraphics[width=1\textwidth]{./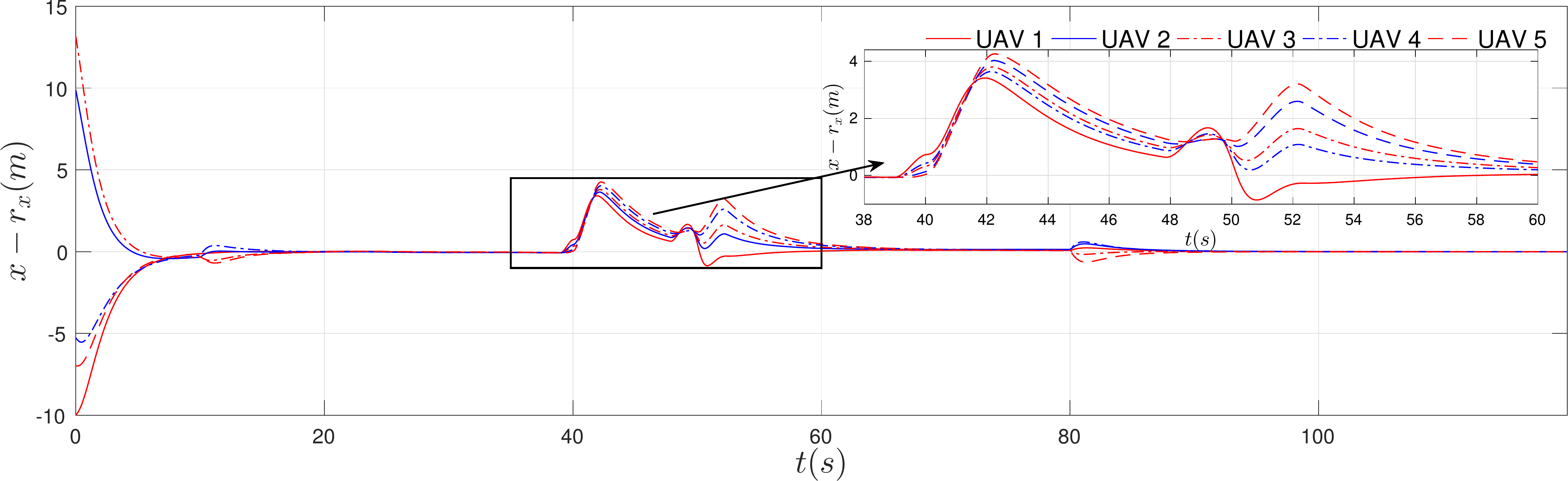} \vspace{-8mm}
\caption{Longitudinal position tracking errors}
\label{Fig: VL_LongP}\vspace{-5mm}
\end{figure}
\begin{figure}[H]
\centering
 \includegraphics[width=1\textwidth]{./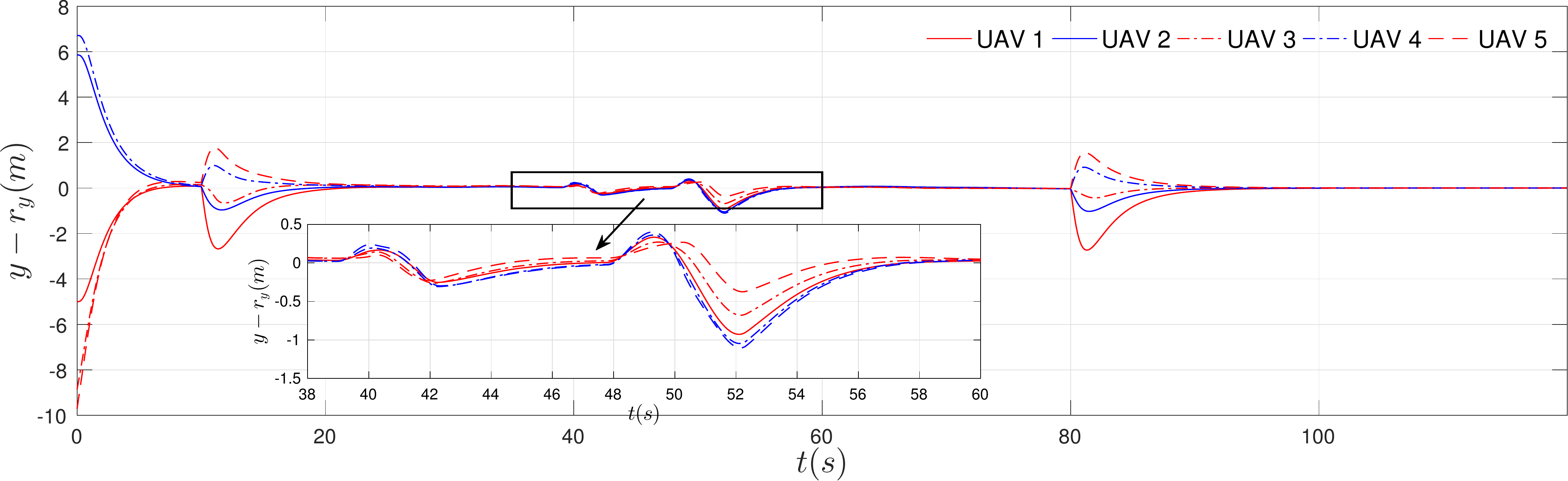} \vspace{-8mm}
\caption{Lateral position tracking errors}
\label{Fig: VL_LatP}\vspace{-5mm}
\end{figure}
\begin{figure}[H]
\centering
 \includegraphics[width=1\textwidth]{./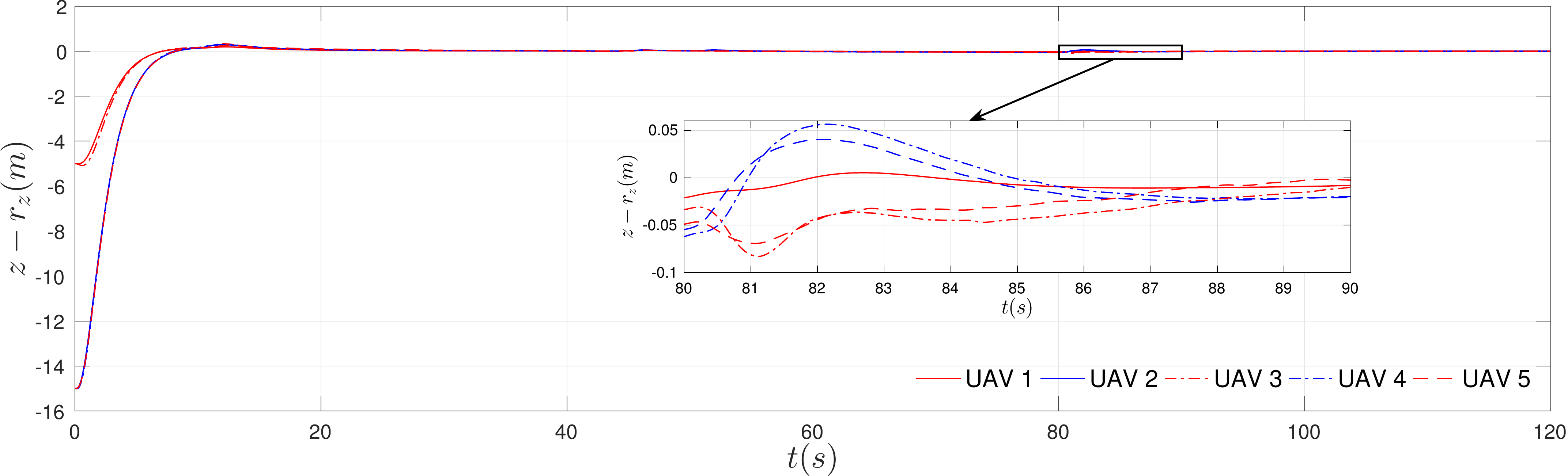} \vspace{-8mm}
\caption{Vertical position tracking errors}
\label{Fig: VL_VertP}
\end{figure}
 
\begin{figure}[H]
\begin{minipage}{0.46\textwidth}
\centering
 \includegraphics[width=1\textwidth]{./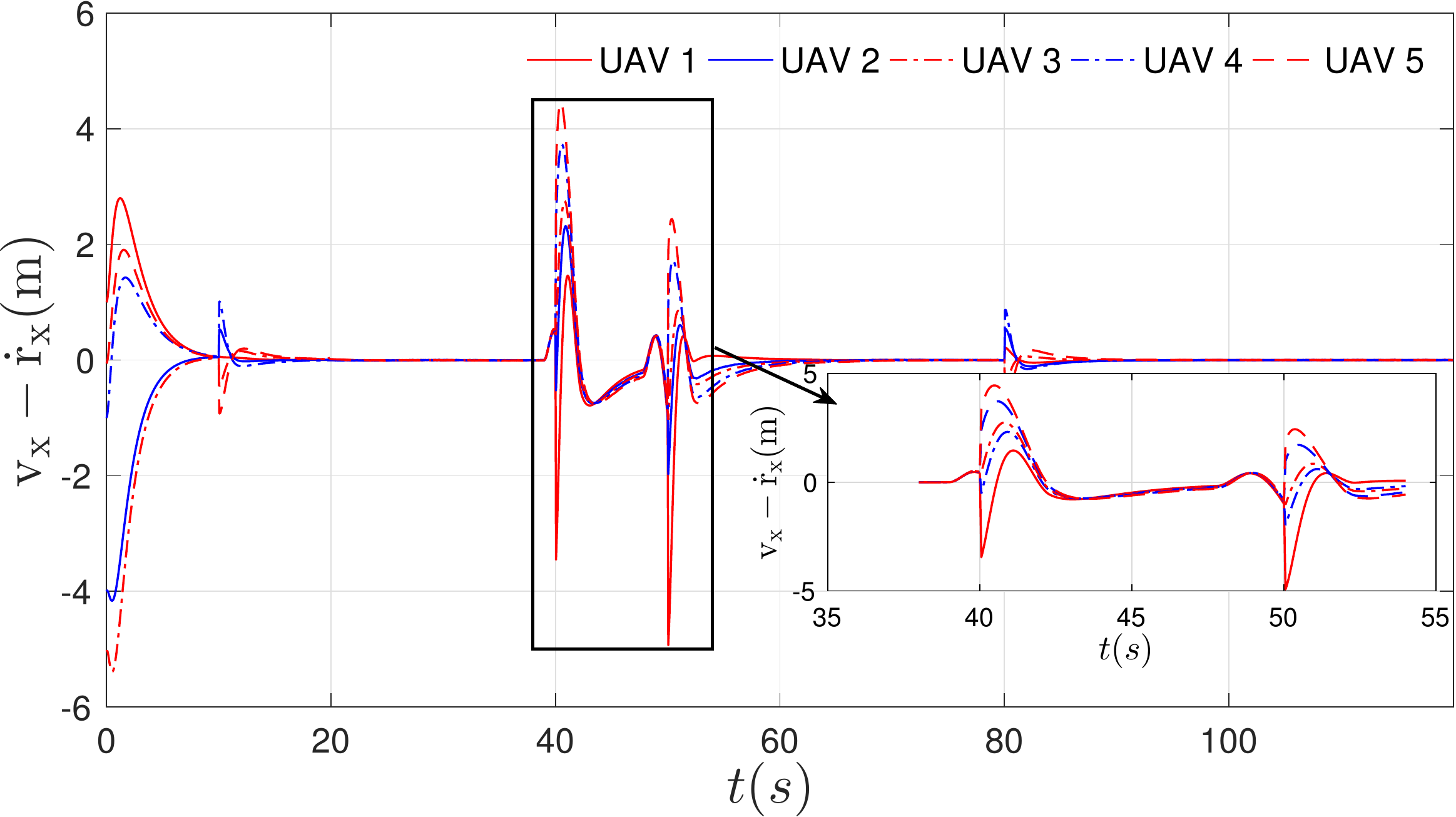} \\ \vspace{-3mm}
\caption{Velocity errors in $X_I$ axis}
\label{Fig: VL_LongV}
\end{minipage}\hfill
\begin{minipage}{0.46\textwidth}
\centering
\includegraphics[width=1\linewidth]{./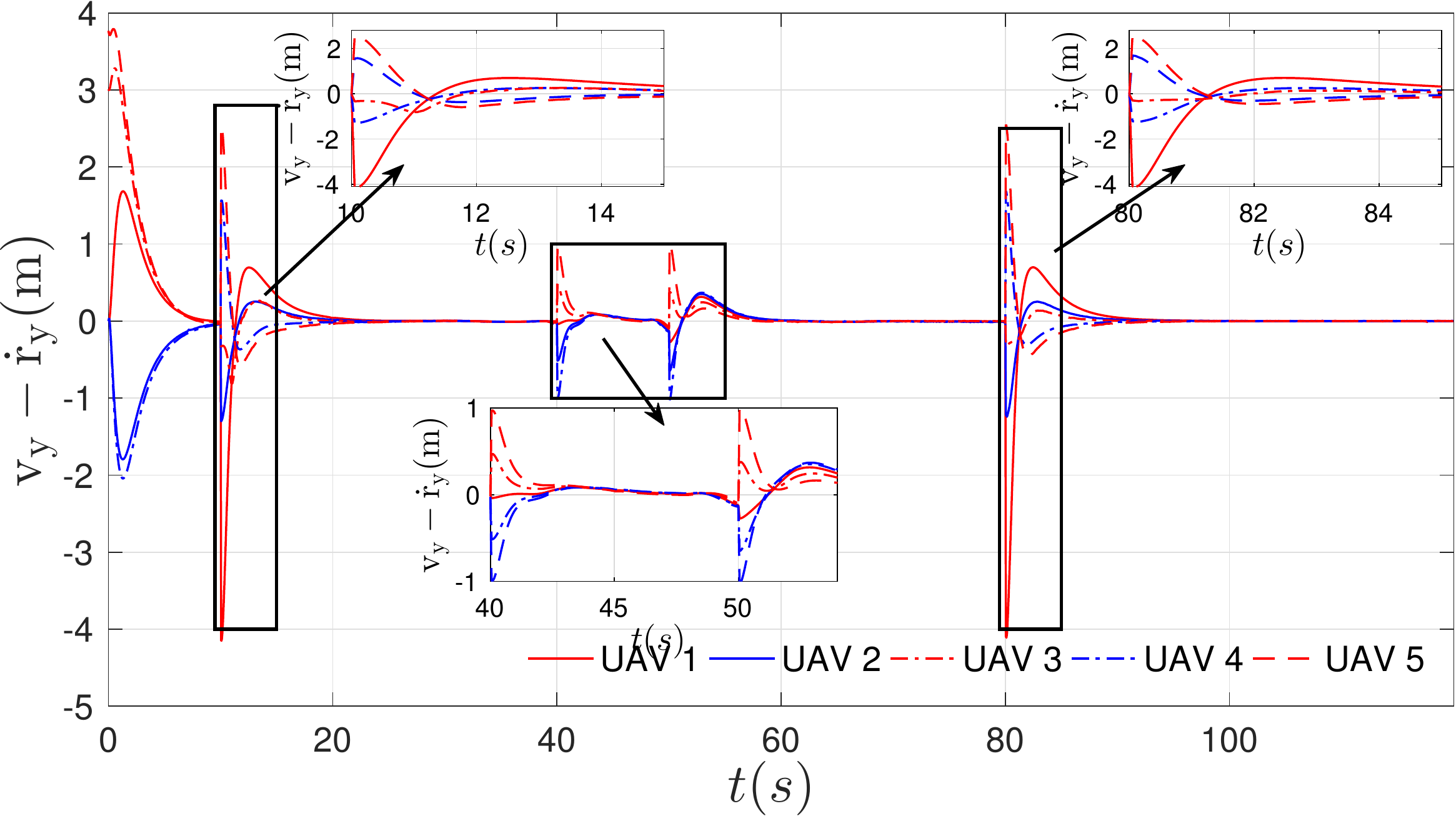}  \\ \vspace{-3mm}
\caption{Velocity errors in $Y_I$ axis}
\label{Fig: VL_LatV} 
\end{minipage}\\ \vspace{-5mm}
\end{figure}
\begin{figure}[H]
\centering
\begin{minipage}{0.46\textwidth}
\centering
 \includegraphics[width=1\textwidth]{./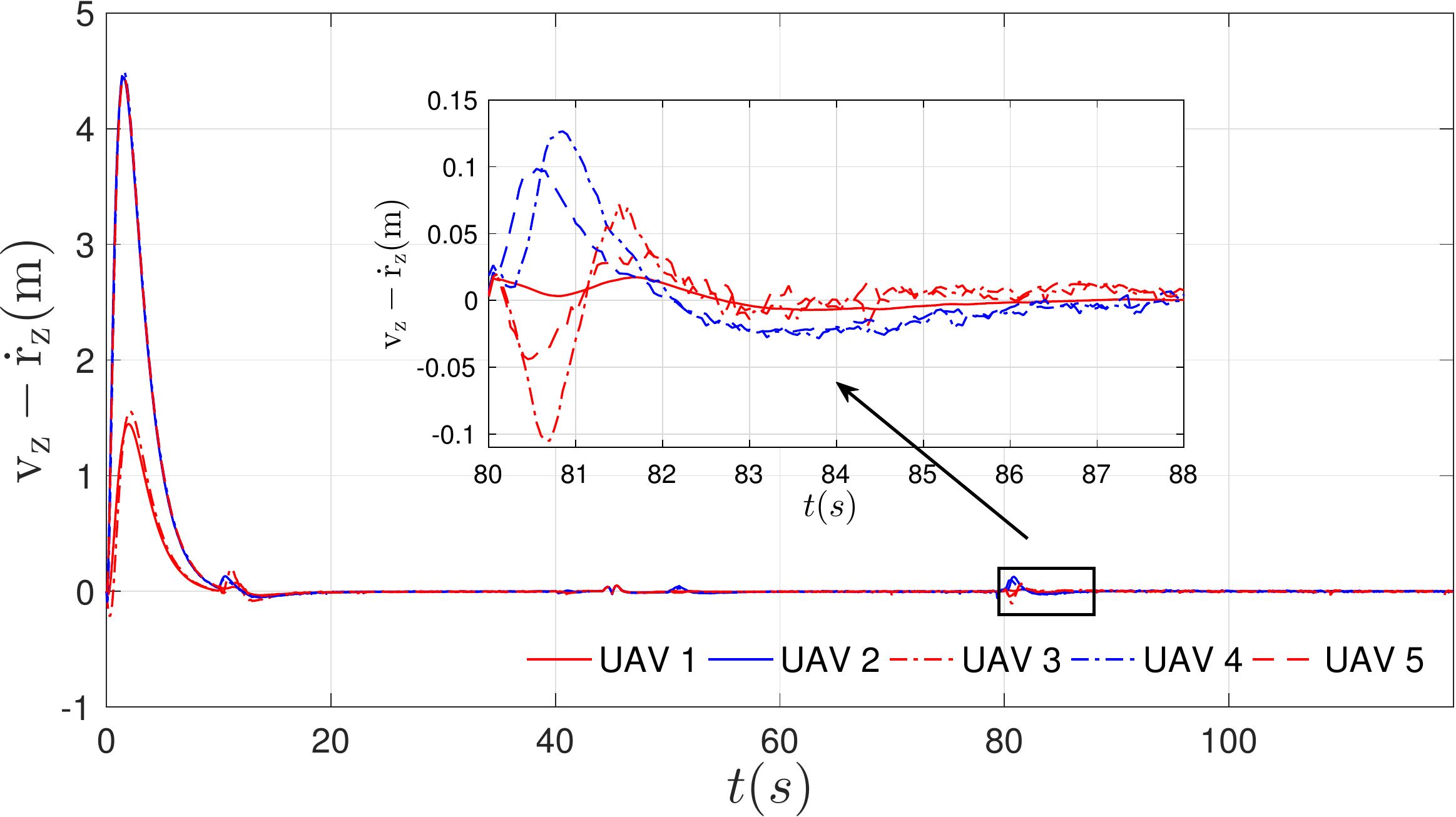} \\ \vspace{-3mm}
\caption{Velocity errors in $Z_I$ axis}
\label{Fig: VL_VertV}
\end{minipage}\hfill
\begin{minipage}{0.46\textwidth}
\centering
\includegraphics[width=1\linewidth]{./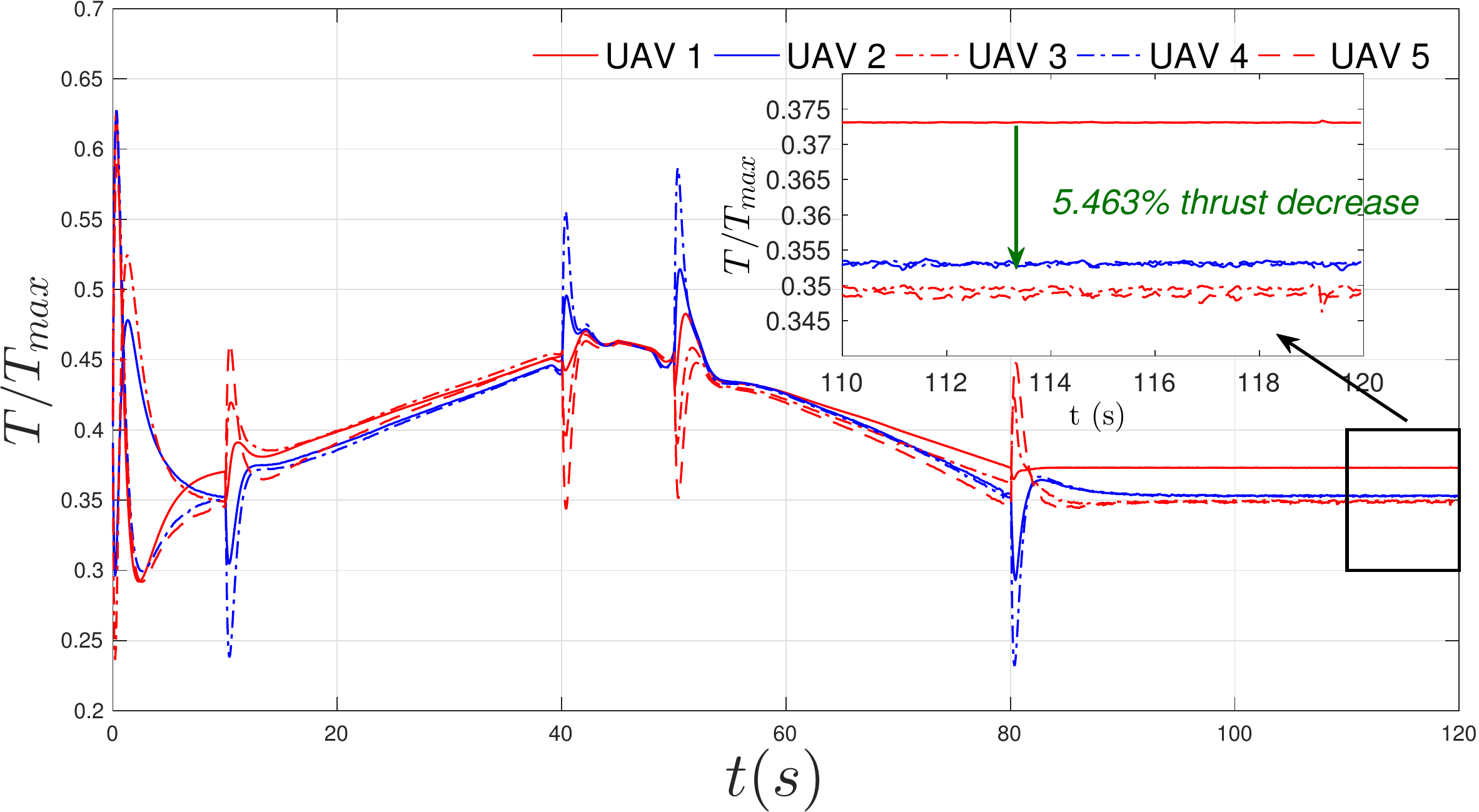}  \\ \vspace{-3mm}
\caption{Thrust inputs}
\label{Fig: VL_Thrust}
\end{minipage}\\ \vspace{-5mm}
\end{figure}
\begin{figure}[tbph]
\centering
\begin{minipage}{0.46\textwidth}
\centering
 \includegraphics[width=1\textwidth]{./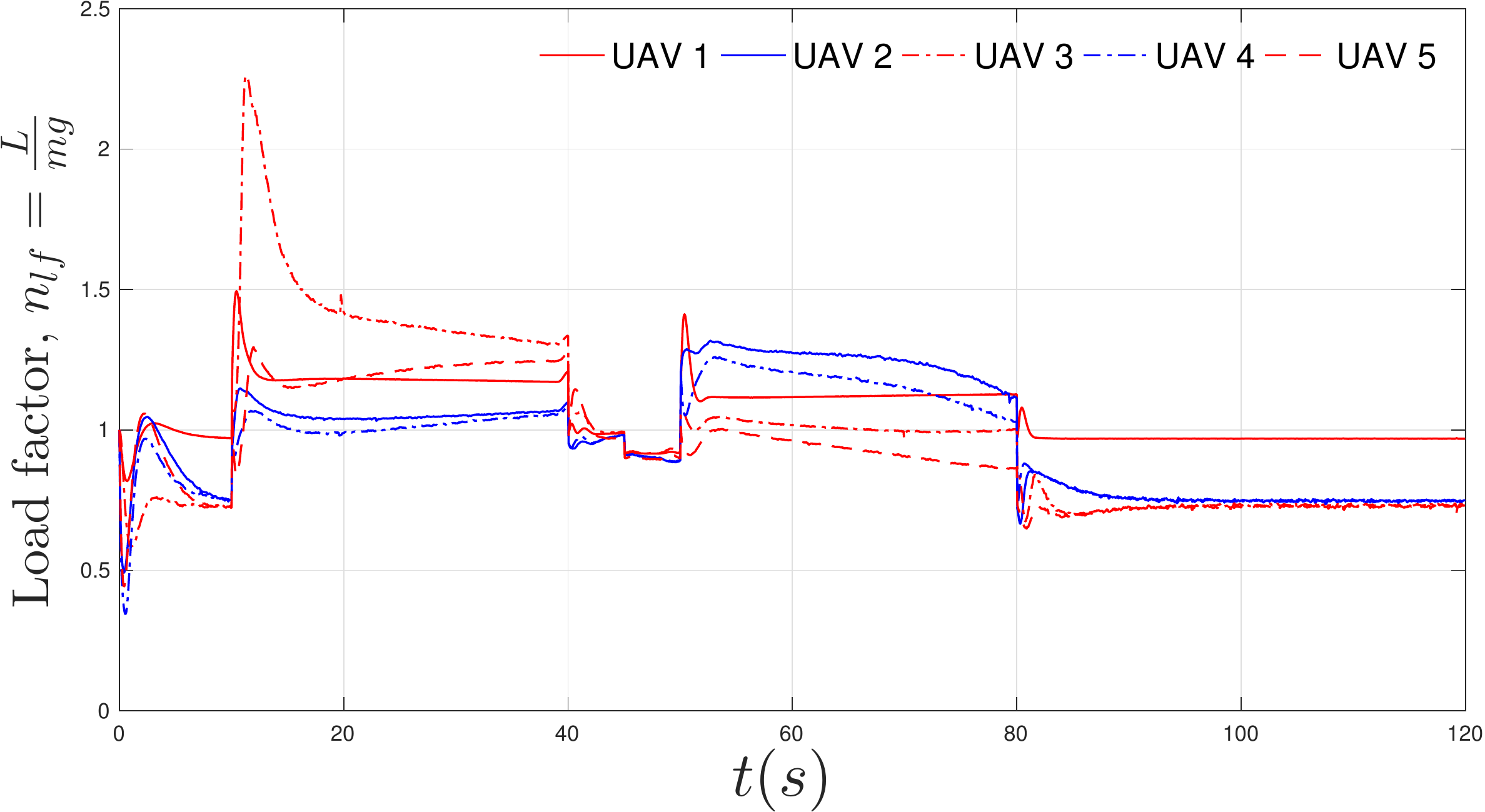} \\ \vspace{-3mm}
\caption{Load factors}
\label{Fig: VL_Lift}
\end{minipage}\hfill
\begin{minipage}{0.46\textwidth}
\centering
 \includegraphics[width=1\textwidth]{./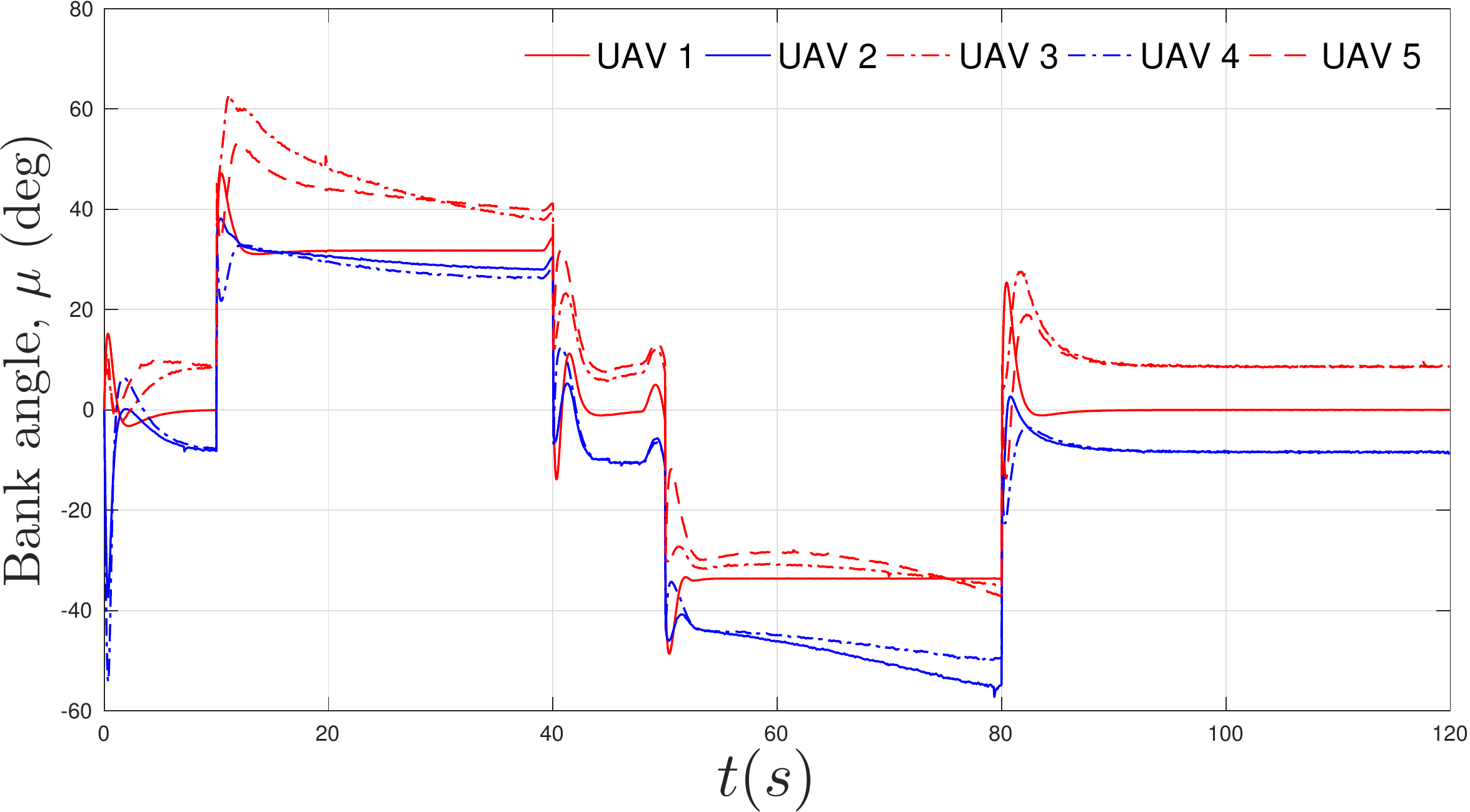} \\ \vspace{-3mm}
\caption{Bank angle inputs}
\label{Fig: VL_Mu}
\end{minipage}\\ \vspace{-5mm}
\end{figure}

\section{Conclusions} \label{Sec: Conclu}
The paper investigated the robust cooperative control for fixed-wing UAVs in close formation flight to save energy. A novel cooperative close formation controller was proposed by combining the virtual structure method and the virtual leader-based control method. The virtual structure method was introduced to describe the desired trajectories of UAVs in close formation flight. The desired trajectories were passed through a group of cooperative filters to produce the motions of virtual leaders. UAVs in close formation flight were required to track the motions of their designated virtual leaders. The model uncertainties induced by trailing vortices of other UAV were estimated and compensated by using uncertainty and disturbance observers. The analysis has shown that the states of the virtual leaders will exponentially converge to the desired formation trajectories, while the proposed robust cooperative close formation controller could at least ensure bounded close formation tracking performance. Numerical simulations on close formation flight of five UAVs were performed to show the efficiency of the proposed design.

\ifCLASSOPTIONcaptionsoff
  \newpage
\fi

\bibliographystyle{IEEETran} 
\bibliography{MyRef.bib}

\end{document}